\documentclass[12pt,draftclsnofoot,onecolumn]{IEEEtran}
\usepackage{amsmath,amssymb,amsthm}
\usepackage{cite}
\usepackage{bm}
\usepackage{microtype}
\usepackage[hidelinks]{hyperref}

\allowdisplaybreaks

\DeclareMathOperator*{\argmax}{arg\,max}
\DeclareMathOperator*{\argmin}{arg\,min}

\newcommand{\R}{\mathbb{R}}
\newcommand{\Pp}{\mathbb{P}}
\newcommand{\Sph}{\mathbb{S}}
\newcommand{\Normal}{\mathcal{N}}
\newcommand{\cA}{\mathcal{A}}
\newcommand{\cC}{\mathcal{C}}
\newcommand{\cE}{\mathcal{E}}
\newcommand{\cG}{\mathcal{G}}
\newcommand{\cI}{\mathcal{I}}
\newcommand{\cM}{\mathcal{M}}
\newcommand{\cP}{\mathcal{P}}
\newcommand{\cQ}{\mathcal{Q}}
\newcommand{\dd}{\,\mathrm{d}}
\newcommand{\Pe}{P_{\mathrm e}}
\newcommand{\Qfun}{Q}
\newcommand{\bc}{\bm c}
\newcommand{\be}{\bm e}
\newcommand{\bA}{\mathbf A}
\newcommand{\bG}{\mathbf G}

\newcommand{\bx}{\bm x}
\newcommand{\by}{\bm y}
\newcommand{\bY}{\bm Y}
\newcommand{\bv}{\bm v}
\newcommand{\ba}{\bm a}
\newcommand{\bI}{\mathbf I}
\newcommand{\bZ}{\bm Z}
\newcommand{\bu}{\bm u}
\newcommand{\bzero}{\bm 0}
\newcommand{\T}{\mathsf T}
\newcommand{\eps}{\varepsilon}

\newtheorem{theorem}{Theorem}
\newtheorem{proposition}{Proposition}
\newtheorem{lemma}{Lemma}
\newtheorem{corollary}{Corollary}
\newtheorem{conjecture}{Conjecture}
\theoremstyle{definition}

\title{Beyond Minimum Distance: The Optimal Leading Coefficient in the High-SNR Error-Probability Expansion for AWGN Spherical Codes}

\author{Nikola Zlatanov
\thanks{N.~Zlatanov is with Innopolis University, Innopolis, 420500, Russia (e-mail: n.zlatanov@innopolis.ru).}
}

\begin{document}
\maketitle

\begin{abstract}
 Packing-optimal \((M,n)\) spherical codes attain the largest achievable minimum distance and therefore achieve the optimal exponential decay rate of the error probability at high signal-to-noise ratio (SNR). Among packing-optimal codebooks held fixed as the SNR grows, the smallest leading coefficient is
\(K^{\mathrm{fix}}_{M,n}\), equal to the smallest number of ordered closest pairs of codewords among packing-optimal codebooks.  We show that SNR-wise codebook optimization
can achieve a smaller leading coefficient, and thereby a smaller error
probability, than any packing-optimal codebook held fixed.
Let \(\Pe^{\ast}(M,n;\gamma)\) be the SNR-wise minimum exact
maximum-likelihood (ML) error probability and
let \(\Pe^{\mathrm{fix}}(M,n;\gamma)\) be the minimum exact error over all
packing-optimal codebooks.  We prove that
\(\Pe^{\ast}=B_\gamma^{\ast}(K^{\ast}_{M,n}+o(1))\) and
\(\Pe^{\mathrm{fix}}
=B_\gamma^{\ast}(K^{\mathrm{fix}}_{M,n}+o(1))\),  and that \(K^{\ast}_{M,n}\leq K^{\mathrm{fix}}_{M,n}\), where
\(B_\gamma^{\ast}\) is the common Gaussian-tail factor.   Hence, whenever
\(K^{\ast}_{M,n}<K^{\mathrm{fix}}_{M,n}\), the SNR-wise minimum exact
error is strictly smaller than the best packing-optimal
benchmark at all sufficiently high SNRs, i.e., \(\Pe^{\ast}<\Pe^{\mathrm{fix}}\).
We also show that the minimum of the Gaussian soft-packing energy converges to \(K^{\ast}_{M,n}\).

The orthoplex-bound construction reveals how
\(K^{\ast}_{M,n}<K^{\mathrm{fix}}_{M,n}\) can occur.  The construction
splits the pairs that are closest in the limiting packing into two
classes.  At finite SNR, the pairs in the first class are deliberately made slightly closer than the optimal packing
distance. This change is too small to affect either the optimal asymptotic
error exponent or the selected pairs' unit contributions to the constructed
family's leading coefficient. However, this creates enough geometric freedom to move the other pairs---those in the second class---farther apart
on a larger but still vanishing scale, so that
their contributions to the constructed family's leading coefficient
vanish.  By contrast, every
closest pair of a packing-optimal codebook held fixed contributes one to
that codebook's fixed-codebook leading coefficient.

We apply the developed theoretical framework to the simplex range,
\(2\leq M\leq n+1\), and the orthoplex-bound range, \(M=n+k\),
\(2\leq k\leq n\).  We obtain
\(K^{\ast}_{M,n}=K^{\mathrm{fix}}_{M,n}=M(M-1)\) for
\(2\leq M\leq n+1\).  For \(M=n+k\), \(2\leq k\leq n\), we prove
\(K^{\mathrm{fix}}_{n+k,n}=4n(k-1)\) and
\(K^{\ast}_{n+k,n}\leq4k(k-1)\).  For \(n\geq3\) and \(M=n+2\),
\(K^{\ast}_{n+2,n}=8<4n=K^{\mathrm{fix}}_{n+2,n}\), so the optimized
error is asymptotically smaller by the factor \(n/2\) than the best
packing-optimal error.  At the terminal endpoint \(k=n\),
\(K^{\ast}_{2n,n}=K^{\mathrm{fix}}_{2n,n}=4n(n-1)\).  For
\(3\leq k\leq n-1\), we conjecture that
\(K^{\ast}_{n+k,n}=4k(k-1)\).
\end{abstract}

\begin{IEEEkeywords}
AWGN channels, spherical codes, maximum-likelihood decoding, message error probability, high-SNR asymptotics, spherical packing, codebook optimization, leading coefficients.
\end{IEEEkeywords}

\section{Introduction}
\label{sec:introduction}

\IEEEPARstart{A}{t} high average signal-to-noise ratio (SNR) per real
channel use, finite spherical-code design raises two nested questions.
First, how rapidly can the minimum achievable average message error
probability under maximum-likelihood (ML) decoding decay?  Second, after
removing that decay scale, how small can the
remaining leading coefficient be?  Optimal spherical packing, through the
largest achievable minimum distance, answers the first question.  It need
not answer the second.  Codebook families can have the same
packing-determined exponential decay yet differ by a nonvanishing
multiplicative factor in error probability.

For a packing-optimal codebook held fixed as the SNR grows, classical
high-SNR theory shows that its number of ordered closest pairs of
codewords determines the leading coefficient in its high-SNR
error-probability expansion
\cite{Proakis2001,AlvaradoAgrellBrannstrom2018}.  Dividing this count by
\(M\) gives the codebook's average kissing number.  Hence, among
packing-optimal codebooks that are each held fixed as the SNR grows,
minimizing the leading coefficient is equivalent to choosing a codebook
with the smallest ordered closest-pair count, or, equivalently, the
smallest average kissing number.  The optimization studied here is
different: the codeword locations may be reoptimized at every SNR.
Because pairwise Gaussian error probabilities depend exponentially on
the SNR and the pairwise squared distances, vanishing changes in
codeword distances can still change the error probability by a
nonvanishing factor.  This allows an SNR-wise optimized codebook family
to approach the set of optimal packings while having a smaller leading
coefficient than every packing-optimal codebook held fixed.  Its error
probability is then smaller than the best fixed packing-optimal benchmark
at all sufficiently high SNRs.  Consequently, optimizing the codebook at
each SNR and then taking the high-SNR limit can yield a smaller leading
coefficient than first taking the high-SNR limit for each fixed codebook
and then optimizing over the packing-optimal class.

The explicit SNR-dependent construction developed in this paper for the
orthoplex-bound range \(M=n+k\), \(2\leq k<n\), makes this phenomenon
concrete.  The construction splits the codeword pairs that are closest in
the limiting packing into two classes.  At finite SNR, the pairs in the
first class are deliberately made slightly closer than the optimal packing
distance.
This deterioration is too small to change either the optimal asymptotic
error exponent or the selected pairs' unit contributions to the constructed
family's leading coefficient.  However, it creates enough geometric
freedom to move the other pairs---those in the second class---farther apart
on a larger but still vanishing scale, so that
their contributions to the constructed family's leading coefficient
vanish.  The constructed family retains the packing-optimal exponent and
approaches an optimal packing as the SNR grows, yet lies outside the set of
optimal packings at every sufficiently large finite SNR.  Its leading
coefficient is smaller than that of every packing-optimal codebook held
fixed as the SNR grows.   
This improvement cannot be obtained by holding any one codebook fixed.  The
smaller coefficient is
obtained only because the codebook continues to change with SNR.

Let \(\Pe^{\ast}(M,n;\gamma)\) denote the exact
average ML message error probability minimized over all \((M,n)\) spherical
codebooks at SNR
\(\gamma\).  For every fixed finite pair \((M,n)\) considered
here, we prove the general expansion
\begin{equation*}
  \Pe^{\ast}(M,n;\gamma)
  =B_\gamma^{\ast}\bigl(K^{\ast}_{M,n}+o(1)\bigr),
\end{equation*}
where \(B_\gamma^{\ast}\) contains the packing-determined exponential
decay and Gaussian prefactor, and \(K^{\ast}_{M,n}\) is a well-defined
optimal leading coefficient.

For comparison, let \(\Pe^{\mathrm{fix}}(M,n;\gamma)\) denote the exact
average ML message error probability minimized over all spherical
codebooks attaining the largest possible minimum distance.  Then
\begin{equation*}
  \Pe^{\mathrm{fix}}(M,n;\gamma)
  =B_\gamma^{\ast}\bigl(K^{\mathrm{fix}}_{M,n}+o(1)\bigr),
\end{equation*}
where \(K^{\mathrm{fix}}_{M,n}\) is the smallest number of ordered
closest-pair indices among packing-optimal codebooks and \(K^{\mathrm{fix}}_{M,n}/M\) is the
smallest average kissing number.  We show that
\(K^{\ast}_{M,n}\leq K^{\mathrm{fix}}_{M,n}\).  If
\(K^{\ast}_{M,n}<K^{\mathrm{fix}}_{M,n}\), then
\(\Pe^{\ast}(M,n;\gamma)<\Pe^{\mathrm{fix}}(M,n;\gamma)\) for all
sufficiently large \(\gamma\), and
\begin{equation*}
  \frac{\Pe^{\ast}(M,n;\gamma)}
       {\Pe^{\mathrm{fix}}(M,n;\gamma)}
  \longrightarrow
  \frac{K^{\ast}_{M,n}}{K^{\mathrm{fix}}_{M,n}}<1.
\end{equation*}
Thus the improvement is a persistent multiplicative reduction in the
minimum exact error probability.

We call an SNR-dependent family \emph{packing-tail competitive} if its
normalized error remains bounded on the packing-tail scale.  We show that
every such family approaches the set of optimal packings.  If such a family
has a leading coefficient, then, along any convergent
codebook subsequence, only pairs that are closest in the packing-optimal
limiting codebook can contribute to that coefficient.  The contribution of
each such pair depends on the rate at which its correlation approaches the
optimal packing value and may be zero, fractional, one, or greater than one.
For a family attaining the optimal
leading coefficient, these pairwise contributions sum to
\(K^{\ast}_{M,n}\).  Thus, for a family attaining the optimal leading
coefficient, the packing-optimal limiting
codebook identifies which pairs can contribute to \(K^{\ast}_{M,n}\),
whereas the SNR-dependent codebook optimization determines how much each
such pair contributes to \(K^{\ast}_{M,n}\).

To analyze this geometry, we introduce a Gaussian soft-packing energy
that continuously accounts for all pairwise confusions.  We prove that
it uniformly represents the normalized exact ML error for the
SNR-dependent families relevant to optimized high-SNR design, and that
its minimum converges to \(K^{\ast}_{M,n}\).  Consequently, every
family whose Gaussian soft-packing energy is within an additive \(o(1)\)
of its global minimum is asymptotically ML-optimal.

\subsection{Contributions}
\label{subsec:contributions}

The contributions are as follows.  The first three contributions apply to every fixed
finite \((M,n)\) (with \(M\geq2\)) for which the best packing has nonzero
minimum distance.  The last two apply the theory throughout the simplex
and orthoplex-bound ranges.
\begin{enumerate}
\item \emph{Optimal leading-coefficient characterization and fixed-codebook benchmark.}
We introduce a Gaussian soft-packing energy and prove that it uniformly
represents the normalized exact ML error on the bounded Gaussian
soft-packing energy classes
relevant to packing-tail-competitive designs.  Its minimum converges to a
well-defined optimal leading coefficient \(K^{\ast}_{M,n}\), yielding the
high-SNR expansion of the minimum average error probability,
\(\Pe^{\ast}(M,n;\gamma)
  =B_\gamma^{\ast}(K^{\ast}_{M,n}+o(1))\).  We also prove
\(\Pe^{\mathrm{fix}}(M,n;\gamma)
  =B_\gamma^{\ast}(K^{\mathrm{fix}}_{M,n}+o(1))\) for the best
packing-optimal benchmark.  We show that
\(K^{\ast}_{M,n}\leq K^{\mathrm{fix}}_{M,n}\).  Hence, when
\(K^{\ast}_{M,n}<K^{\mathrm{fix}}_{M,n}\),
\(\Pe^{\ast}(M,n;\gamma)<\Pe^{\mathrm{fix}}(M,n;\gamma)\) for all
sufficiently large \(\gamma\), and their ratio tends to
\(K^{\ast}_{M,n}/K^{\mathrm{fix}}_{M,n}\) as the SNR grows.

\item \emph{Localization and SNR-dependent closest-pair contributions of
packing-tail-competitive families.}
We prove that every family satisfying
\(\Pe(\cC_\gamma;\gamma)/B_\gamma^{\ast}=O(1)\) approaches the
packing-optimal set.  Along convergent subsequences, only pairs that are
closest in the packing-optimal limiting codebook can have nonvanishing
normalized contributions, and these contributions are determined by their
scaled correlation offsets.  Whenever the family's leading coefficient
exists, these contributions sum to that coefficient; for a family attaining
the optimal leading coefficient, they sum to \(K^{\ast}_{M,n}\).  This
explains why neither an optimal packing nor its
ordinary closest-pair count alone determines the optimal leading
coefficient \(K^{\ast}_{M,n}\).  Minimizing over all packing-optimal limiting codebooks and
all jointly feasible scaled closest-pair offsets yields an exact
variational characterization of \(K^{\ast}_{M,n}\).

\item \emph{Attainment and matching-bound certification.}
We prove that codebook families whose Gaussian soft-packing energy is
within an additive \(o(1)\) of its global minimum are asymptotically
ML-optimal.  We then give a matching-bounds criterion that certifies both
a proposed optimal leading coefficient and a full SNR-indexed
construction attaining the optimal leading coefficient.

\item \emph{Simplex code.}
For every \(2\leq M\leq n+1\), we prove that an embedded regular
\((M-1)\)-simplex minimizes the soft-packing objective exactly at every
positive SNR and that
\(K^{\ast}_{M,n}=K^{\mathrm{fix}}_{M,n}=M(M-1)\).  The constant simplex
family is therefore asymptotically ML-optimal under SNR-wise reoptimization.

\item \emph{Orthoplex-bound range.}
For the orthoplex-bound range \(M=n+k\),
\(2\leq k\leq n\), we determine
\(K^{\mathrm{fix}}_{n+k,n}=4n(k-1)\).  For every \(2\leq k<n\), an
explicit SNR-dependent family has leading coefficient \(4k(k-1)\),
which proves \(K^{\ast}_{n+k,n}\leq 4k(k-1)\).  Hence, the optimal leading
coefficient, \(K^{\ast}_{n+k,n}\), is strictly smaller than the fixed benchmark, \(K^{\mathrm{fix}}_{n+k,n}\), and
\begin{equation*}
  \lim_{\gamma\to\infty}
  \frac{\Pe^{\ast}(n+k,n;\gamma)}
       {\Pe^{\mathrm{fix}}(n+k,n;\gamma)}
  =
  \frac{K^{\ast}_{n+k,n}}{K^{\mathrm{fix}}_{n+k,n}}
  \leq \frac{k}{n}<1.
\end{equation*}
At
the left endpoint \(k=2\), a matching converse proves
\(K^{\ast}_{n+2,n}=8\), thereby showing that
\(\Pe^{\ast}(n+2,n;\gamma)\) is asymptotically smaller by the factor
\(n/2\) than \(\Pe^{\mathrm{fix}}(n+2,n;\gamma)\).  At the terminal
endpoint \(k=n\), universal optimality of
the full cross-polytope proves
\(K^{\ast}_{2n,n}=K^{\mathrm{fix}}_{2n,n}=4n(n-1)\).  For
\(3\leq k\leq n-1\), we conjecture that
\(K^{\ast}_{n+k,n}=4k(k-1)\).
\end{enumerate}

\subsection{Relation to Existing Theory}
\label{subsec:related-work}

For a fixed finite codebook over the AWGN channel, the pairwise error is
a Gaussian tail, and the closest-pair expansion is asymptotically
equivalent to the exact ML message error probability
\cite{Proakis2001,AlvaradoAgrellBrannstrom2018}.  Because these results
hold the codebook fixed, they do not justify minimizing the resulting
formula over a different codebook at every SNR.  At the exponential
scale, equal-energy codebook design is the classical spherical-packing
problem
\cite{Shannon1959,Rankin1955,ConwaySloane1999,EricsonZinoviev2001}, for
which linear- and semidefinite-programming methods provide powerful
general bounds
\cite{DelsarteGoethalsSeidel1977,Levenshtein1998,
BachocVallentin2008,BachocVallentin2009}.

The simplex application uses the simplex bound \cite{Rankin1955} and
the universal optimality of embedded regular simplices
\cite{CohnKumar2007}.  The orthoplex-bound application uses the orthoplex
bound \cite{Rankin1955} and its equality-case classification
\cite[Thm.~3]{Kuperberg2007}; an equivalent characterization is given in
\cite[Thm.~2.2]{KingMixonParshallWells2026}.  It also uses the universal
optimality of the full cross-polytope \cite{CohnKumar2007}.  The recent result in
\cite[Cors.~2.5--2.6]{Mulgund2026WeakSimplex} proves the stronger finite-SNR
coding statement that, for \(M\) equiprobable equal-energy signals over
the real AWGN channel, an embedded regular \((M-1)\)-simplex maximizes
the average probability of correct ML decoding at every positive SNR
whenever \(n\geq M-1\).
That result also characterizes the equality case: the embedded regular
simplex is the unique maximizer up to an orthogonal transformation and
relabeling.

A closely related large-parameter phenomenon occurs for Riesz energies.
In \cite{BondarenkoHardinSaff2014}, the authors proved that
cluster points of fixed-cardinality
Riesz-energy minimizers as the exponent tends to infinity are best
packings.  For five points on \(\Sph^2\), they identified, up to isometry,
the unique limiting square-pyramid configuration and proved that the
normalized minimum Riesz energy converges to \(8\); their upper bound
uses an exponent-dependent square-pyramid deformation.
This is a close antecedent to the \((M,n)=(5,3)\) specialization of
the SNR-dependent construction in our work.
Their Riesz objective, however, neither gives the Gaussian construction
for general \((n+k,n)\) nor transfers its asymptotics to exact AWGN ML error.

In a different positive-temperature cross-entropy problem motivated by
neural collapse, \cite{AlcalaEtAl2026NeuralCollapse} show
that, among packing-optimal
codes in the orthoplex regime, sufficiently low temperature selects a
``low-entropy'' geometry consisting of one regular simplex and an
orthogonal cross-polytope.
For \(M=n+k\), this is the same simplex-plus-antipodal-pairs geometry that
attains the best fixed-codebook leading coefficient here.  Their feasible class
and objective are different, and their result does not address
SNR-dependent paths outside the packing-optimal set.

The preceding works provide geometric, coding, and energy results that
serve as technical inputs or conceptual antecedents for particular
simplex and orthoplex-bound applications considered here.  The theory
developed in this paper is broader: it is not restricted to these
special geometries and characterizes the optimal high-SNR leading coefficient
when a finite spherical codebook may be reoptimized at each SNR.  Its
new ingredients are the uniform transfer from the soft-packing energy
to exact ML error, the existence and characterization of the optimal
leading coefficient, and the localization and closest-pair contribution
results for competitive SNR-dependent families.  The simplex and
orthoplex-bound results are concrete applications of this general framework.

\subsection{Organization}
\label{subsec:organization}

Section~\ref{sec:signal-packing-fixed} formulates the exact design problem
and its fixed-codebook benchmark.
Section~\ref{sec:soft-packing-characterization} develops the soft-packing
characterization of the optimal leading coefficient and establishes
attainment by near-minimizers of the Gaussian soft-packing energy.
Section~\ref{sec:geometric-reduction} establishes localization, the
SNR-dependent closest-pair structure of competitive families, and the
global variational characterization by jointly feasible scaled offsets.
Section~\ref{sec:optimization-attainment-extraction} develops the
matching-bounds criterion used in the applications.
Sections~\ref{sec:simplex-regime} and \ref{sec:orthoplex-regime} apply
this theory to the simplex and orthoplex-bound ranges, respectively.
Section~\ref{sec:conclusion} concludes the paper, and the
appendices provide the technical proofs.

\section{Signal Model and High-SNR Codebook Design Problem}
\label{sec:signal-packing-fixed}

This section introduces the exact AWGN error probability, the
spherical-packing quantities that determine its optimal exponential
decay, and the classical high-SNR expansion for a fixed codebook.  It
then defines the exact-error benchmark obtained by restricting the design
to packing-optimal codebooks and formulates the unrestricted optimization
performed separately at each SNR.

\subsection{Signal Model and Exact Error Probability}
\label{subsec:awgn-model}

Fix integers \(M\ge2\) and \(n\ge1\), write
\([M]=\{1,\ldots,M\}\), and let
\(\Sph^{n-1}=\{\bx\in\R^n:\|\bx\|=1\}\).
A labeled spherical codebook is
\begin{equation*}
  \cC=(\bc_1,\ldots,\bc_M)\in\cM,
  \qquad
  \cM=(\Sph^{n-1})^M,
\end{equation*}
where every codeword has unit norm.  The product space \(\cM\) is
compact.

Let \(W\) be uniform on \([M]\).  The normalized real AWGN channel is
\begin{equation}
  \bY=\sqrt{n\gamma}\,\bc_W+\bZ,
  \qquad
  \bZ\sim\Normal(\bzero,\bI_n).
  \label{eq:awgn-channel}
\end{equation}
The noise variance is one per real channel use, and the transmitted
vector has energy \(n\gamma\); hence \(\gamma\) is the average SNR per
real channel use.  Unless stated otherwise, every high-SNR limit keeps
\((M,n)\) fixed and lets \(\gamma\to\infty\).

For an observation \(\by\in\R^n\), the maximum-likelihood decoder is
\begin{equation}
  \widehat W(\by)\in\argmax_{1\le m\le M}\by^\T\bc_m.
  \label{eq:ml-decoder}
\end{equation}
Because all codewords have the same norm, maximizing correlation is
equivalent to minimizing Euclidean distance.  We use an arbitrary fixed
measurable rule to break ties.  With equiprobable messages, the average
error is independent of that choice, and ties have probability zero when
no two codewords coincide.

The exact average message error probability is
\begin{equation}
  \Pe(\cC;\gamma)
  =\frac1M\sum_{m=1}^M
  \Pp\{\widehat W(\bY)\ne m\mid W=m\}.
  \label{eq:Pe-def}
\end{equation}

\subsection{Pairwise Geometry and the Packing Benchmark}
\label{subsec:pairwise-packing}

Let
\begin{equation*}
  \cE_M=\{(m,\ell):m,\ell\in[M],\ m\ne\ell\}
\end{equation*}
be the ordered-pair index set.  For \(e=(m,\ell)\in\cE_M\), define
the pairwise correlation
\begin{equation}
  \rho_e(\cC)=\bc_m^\T\bc_\ell.
  \label{eq:pairwise-correlation-def}
\end{equation}
Because the codewords have unit norm,
\begin{equation}
  \|\bc_m-\bc_\ell\|^2=2\bigl(1-\rho_e(\cC)\bigr).
  \label{eq:correlation-distance-relation}
\end{equation}
Thus a larger correlation means a smaller Euclidean distance, and
minimizing the largest pairwise correlation is equivalent to solving the
spherical-code, or spherical-cap-packing, problem
\cite{Rankin1955,ConwaySloane1999,EricsonZinoviev2001}.

The largest correlation of a particular codebook and the smallest such
largest correlation attainable over all codebooks are, respectively,
\begin{equation}
  \rho_{\max}(\cC)=\max_{e\in\cE_M}\rho_e(\cC),
  \qquad
  \rho^{\ast}_{M,n}=\min_{\cC\in\cM}\rho_{\max}(\cC).
  \label{eq:rho-max-rho-star}
\end{equation}
The minimum exists because \(\cM\) is compact and \(\rho_{\max}\) is
continuous.
Thus \(\rho_{\max}(\cC)\) is the worst correlation within the particular
codebook \(\cC\): it identifies its closest pair or pairs.  By contrast,
\(\rho^{\ast}_{M,n}\) is the best worst correlation achievable among all \((M,n)\)
spherical codebooks.  Equivalently,
\(2(1-\rho^{\ast}_{M,n})\) is the largest achievable minimum squared
distance.

The packing-optimal codebook set is
\begin{equation}
  \cP^{\ast}
  =\{\cC\in\cM:\rho_{\max}(\cC)=\rho^{\ast}_{M,n}\}.
  \label{eq:Pstar-def}
\end{equation}
Hence, \(\cP^{\ast}\) is the collection of all codebooks whose worst
correlation equals the best worst correlation \(\rho^{\ast}_{M,n}\).
It contains every solution of the spherical-packing problem, including
rotated and relabeled copies and, when they exist, geometrically
noncongruent optimal packings.

The analysis in this paper concerns spherical codes for which
\(\rho^{\ast}_{M,n}<1\).  Then every packing-optimal codebook has
pairwise-distinct codewords, and its optimal minimum squared distance is
\(2(1-\rho^{\ast}_{M,n})>0\).  The value
\(\rho^{\ast}_{M,n}\) determines the best high-SNR error exponent.

An unordered codeword pair \(\{m,\ell\}\) is a \emph{closest pair} of
\(\cC\) if it attains the minimum intercodeword distance, or equivalently
if \(\rho_{(m,\ell)}(\cC)=\rho_{\max}(\cC)\).  Collect the corresponding
ordered closest-pair indices in
\begin{equation}
  \cI_{\max}(\cC)
  =\{e\in\cE_M:\rho_e(\cC)=\rho_{\max}(\cC)\}.
  \label{eq:Imax-def}
\end{equation}
For each unordered closest pair, this set contains both directed indices
\((m,\ell)\) and \((\ell,m)\).  If
\(\cC\in\cP^{\ast}\), every index in \(\cI_{\max}(\cC)\) has
correlation \(\rho^{\ast}_{M,n}\).

\subsection{Fixed-Codebook High-SNR Asymptotics}
\label{subsec:fixed-codebook-asymptotics}

Assume a codebook \(\cC\) is \emph{pairwise distinct}, i.e.,
\(\rho_{\max}(\cC)<1\) holds.  For such a codebook and distinct \(m,\ell\),
conditioned on \(W=m\), let
\begin{equation*}
  \cA_{m\ell}
  =
  \{(\sqrt{n\gamma}\,\bc_m+\bZ)^\T\bc_\ell
  \ge(\sqrt{n\gamma}\,\bc_m+\bZ)^\T\bc_m\}
\end{equation*}
be the event that competitor \(\ell\) scores at least as highly as the
transmitted codeword.  For \(e=(m,\ell)\), write
\(\cA_e:=\cA_{m\ell}\).  The standard pairwise Gaussian calculation gives
\begin{equation}
  \Pp(\cA_{m\ell})
  =
  \Qfun\!\left(
  \sqrt{\frac{n\gamma[1-\rho_{(m,\ell)}(\cC)]}{2}}
  \right),
  \label{eq:pairwise-tail}
\end{equation}
where
\(\Qfun(x)=(2\pi)^{-1/2}\int_x^\infty\exp\{-t^2/2\}\dd t\)
\cite{Proakis2001}.  Consequently,
\begin{equation}
  \Pe(\cC;\gamma)
  =\frac1M\sum_{m=1}^M
  \Pp\!\left(\bigcup_{\ell\ne m}\cA_{m\ell}\right).
  \label{eq:exact-error-event-union}
\end{equation}

For \(\rho<1\), define the single-ordered-pair tail scale
\begin{equation}
  b_\gamma(\rho)
  =
  \frac{\exp\{-n\gamma(1-\rho)/4\}}
  {M\sqrt{n\gamma}\sqrt{\pi(1-\rho)}}.
  \label{eq:pairwise-tail-scale}
\end{equation}
The standard fixed-codebook high-SNR result then takes the following
form.

\begin{proposition}[Fixed-codebook high-SNR expansion]
\label{prop:fixed-codebook-asymptotics}
Let \(\cC\) be fixed with pairwise-distinct codewords, equivalently
\(\rho_{\max}(\cC)<1\).  Then
\begin{equation}
\begin{aligned}
  \Pe(\cC;\gamma)
  &=\frac{|\cI_{\max}(\cC)|}{M}
  \Qfun\!\left(
  \sqrt{\frac{n\gamma[1-\rho_{\max}(\cC)]}{2}}
  \right)\bigl(1+o_{\cC}(1)\bigr)\\
  &=b_\gamma(\rho_{\max}(\cC))
  \bigl(|\cI_{\max}(\cC)|+o_{\cC}(1)\bigr).
\end{aligned}
  \label{eq:fixed-codebook-leading-expansion}
\end{equation}
Consequently,
\begin{equation}
  \lim_{\gamma\to\infty}-\gamma^{-1}\log\Pe(\cC;\gamma)
  =\frac{n[1-\rho_{\max}(\cC)]}{4}.
  \label{eq:fixed-codebook-error-exponent}
\end{equation}
\end{proposition}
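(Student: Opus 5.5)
The plan is to sandwich \(\Pe(\cC;\gamma)\) between a union (upper) bound and a Bonferroni (lower) bound. Both bounds will agree to leading order with the closest-pair sum. Write \(\rho^{\max}=\rho_{\max}(\cC)\) and \(x_\gamma=\sqrt{n\gamma(1-\rho^{\max})/2}\).

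\textbf{Upper bound.} From \eqref{eq:exact-error-event-union} and \eqref{eq:pairwise-tail},
\begin{equation*}
  \Pe(\cC;\gamma)\le\frac1M\sum_{e\in\cE_M}\Qfun\!\left(\sqrt{\frac{n\gamma[1-\rho_e(\cC)]}{2}}\right).
\end{equation*}
Split the sum into \(e\in\cI_{\max}(\cC)\), which gives exactly \(|\cI_{\max}(\cC)|\,\Qfun(x_\gamma)/M\), and the remaining pairs. Since \(\cC\) is fixed and finite, there is a gap \(\delta>0\) such that \(\rho_e\le\rho^{\max}-\delta\) for every \(e\notin\cI_{\max}\). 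The standard bounds
\begin{equation*}
  \frac{x}{1+x^2}\,\frac{e^{-x^2/2}}{\sqrt{2\pi}}\le\Qfun(x)\le\frac{e^{-x^2/2}}{x\sqrt{2\pi}}
\end{equation*}
then show that each such term is \(O(e^{-n\gamma\delta/4})\,\Qfun(x_\gamma)\), hence \(o(\Qfun(x_\gamma))\).

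\textbf{Lower bound.} For each \(m\), apply Bonferroni:
\begin{equation*}
  \Pp\Bigl(\bigcup_{\ell\ne m}\cA_{m\ell}\Bigr)\ge\sum_{\ell:(m,\ell)\in\cI_{\max}}\Pp(\cA_{m\ell})-\sum_{\ell<\ell'}\Pp(\cA_{m\ell}\cap\cA_{m\ell'}).
\end{equation*}
The main step is to show that every pairwise intersection is \(o(\Qfun(x_\gamma))\). Write \(\cA_{m\ell}=\{\bZ^\T(\bc_\ell-\bc_m)\ge \sqrt{n\gamma}(1-\rho_{m\ell})\}\). This is a half-space at distance \(d_\ell=\sqrt{n\gamma}\,\|\bc_m-\bc_\ell\|/2\) from the origin along the unit direction \(\bu_\ell=(\bc_\ell-\bc_m)/\|\bc_\ell-\bc_m\|\). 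Because the codewords are distinct, \(\bc_\ell\ne\bc_{\ell'}\), so \(\bu_\ell\ne\bu_{\ell'}\).

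\emph{Case 1: \(\bu_\ell,\bu_{\ell'}\) linearly independent.} The intersection of the two half-spaces is a convex set whose closest point to the origin lies at distance \(D\) strictly greater than \(\min(d_\ell,d_{\ell'})\). The reason is that the nearest point of one half-space lies strictly outside the other, since the directions differ. Taking the nearest point \(\bv\) of the intersection, a rotated coordinate yields the bound \(\Pp(\bZ\in H)\le\Pp(\bZ^\T \bv/\|\bv\|\ge D)=\Qfun(D)\). The squared distance strictly exceeds \(d_{\min}^2\) by a constant multiple of \(\gamma\), so this is exponentially smaller than \(\Qfun(x_\gamma)\), because \(x_\gamma\) equals the smallest \(d_\ell\) overall.

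\emph{Case 2: \(\bu_{\ell'}=-\bu_\ell\).} The two half-spaces are disjoint, so the intersection has probability zero.

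Summing over the finitely many terms gives the lower bound \(\Pe\ge\frac{|\cI_{\max}|}{M}\Qfun(x_\gamma)(1-o(1))\). This establishes the first line of \eqref{eq:fixed-codebook-leading-expansion}.

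\textbf{Remaining claims.} The second line follows from the Mills-ratio asymptotic \(\Qfun(x)=\frac{e^{-x^2/2}}{x\sqrt{2\pi}}(1+o(1))\) with \(x=x_\gamma\). Direct substitution gives
\begin{equation*}
  \frac{\Qfun(x_\gamma)}{M}=b_\gamma(\rho^{\max})\,(1+o(1)),
\end{equation*}
and multiplying by the finite count \(|\cI_{\max}|\) turns the relative error into an additive \(o(1)\). Taking \(-\gamma^{-1}\log\) of the second line then yields \eqref{eq:fixed-codebook-error-exponent}: the prefactor \(M\sqrt{n\gamma\pi(1-\rho^{\max})}\) and the bounded factor \(|\cI_{\max}|+o(1)\) contribute only \(O(\gamma^{-1}\log\gamma)\).

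The main obstacle is Case 1 of the intersection bound, namely checking rigorously that the intersection's distance from the origin strictly exceeds the minimum, with a gap proportional to \(\sqrt{\gamma}\). Scaling makes this easy: every distance is \(\sqrt{n\gamma}\) times a fixed geometric quantity, so a strict inequality at \(\gamma=1\) persists and becomes a linear-in-\(\gamma\) gap in the exponent.
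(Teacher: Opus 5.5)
Your proof is correct and is essentially the standard union-bound/second-Bonferroni sandwich. The paper does not rederive this proposition (it cites \cite[Th.~3]{AlvaradoAgrellBrannstrom2018}), but its proof of Theorem~\ref{thm:uniform-soft-packing-transfer} runs the same argument uniformly over bounded-energy sublevels. The only difference is the double-intersection bound: the paper uses \(\{\bZ^\T(\bu_{m\ell}+\bu_{mk})\ge 2x_{\min}\}\) with the compactness constant \(\overline\varrho<1\), which gives codebook-uniform constants, whereas your nearest-point supporting half-space plus \(\sqrt{n\gamma}\)-homogeneity exploits that \(\cC\) is fixed, which is all that is needed here. One small justification is missing: \(\bc_\ell\ne\bc_{\ell'}\) gives \(\bu_\ell\ne\bu_{\ell'}\) only because a ray from \(\bc_m\) meets the unit sphere in at most one further point; distinctness alone does not imply it.
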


Here and throughout, \(\log\) denotes the natural logarithm.

The proposition follows by specializing the standard fixed-constellation
result \cite[Th.~3]{AlvaradoAgrellBrannstrom2018} to equal-prior
spherical codebooks.
It is the conventional nearest-neighbor principle stated directly as an
expansion of the exact error: after normalization by the tail scale
associated with its own worst correlation, a fixed codebook has leading
coefficient \(|\cI_{\max}(\cC)|\).
The remainder \(o_{\cC}(1)\) is pointwise in \(\cC\): it tends to zero
for each fixed codebook, but its rate may vary with the codebook.

Evaluating \eqref{eq:pairwise-tail-scale} at the optimal packing
correlation defines the single-ordered-pair packing-tail scale
\begin{equation}
  B_\gamma^{\ast}
  :=b_\gamma(\rho^{\ast}_{M,n})
  =\frac{\exp\{-n\gamma(1-\rho^{\ast}_{M,n})/4\}}
  {M\sqrt{n\gamma}\sqrt{\pi(1-\rho^{\ast}_{M,n})}}.
  \label{eq:Bstar}
\end{equation}
It contains the optimal exponential decay, the Gaussian-tail prefactor,
and the \(1/M\) message average, but no closest-pair multiplicity.  Any fixed
codebook outside \(\cP^{\ast}\) has
\(\rho_{\max}(\cC)>\rho^{\ast}_{M,n}\), and its exact error divided by
\(B_\gamma^{\ast}\) therefore grows exponentially.  Hence only
packing-optimal codebooks can serve as fixed-codebook benchmarks at the
optimal exponent, and every fixed \(\cC_0\in\cP^{\ast}\) satisfies
\begin{equation}
  \frac{\Pe(\cC_0;\gamma)}{B_\gamma^{\ast}}
  \longrightarrow|\cI_{\max}(\cC_0)|.
  \label{eq:fixed-packing-normalized-limit}
\end{equation}
The best fixed-codebook leading coefficient attainable among
packing-optimal codebooks is therefore
\begin{equation}
  K^{\mathrm{fix}}_{M,n}
  :=\min_{\cC_0\in\cP^{\ast}}|\cI_{\max}(\cC_0)|.
  \label{eq:fixed-packing-coefficient}
\end{equation}
Equivalently, among all packing-optimal codebooks, one chooses a
codebook with the fewest ordered closest-pair indices and uses that same
codebook as the SNR increases.  Moreover,
\(K^{\mathrm{fix}}_{M,n}/M\) is the
smallest average kissing number among packing-optimal codebooks.

To compare exact error probabilities, define the best packing-optimal
benchmark at SNR \(\gamma\) by
\begin{equation}
  \Pe^{\mathrm{fix}}(M,n;\gamma)
  :=\min_{\cC_0\in\cP^{\ast}}\Pe(\cC_0;\gamma).
  \label{eq:Pe-fix-def}
\end{equation}
The minimum is attained because \(\cP^{\ast}\) is compact and the exact
average ML error is continuous on this set.  We show later, cf.
Corollary~\ref{cor:packing-optimal-benchmark}, that
\begin{equation}
  \Pe^{\mathrm{fix}}(M,n;\gamma)
  =B_\gamma^{\ast}
  \bigl(K^{\mathrm{fix}}_{M,n}+o(1)\bigr).
  \label{eq:Pe-fix-asymptotic}
\end{equation}
Accordingly, \(K^{\mathrm{fix}}_{M,n}\) and
\(\Pe^{\mathrm{fix}}(M,n;\gamma)\) are the best fixed-codebook
leading-coefficient and exact-error benchmarks against which the SNR-wise
optimized quantities are compared.

\subsection{Exact High-SNR Codebook Design Problem}
\label{subsec:exact-design-problem}

For fixed \((M,n)\) and operating SNR \(\gamma\), the exact
SNR-wise optimal average error probability is
\begin{equation}
  \Pe^{\ast}(M,n;\gamma)
  =\min_{\cC\in\cM}\Pe(\cC;\gamma).
  \label{eq:Pe-star-def}
\end{equation}
Thus the minimization ranges over all labeled \(M\)-tuples of unit
vectors in \(\R^n\), and
\(\Pe^{\ast}(M,n;\gamma)\) denotes the smallest exact average ML error
probability among them.  The minimum is attained because \(\cM\) is
compact and the average ML error is continuous in the codeword tuple.

An optimizing codebook may be denoted by
\begin{equation*}
  \cC_\gamma^{\ast}
  \in\argmin_{\cC\in\cM}\Pe(\cC;\gamma).
\end{equation*}
The subscript emphasizes that the optimizing codebook may change with
SNR; there is no requirement that
\(\cC_{\gamma_1}^{\ast}=\cC_{\gamma_2}^{\ast}\).

Because \(\cP^{\ast}\subseteq\cM\), the two exact design problems in
\eqref{eq:Pe-fix-def} and \eqref{eq:Pe-star-def} obey
\begin{equation}
  \Pe^{\ast}(M,n;\gamma)
  \leq\Pe^{\mathrm{fix}}(M,n;\gamma)
  \qquad\text{for every \(\gamma\)}.
  \label{eq:Pe-star-le-Pe-fix}
\end{equation}

Every fixed \(\cC_0\in\cP^{\ast}\) achieves the best error exponent.
Among such fixed codebooks, however, only a minimizer in
\eqref{eq:fixed-packing-coefficient} is guaranteed to have the smallest
fixed-codebook leading coefficient.  Determining the unrestricted exact
finite-SNR minimum \(\Pe^{\ast}(M,n;\gamma)\) and its optimal leading coefficient
requires optimization over the full codebook manifold.

The central problem of this paper is therefore to determine whether the limit
\begin{equation}
  \lim_{\gamma\to\infty}
  \frac{\Pe^{\ast}(M,n;\gamma)}{B_\gamma^{\ast}}
  \label{eq:central-coefficient-problem}
\end{equation}
exists and is finite and, if so, to characterize it.  Once existence is
proved, the limit is
denoted by \(K^{\ast}_{M,n}\) and is called the \emph{optimal leading
coefficient}.  
For any SNR-dependent family, if
\(\Pe(\cC_\gamma;\gamma)/B_\gamma^{\ast}\) converges to a finite limit,
we call that limit the \emph{leading coefficient of the family}.  For a
general SNR-dependent family on which no asymptotic optimality condition
is imposed, this finite limit need not exist: the normalized error may be
unbounded, or it may remain bounded while approaching different finite
values along different SNR subsequences.  This possible nonexistence
concerns the leading coefficient of such a family and does not arise for
the optimal leading coefficient.  As shown later, cf.
Theorem~\ref{thm:global-coefficient},
\(\Pe^{\ast}(M,n;\gamma)/B_\gamma^{\ast}\) converges to a finite limit
and hence the optimal leading coefficient \(K^{\ast}_{M,n}\) exists.

  An
SNR-dependent family \(\{\cC_\gamma\}_\gamma\) attains the optimal leading
coefficient if it satisfies
\begin{equation}
\begin{aligned}
  \frac{\Pe(\cC_\gamma;\gamma)}{B_\gamma^{\ast}}
  &\longrightarrow K^{\ast}_{M,n},\\
  \frac{\Pe(\cC_\gamma;\gamma)}
  {\Pe^{\ast}(M,n;\gamma)}
  &\longrightarrow1.
\end{aligned}
  \label{eq:central-problem-attainment}
\end{equation}

A second objective is to investigate whether
\(K^{\ast}_{M,n}<K^{\mathrm{fix}}_{M,n}\) can occur and to characterize
the associated geometric mechanism.  A strict inequality means
that SNR-dependent motion of the codewords toward the packing-optimal set improves the
optimal leading coefficient even though both designs have the same optimal
error exponent.  Together with \eqref{eq:Pe-fix-asymptotic}, the main
optimized expansion proved below then gives
\begin{equation}
  \frac{\Pe^{\ast}(M,n;\gamma)}
       {\Pe^{\mathrm{fix}}(M,n;\gamma)}
  \longrightarrow
  \frac{K^{\ast}_{M,n}}{K^{\mathrm{fix}}_{M,n}}.
  \label{eq:Pe-star-Pe-fix-ratio}
\end{equation}
Hence \(K^{\ast}_{M,n}<K^{\mathrm{fix}}_{M,n}\) implies
\(\Pe^{\ast}(M,n;\gamma)<\Pe^{\mathrm{fix}}(M,n;\gamma)\) for all
sufficiently large \(\gamma\).

\subsection{How \texorpdfstring{\(K^{\ast}_{M,n}<K^{\mathrm{fix}}_{M,n}\)}
{K* < Kfix} Can Occur}
\label{subsec:strict-coefficient-inequality}

The best fixed-codebook leading coefficient \(K^{\mathrm{fix}}_{M,n}\)
and the optimal leading coefficient \(K^{\ast}_{M,n}\) arise from
different high-SNR procedures.
The coefficient \(K^{\mathrm{fix}}_{M,n}\) is obtained by holding a
packing-optimal codebook fixed as the SNR grows, identifying its number
of ordered closest-pair indices, and then minimizing this number over
\(\cP^{\ast}\).  By contrast, \(K^{\ast}_{M,n}\) is attained
asymptotically by an SNR-dependent family
\(\{\cC_\gamma\}_\gamma\).  By definition, no codebook in
\(\cP^{\ast}\) has fewer than \(K^{\mathrm{fix}}_{M,n}\) ordered
closest-pair indices.  Hence, an attaining
family under the strict inequality
\(K^{\ast}_{M,n}<K^{\mathrm{fix}}_{M,n}\), cf.
Lemma~\ref{lem:coefficient-attaining-localization}, approaches \(\cP^{\ast}\)
while remaining outside it for all sufficiently large \(\gamma\).
Hence, the strict inequality arises from the relative rates at which an
SNR-dependent codebook family approaches the packing-optimal set
\(\cP^{\ast}\).

To make this more precise, along a
convergent subsequence, reindexed by \(\gamma\), write
\(\cC_\gamma\to\cC_\infty\in\cP^{\ast}\).  For an ordered pair
\(e\in\cI_{\max}(\cC_\infty)\), define its signed correlation offset
from the optimal packing value and compare its single-pair tail scale
with \(B_\gamma^{\ast}\):
\begin{align}
  \Delta_{e,\gamma}
  &:=\rho_e(\cC_\gamma)-\rho^{\ast}_{M,n},
  \label{eq:section2-contact-gap}\\
  \frac{b_\gamma(\rho_e(\cC_\gamma))}{B_\gamma^{\ast}}
  &=
  \exp\!\left\{\frac{n\gamma\Delta_{e,\gamma}}4\right\}
  \sqrt{\frac{1-\rho^{\ast}_{M,n}}
  {1-\rho_e(\cC_\gamma)}}.
  \label{eq:pathwise-pair-scale-ratio}
\end{align}
Because \(\rho_e(\cC_\gamma)\to\rho^{\ast}_{M,n}\), the square-root
factor tends to one.  As the limiting-closest-pair result below makes
rigorous, for the family under discussion that attains the optimal leading
coefficient, if
\(n\gamma\Delta_{e,\gamma}\to
\lambda_e\in\mathbb{R}\cup\{-\infty\}\), then the contribution of
the ordered pair \(e\) to \(K^{\ast}_{M,n}\) is
\(\exp\{\lambda_e/4\}\), where \(\exp\{-\infty\}=0\).  Thus a zero
scaled offset gives a unit contribution to \(K^{\ast}_{M,n}\), whereas divergence to
\(-\infty\) suppresses the contribution even though the pair is closest
in the packing-optimal limit \(\cC_\infty\).

A revealing mechanism for achieving
\(K^{\ast}_{M,n}<K^{\mathrm{fix}}_{M,n}\), applied in the
orthoplex-bound range, is the following.  Partition the closest-pair indices
in \(\cI_{\max}(\cC_\infty)\) into two classes, denoting indices in the
selected class by \(e\) and those in the remaining class by \(f\).  At
finite SNR \(\gamma\), arrange their correlations so that
\begin{equation}
\begin{aligned}
  \rho_e(\cC_\gamma)
  &=\rho_{\max}(\cC_\gamma)
    =\rho^{\ast}_{M,n}+p_\gamma,
  & p_\gamma&>0,
  & p_{\gamma}&\longrightarrow0,
  \qquad n\gamma p_\gamma\longrightarrow0,\\
  \rho_f(\cC_\gamma)
  &=\rho^{\ast}_{M,n}-q_{f,\gamma},
  & q_{f,\gamma}&>0,
  & q_{f,\gamma}&\longrightarrow0,
  \qquad n\gamma q_{f,\gamma}\longrightarrow\infty.
\end{aligned}
  \label{eq:section2-two-scale-mechanism}
\end{equation}
The first pairs are slightly closer than under an optimal packing, but
their deterioration is too small to change the optimal error exponent,
and they make unit contributions to the family's leading coefficient.
The second pairs are farther apart by a larger vanishing amount than
under an optimal packing, so their contributions to the family's leading
coefficient vanish, even though they become closest pairs in
\(\cC_\infty\).  Thus an asymptotically negligible worsening of selected
pairs can create enough geometric freedom to suppress the contributions
of other pairs to the family's leading coefficient.  This description
assumes that the family's leading coefficient exists.  More general
approach rates may produce fractional pair contributions, so the leading
coefficient of an SNR-dependent family, when it exists, need not be an
integer-valued ordered closest-pair count.  If the family attains the
optimal leading coefficient, its leading coefficient equals
\(K^{\ast}_{M,n}\).

If \(\cC_\infty\) were instead held fixed, every one of its ordered
closest-pair indices would contribute one to its fixed-codebook leading
coefficient \(\lvert\cI_{\max}(\cC_\infty)\rvert\), which is at least
\(K^{\mathrm{fix}}_{M,n}\).  For an SNR-dependent family that attains the
optimal leading coefficient, the pairwise contributions to
\(K^{\ast}_{M,n}\) therefore depend on the family's approach to the
limiting packing, not on the limiting packing alone.

\section{Soft-Packing Characterization of the Optimal Leading Coefficient}
\label{sec:soft-packing-characterization}

The design problem in Subsection~\ref{subsec:exact-design-problem} first
asks whether the normalized SNR-wise optimal ML error has a finite limit
and how that limit can be characterized.  This section resolves these
questions for every fixed finite pair \((M,n)\) satisfying
\(M\ge2\) and \(\rho^{\ast}_{M,n}<1\).  We introduce a Gaussian
soft-packing energy, prove that its minimum is asymptotically equivalent
to the normalized exact ML optimum, and show that this minimum converges
to the optimal leading coefficient \(K^{\ast}_{M,n}\).  Comparison with fixed
packing-optimal codebooks then gives
\(K^{\ast}_{M,n}\le K^{\mathrm{fix}}_{M,n}\).

\subsection{Gaussian Soft-Packing Energy}
\label{subsec:global-soft-packing-energy}

For every codebook \(\cC\in\cM\), define the Gaussian soft-packing energy as
\begin{equation}
  \cG_\gamma(\cC)
  =\sum_{e\in\cE_M}
  \exp\left\{\frac{n\gamma}{4}
  \bigl(\rho_e(\cC)-\rho^{\ast}_{M,n}\bigr)\right\}.
  \label{eq:global-soft-packing-energy}
\end{equation}
Each exponential term in \(\cG_\gamma(\cC)\) distinguishes correlation offsets on the
\(1/(n\gamma)\) scale: an offset \(\lambda/(n\gamma)\) produces the
finite term \(\exp\{\lambda/4\}\) in \(\cG_\gamma\).
The Gaussian soft-packing energy also controls the hard packing objective.  Indeed, an
ordered pair attaining \(\rho_{\max}(\cC)\) contributes
\begin{equation*}
  \exp\left\{\frac{n\gamma}{4}
  \bigl(\rho_{\max}(\cC)-\rho^{\ast}_{M,n}\bigr)\right\}
\end{equation*}
to \(\cG_\gamma(\cC)\).  Therefore, if
\(\cG_\gamma(\cC)\le L\) for some \(L\geq1\), then
\begin{equation}
  0\le \rho_{\max}(\cC)-\rho^{\ast}_{M,n}
  \le \frac{4\log L}{n\gamma}.
  \label{eq:bounded-energy-packing-consequence}
\end{equation}
Thus a bounded Gaussian soft-packing energy \(\cG_\gamma(\cC)\) forces the worst correlation into the
\(1/(n\gamma)\)-scale neighborhood of the optimal packing value \(\rho^{\ast}_{M,n}\).

For a fixed packing-optimal codebook \(\cC_0\in\cP^{\ast}\), every
\(e\in\cI_{\max}(\cC_0)\) has zero correlation offset from \(\rho^{\ast}_{M,n}\) and thereby contributes
one to \(\cG_\gamma(\cC_0)\).  Every remaining ordered pair has a fixed
negative offset from \(\rho^{\ast}_{M,n}\); hence its term decays
exponentially to zero as \(\gamma\to\infty\).  Consequently,
\begin{equation}
  \lim_{\gamma\to\infty}\cG_\gamma(\cC_0)
  =|\cI_{\max}(\cC_0)|.
  \label{eq:soft-packing-fixed-limit}
\end{equation}
Thus the limiting Gaussian soft-packing energy of a fixed packing-optimal codebook is
exactly its number of ordered closest-pair indices.
We next compare the Gaussian soft-packing energy with the normalized exact ML
error while the codebook is allowed to vary with SNR.

\subsection{Uniform Approximation of the Exact ML Error}
\label{subsec:uniform-soft-packing-transfer}

The minimum of the Gaussian soft-packing energy over all codebooks
\(\cC\in\cM\) at a fixed SNR \(\gamma\) is
\begin{equation}
  V_\gamma(M,n)
  =\min_{\cC\in\cM}\cG_\gamma(\cC).
  \label{eq:soft-packing-optimum}
\end{equation}
For each fixed \(\gamma\), the map
\(\cC\mapsto\cG_\gamma(\cC)\) is continuous on the compact codebook
space \(\cM\); hence the minimum is attained.  The quantity
\(V_\gamma(M,n)\) plays a crucial role in characterizing the optimal leading
coefficient.

\begin{theorem}[Uniform Gaussian soft-packing transfer]
\label{thm:uniform-soft-packing-transfer}
Fix finite \((M,n)\) with \(M\ge2\) and \(\rho^{\ast}_{M,n}<1\).  For every fixed \(L\ge1\), there are constants \(C_L<\infty\) and \(\gamma_L<\infty\) such that, for every \(\gamma\ge\gamma_L\) and every \(\cC\in\cM\) satisfying
\begin{equation}
  \cG_\gamma(\cC)\le L,
  \label{eq:bounded-soft-packing-class}
\end{equation}
we have
\begin{equation}
  \left|
  \frac{\Pe(\cC;\gamma)}{B_\gamma^{\ast}}
  -\cG_\gamma(\cC)
  \right|
  \le \frac{C_L}{n\gamma}.
  \label{eq:uniform-soft-packing-transfer}
\end{equation}
The constants may depend on \((M,n)\), \(L\), and the positive
separation \(1-\rho^{\ast}_{M,n}\), but not on \(\cC\) or \(\gamma\).
\end{theorem}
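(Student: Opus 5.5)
The plan is to sandwich \(\Pe(\cC;\gamma)\) between the union bound and the second-order Bonferroni bound applied to \eqref{eq:exact-error-event-union}, namely
\begin{equation*}
  \frac1M\sum_{e\in\cE_M}\Pp(\cA_e)
  -\frac1M\sum_{m}\sum_{\{\ell,k\}\subseteq[M]\setminus\{m\},\,\ell\ne k}\Pp(\cA_{m\ell}\cap\cA_{mk})
  \le\Pe(\cC;\gamma)\le\frac1M\sum_{e\in\cE_M}\Pp(\cA_e).
\end{equation*}
It then suffices to show two things, both uniformly over the class \(\cG_\gamma(\cC)\le L\). First, \(\frac1M\sum_e\Pp(\cA_e)/B_\gamma^{\ast}\) differs from \(\cG_\gamma(\cC)\) by at most \(C/(n\gamma)\). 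Second, every pairwise intersection term divided by \(B_\gamma^{\ast}\) is exponentially small in \(\gamma\), hence also \(O(1/(n\gamma))\).

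\emph{Step 1: a priori uniform geometry.} By \eqref{eq:bounded-energy-packing-consequence}, every \(\cC\) in the class has \(\rho_{\max}(\cC)\le\rho^{\ast}_{M,n}+4\log L/(n\gamma)\). Take \(\gamma_L\) large enough that this is at most \(\rho^{\ast}_{M,n}+\eta\), where \(\eta:=(1-\rho^{\ast}_{M,n})/2\). Then all pairwise squared distances are at least \(2(1-\rho^{\ast}_{M,n}-\eta)=:2\delta>0\), uniformly in \(\cC\) and \(\gamma\). Every later constant will depend only on \(\delta\), \(M\), \(n\) and \(L\).

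\emph{Step 2: single-pair terms.} Write \(x_e=\sqrt{n\gamma(1-\rho_e)/2}\), so \(x_e^2\ge n\gamma\delta/2\). The Mills-ratio bounds \(\phi(x)(1/x-1/x^3)\le Q(x)\le\phi(x)/x\) give
\begin{equation*}
  \Pp(\cA_e)=M\,b_\gamma(\rho_e)\bigl(1+O(1/(n\gamma))\bigr)
\end{equation*}
uniformly. Then \eqref{eq:pathwise-pair-scale-ratio} gives \(b_\gamma(\rho_e)/B_\gamma^{\ast}=e^{n\gamma\Delta_e/4}s_e\), with \(\Delta_e=\rho_e-\rho^{\ast}_{M,n}\) and \(s_e=\sqrt{(1-\rho^{\ast}_{M,n})/(1-\rho_e)}\). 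Since \(1-\rho_e\in[\delta,2]\), we get \(|s_e-1|\le C|\Delta_e|\). It remains to bound \(|\Delta_e|e^{n\gamma\Delta_e/4}\) by \(C/(n\gamma)\), which I will do in two cases. If \(\Delta_e\le0\), the bound follows from \(\sup_{t\ge0}te^{-t/4}<\infty\). If \(\Delta_e>0\), it follows from \(\Delta_e\le4\log L/(n\gamma)\) together with \(e^{n\gamma\Delta_e/4}\le L\). Summing over the \(M(M-1)\) ordered pairs, and absorbing the \(O(1/(n\gamma))\) factor using \(\cG_\gamma\le L\), gives the first claim.

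\emph{Step 3: intersections (the main obstacle).} Under \(W=m\), the event \(\cA_{m\ell}\) is \(\{U_\ell\ge x_\ell\}\) with \(U_\ell=\bZ^\T(\bc_\ell-\bc_m)/\|\bc_\ell-\bc_m\|\) standard normal, and similarly for \(k\). The correlation is
\begin{equation*}
  r=\frac{(\bc_\ell-\bc_m)^\T(\bc_k-\bc_m)}{\|\bc_\ell-\bc_m\|\|\bc_k-\bc_m\|}.
\end{equation*}
The key point is that \(r\le1-\kappa\) for some \(\kappa=\kappa(\delta)>0\). This holds because \(r=1\) forces \(\bc_\ell=\bc_k\), and by Step 1 \(\|\bc_\ell-\bc_k\|^2\ge2\delta\) while the normalizing distances are at most \(2\); an elementary law-of-cosines computation should make \(\kappa\) explicit. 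A standard bivariate Gaussian tail bound then gives
\begin{equation*}
  \Pp(U_\ell\ge a,\,U_k\ge b)\le\exp\{-\min(a,b)^2/(1+r)\}
\end{equation*}
for \(a,b\ge0\); when \(r\le0\), the bound \(\Pp(U_\ell\ge a)\) combined with a slightly different argument suffices. Separately, the smallest correlation offset satisfies
\begin{equation*}
  \min(x_\ell,x_k)^2\ge n\gamma(1-\rho^{\ast}_{M,n})/2-\log L.
\end{equation*}
Together these show that the intersection probability is at most \(C\exp\{-n\gamma(1-\rho^{\ast}_{M,n})/(2(2-\kappa))\}\). This is exponentially smaller than \(B_\gamma^{\ast}\asymp\gamma^{-1/2}e^{-n\gamma(1-\rho^{\ast}_{M,n})/4}\), because \(2/(2-\kappa)>1\). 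Enlarging \(\gamma_L\) then makes these terms \(\le C/(n\gamma)\). I expect the delicate part to be establishing the uniform gap \(\kappa\) and handling the case where one of \(x_\ell,x_k\) is large, which only helps. Combining Steps 2 and 3 yields \eqref{eq:uniform-soft-packing-transfer}.
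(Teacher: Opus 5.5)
Your proposal is correct and takes essentially the same route as the paper: a sandwich between the union bound and the second Bonferroni inequality, Mills-ratio asymptotics with the sign split on $\Delta_e$ for the pairwise tails, and a bound on each intersection via $\bZ^\T(\bu_{m\ell}+\bu_{mk})\ge 2x_{\min}$ combined with a uniform angle gap $\bu_{m\ell}^\T\bu_{mk}\le\overline\varrho<1$. Three small remarks: the paper obtains that gap by compactness of the set of separated triples, using exactly your observation that $r=1$ forces $\bc_\ell=\bc_k$. If you want the gap explicit, the law of cosines with $a,b\le 2$ alone does not suffice, but the circumradius bound $\|\bc_\ell-\bc_k\|\le 2\sin\theta$ for three points on the unit sphere gives $r\le\sqrt{1-\delta/2}$. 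Finally, the offset correction in $\min(x_\ell,x_k)^2$ is $2\log L$ rather than $\log L$, which is harmless.
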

\begin{IEEEproof}
The proof is given in
Appendix~\ref{app:global-optimization-proofs}.
\end{IEEEproof}

Theorem~\ref{thm:uniform-soft-packing-transfer} shows that, for every fixed \(L\geq1\),
the soft-packing energy \(\cG_\gamma(\cC)\) uniformly approximates the
normalized exact ML error
\(\frac{\Pe(\cC;\gamma)}{B_\gamma^{\ast}}\)
over all codebooks \(\cC\in\cM\) satisfying
\(\cG_\gamma(\cC)\leq L\).  Equivalently, an SNR-dependent family is
packing-tail competitive precisely when
\(\Pe(\cC_\gamma;\gamma)/B_\gamma^{\ast}=O(1)\).
Later on, cf. Lemma~\ref{lem:coefficient-attaining-localization}, we show that every
packing-tail-competitive family eventually satisfies such a fixed
soft-packing energy bound.

\begin{corollary}[Best packing-optimal benchmark]
\label{cor:packing-optimal-benchmark}
For every fixed finite \((M,n)\) with \(M\ge2\) and
\(\rho^{\ast}_{M,n}<1\),
\begin{equation}
  \frac{\Pe^{\mathrm{fix}}(M,n;\gamma)}{B_\gamma^{\ast}}
  =K^{\mathrm{fix}}_{M,n}+O((n\gamma)^{-1}).
  \label{eq:Pe-fix-uniform-expansion}
\end{equation}
\end{corollary}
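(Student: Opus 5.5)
The plan is to apply Theorem~\ref{thm:uniform-soft-packing-transfer} with a fixed energy level $L$, chosen so that the whole compact set $\cP^{\ast}$ lies in the bounded soft-packing class for all large $\gamma$, and then to control $\min_{\cC_0\in\cP^{\ast}}\cG_\gamma(\cC_0)$ uniformly.

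\emph{Step 1 (uniform energy bound on $\cP^{\ast}$).} For any $\cC_0\in\cP^{\ast}$, every ordered pair has $\rho_e(\cC_0)\le\rho^{\ast}_{M,n}$. Hence each term of $\cG_\gamma(\cC_0)$ is at most one, and
\begin{equation*}
  \cG_\gamma(\cC_0)\le |\cE_M| = M(M-1) =: L .
\end{equation*}
For $\gamma\ge\gamma_L$, Theorem~\ref{thm:uniform-soft-packing-transfer} then gives $\bigl|\Pe(\cC_0;\gamma)/B_\gamma^{\ast}-\cG_\gamma(\cC_0)\bigr|\le C_L/(n\gamma)$ uniformly over $\cC_0\in\cP^{\ast}$. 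Taking minima over $\cP^{\ast}$ on both sides, which is legitimate because the bound is uniform, yields
\begin{equation*}
  \Bigl|\Pe^{\mathrm{fix}}(M,n;\gamma)/B_\gamma^{\ast}-\min_{\cC_0\in\cP^{\ast}}\cG_\gamma(\cC_0)\Bigr|\le C_L/(n\gamma).
\end{equation*}

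\emph{Step 2 (energy minimum over $\cP^{\ast}$).} It remains to show $\min_{\cP^{\ast}}\cG_\gamma=K^{\mathrm{fix}}_{M,n}+O((n\gamma)^{-1})$.

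The upper bound comes from a fixed minimizer $\cC_0^{\star}$ of \eqref{eq:fixed-packing-coefficient}. For this codebook, $\cG_\gamma(\cC_0^{\star})-K^{\mathrm{fix}}_{M,n}$ is a finite sum of terms $\exp\{-n\gamma\delta_e/4\}$ with fixed $\delta_e>0$. This is exponentially small, hence $O((n\gamma)^{-1})$.

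For the lower bound, the pairs with $\rho_e=\rho^{\ast}_{M,n}$ contribute exactly $|\cI_{\max}(\cC_0)|\ge K^{\mathrm{fix}}_{M,n}$, and all other terms are nonnegative. So $\cG_\gamma(\cC_0)\ge K^{\mathrm{fix}}_{M,n}$ for every $\cC_0\in\cP^{\ast}$ and every $\gamma$. This bound is exact and needs no uniformity in the off-closest gaps.

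Combining the two bounds with Step 1 gives \eqref{eq:Pe-fix-uniform-expansion}.

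\emph{Main obstacle.} The delicate point is justifying the uniform application in Step 1: the pointwise expansion of Proposition~\ref{prop:fixed-codebook-asymptotics} would fail here, since the off-closest gaps $\delta_e$ can be arbitrarily small across the noncompact-rate family inside $\cP^{\ast}$. This is handled by the trivial bound $\cG_\gamma\le M(M-1)$ on $\cP^{\ast}$. The lower bound in Step 2 then requires no uniformity, because it discards the nonnegative off-closest terms.
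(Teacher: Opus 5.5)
Your proposal is correct and follows the paper's proof essentially step for step. Both arguments use the bound $\cG_\gamma(\cC_0)\le M(M-1)$ on $\cP^{\ast}$ to apply Theorem~\ref{thm:uniform-soft-packing-transfer} uniformly and pass to minima. They then combine the lower bound $\cG_\gamma(\cC_0)\ge|\cI_{\max}(\cC_0)|\ge K^{\mathrm{fix}}_{M,n}$ with the exponentially small excess energy of a fixed minimizer of \eqref{eq:fixed-packing-coefficient}, which gives the matching upper bound.
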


\begin{IEEEproof}
For every \(\cC_0\in\cP^{\ast}\), all correlation offsets from
\(\rho^{\ast}_{M,n}\) are nonpositive, and hence
\(\cG_\gamma(\cC_0)\le M(M-1)\).  Applying
Theorem~\ref{thm:uniform-soft-packing-transfer} uniformly over
\(\cP^{\ast}\) and then taking minima gives
\begin{equation*}
  \frac{\Pe^{\mathrm{fix}}(M,n;\gamma)}{B_\gamma^{\ast}}
  =\min_{\cC_0\in\cP^{\ast}}\cG_\gamma(\cC_0)
  +O((n\gamma)^{-1}).
\end{equation*}
Every \(\cC_0\in\cP^{\ast}\) satisfies
\(\cG_\gamma(\cC_0)\ge|\cI_{\max}(\cC_0)|
\ge K^{\mathrm{fix}}_{M,n}\).  Choose
\(\cC_0^{\mathrm{fix}}\in\cP^{\ast}\) with
\(|\cI_{\max}(\cC_0^{\mathrm{fix}})|=K^{\mathrm{fix}}_{M,n}\).
Its remaining ordered pairs, if any, lie a fixed positive correlation
gap below \(\rho^{\ast}_{M,n}\).  Therefore,
\begin{equation*}
  \cG_\gamma(\cC_0^{\mathrm{fix}})
  =K^{\mathrm{fix}}_{M,n}+O(\exp\{-c n\gamma\})
\end{equation*}
for some \(c>0\).  These two bounds prove
\eqref{eq:Pe-fix-uniform-expansion}.
\end{IEEEproof}

\begin{corollary}[Finite-SNR equivalence of the optimized exact ML error and Gaussian soft-packing energy]
\label{cor:optimized-soft-packing-transfer}
For every fixed finite \((M,n)\) with \(M\ge2\) and \(\rho^{\ast}_{M,n}<1\),
\begin{equation}
  \frac{\Pe^{\ast}(M,n;\gamma)}{B_\gamma^{\ast}}
  =V_\gamma(M,n)+O((n\gamma)^{-1}).
  \label{eq:optimized-soft-packing-transfer}
\end{equation}
\end{corollary}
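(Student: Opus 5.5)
The plan is a two-sided comparison built on Theorem~\ref{thm:uniform-soft-packing-transfer}, with a fixed energy level \(L\). The main difficulty is that the transfer theorem is only uniform on the bounded-energy class \(\{\cG_\gamma\le L\}\). So I must show two things: the optimizers of each problem can be taken inside such a class, and codebooks outside it are never competitive for the other problem.

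\emph{Upper bound.} Fix any \(\cC_0\in\cP^{\ast}\). As in the proof of Corollary~\ref{cor:packing-optimal-benchmark}, every correlation offset of \(\cC_0\) is nonpositive, so \(\cG_\gamma(\cC_0)\le M(M-1)\), and hence \(V_\gamma(M,n)\le M(M-1)\) for every \(\gamma\). Set \(L=M(M-1)\). Let \(\cC^{\rm s}_\gamma\) be a minimizer of \(\cG_\gamma\). It lies in the bounded class, so the theorem gives, for \(\gamma\ge\gamma_L\),
\[
\Pe^{\ast}(M,n;\gamma)/B_\gamma^{\ast}\le \Pe(\cC^{\rm s}_\gamma;\gamma)/B_\gamma^{\ast}\le V_\gamma(M,n)+C_L/(n\gamma).
\]

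\emph{Lower bound.} Let \(\cC^{\ast}_\gamma\) be an exact ML minimizer. By the upper bound, \(\Pe(\cC^{\ast}_\gamma;\gamma)/B_\gamma^{\ast}\le M(M-1)+C_L/(n\gamma)\le M(M-1)+1\) for large \(\gamma\). I then want to show that this forces \(\cG_\gamma(\cC^{\ast}_\gamma)\le L'\) for some fixed \(L'\). To do so I compare \(\Pe\) from below with the error of the worst pair. Let \(e=(m,\ell)\) attain \(\rho_{\max}(\cC)\). Then
\[
\Pe(\cC;\gamma)\ge \tfrac1M\Pp(\cA_{m\ell})=\tfrac1M Q\bigl(\sqrt{n\gamma(1-\rho_{\max})/2}\bigr).
\]

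\emph{Case \(1-\rho_{\max}\) bounded below.} If \(1-\rho_{\max}\ge(1-\rho^{\ast}_{M,n})/2\), the standard lower bound \(Q(x)\ge \frac{x}{1+x^2}\phi(x)\) gives \(\Pe/B^{\ast}_\gamma\ge c\exp\{n\gamma(\rho_{\max}-\rho^{\ast}_{M,n})/4\}\) for large \(\gamma\), with some \(c>0\) depending only on \((M,n)\) and \(1-\rho^{\ast}_{M,n}\).

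\emph{Case \(1-\rho_{\max}\) small.} Otherwise \(\rho_{\max}-\rho^{\ast}_{M,n}\) is bounded below by a positive constant. Then \(Q(x)\ge Q(\sqrt{n\gamma(1-\rho^{\ast}_{M,n})/4})\) decays with a strictly slower exponent than \(B_\gamma^{\ast}\), so the normalized error blows up. This includes the degenerate case \(\rho_{\max}=1\), where \(Q(0)=1/2\).

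In both cases, boundedness of \(\Pe/B_\gamma^{\ast}\) forces \(\exp\{n\gamma(\rho_{\max}-\rho^{\ast}_{M,n})/4\}\le L_0\) for a fixed \(L_0\). Since \(\cG_\gamma\le M(M-1)\exp\{n\gamma(\rho_{\max}-\rho^{\ast}_{M,n})/4\}\), it follows that \(\cG_\gamma(\cC^{\ast}_\gamma)\le L':=M(M-1)L_0\). I expect this localization step to be the main (if mild) obstacle, because the transfer theorem cannot be applied before it is done.

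With \(L'\) in hand, Theorem~\ref{thm:uniform-soft-packing-transfer} at level \(L'\) gives, for \(\gamma\ge\gamma_{L'}\),
\[
\Pe^{\ast}(M,n;\gamma)/B_\gamma^{\ast}=\Pe(\cC^{\ast}_\gamma;\gamma)/B_\gamma^{\ast}\ge \cG_\gamma(\cC^{\ast}_\gamma)-C_{L'}/(n\gamma)\ge V_\gamma(M,n)-C_{L'}/(n\gamma).
\]
Combining the two bounds with constant \(\max(C_L,C_{L'})\) yields \eqref{eq:optimized-soft-packing-transfer}. Finitely many small \(\gamma\) do not matter for an \(O((n\gamma)^{-1})\) statement as \(\gamma\to\infty\).
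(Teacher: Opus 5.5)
Your proof is correct and follows the paper's argument. The upper bound applies uniform transfer at a soft-packing minimizer with \(L=M(M-1)\), and the lower bound places an exact ML minimizer in a fixed energy sublevel, where your inline case analysis reproduces the paper's error-to-energy localization lemma; the paper gets the \(O(1)\) error bound by comparing with a fixed packing-optimal codebook rather than from your upper bound, and uses the single estimate \(Q(x)\ge c\,e^{-x^2/2}/(1+x)\) instead of a case split.

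One small correction: for coincident codewords \(\Pp(\cA_{m\ell})=1\), not \(Q(0)\), and \(\Pe\ge\frac1M\Pp(\cA_{m\ell})\) can fail because of tie-breaking. Handle that case as the paper does: the two coincident labels have correct-decoding probabilities summing to at most one, so \(\Pe\ge 1/M\).
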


\begin{IEEEproof}
The proof is given in
Appendix~\ref{app:global-optimization-proofs}.
\end{IEEEproof}

The corollary shows that
minimizing the normalized exact ML error and minimizing the soft-packing energy, see \eqref{eq:soft-packing-optimum},
produce asymptotically identical optimal normalized values.  This result
compares the two optimized values,
\(\Pe^{\ast}(M,n;\gamma)/B_\gamma^{\ast}\) and \(V_\gamma(M,n)\), at each
sufficiently large finite SNR.
It does not yet establish the existence of an
optimal leading coefficient, because it does not show that
\(V_\gamma(M,n)\) converges as \(\gamma\to\infty\).

\subsection{Existence and Characterization of the Optimal Leading Coefficient}
\label{subsec:global-coefficient-attainment}

It remains to prove that \(V_\gamma(M,n)\) has a high-SNR limit.
The next theorem establishes this convergence and transfers the
resulting limit to the minimum error probability using
Corollary~\ref{cor:optimized-soft-packing-transfer}.

\begin{theorem}[Existence and Gaussian soft-packing energy characterization of the optimal leading coefficient]
\label{thm:global-coefficient}
For every fixed finite \((M,n)\) with \(M\ge2\) and
\(\rho^{\ast}_{M,n}<1\), the finite limits
\begin{equation}
\begin{aligned}
  K^{\ast}_{M,n}
  &:=\lim_{\gamma\to\infty}V_\gamma(M,n)\\
  &=\lim_{\gamma\to\infty}
  \frac{\Pe^{\ast}(M,n;\gamma)}{B_\gamma^{\ast}}
\end{aligned}
  \label{eq:Kstar-soft-packing-limit}
\end{equation}
exist and agree.  The common value, \(K^{\ast}_{M,n}\), satisfies
\begin{equation}
  2\le K^{\ast}_{M,n}\le M(M-1).
  \label{eq:Kstar-elementary-bounds}
\end{equation}
The globally optimal ML error obeys
\begin{equation}
  \Pe^{\ast}(M,n;\gamma)
  =K^{\ast}_{M,n}B_\gamma^{\ast}+o(B_\gamma^{\ast}).
  \label{eq:global-main}
\end{equation}
Moreover, for every fixed packing-optimal codebook \(\cC_0\in\cP^{\ast}\),
\begin{equation}
  K^{\ast}_{M,n}\le |\cI_{\max}(\cC_0)|.
  \label{eq:Kstar-fixed-packing-upper}
\end{equation}
\end{theorem}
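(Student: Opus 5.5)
The plan is to treat the soft-packing minimum \(V_\gamma(M,n)\) as the primary object, prove that it converges, and then transfer the limit to \(\Pe^{\ast}\) via Corollary~\ref{cor:optimized-soft-packing-transfer}.

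\textbf{Elementary bounds.} These come first and are immediate.
\begin{itemize}
\item Every \(\cC\in\cM\) has \(\rho_{\max}(\cC)\ge\rho^{\ast}_{M,n}\). The two ordered indices of a worst pair therefore each contribute at least \(1\) to \(\cG_\gamma(\cC)\), so \(V_\gamma\ge2\).
\item Any \(\cC_0\in\cP^{\ast}\) has all offsets nonpositive, so \(V_\gamma\le\cG_\gamma(\cC_0)\le M(M-1)\).
\item By \eqref{eq:soft-packing-fixed-limit}, \(\limsup_\gamma V_\gamma\le\lim_\gamma\cG_\gamma(\cC_0)=|\cI_{\max}(\cC_0)|\).
\end{itemize}
Once the limit \(K^{\ast}_{M,n}\) is shown to exist, these give \eqref{eq:Kstar-elementary-bounds} and \eqref{eq:Kstar-fixed-packing-upper}. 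Corollary~\ref{cor:optimized-soft-packing-transfer} then yields \(\Pe^{\ast}/B_\gamma^{\ast}=V_\gamma+O((n\gamma)^{-1})\to K^{\ast}_{M,n}\), which is \eqref{eq:global-main}.

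\textbf{Existence of the limit.} The whole difficulty is to show \(\limsup_\gamma V_\gamma\le\liminf_\gamma V_\gamma=:\underline K\). I would use a rescaling argument.
\begin{itemize}
\item Pick \(\gamma_j\to\infty\) with \(V_{\gamma_j}\to\underline K\), and minimizers \(\cC_j\). Each satisfies \(\cG_{\gamma_j}(\cC_j)\le M(M-1)\).
\item By \eqref{eq:bounded-energy-packing-consequence}, \(\rho_{\max}(\cC_j)-\rho^{\ast}_{M,n}=O(1/\gamma_j)\). By compactness, after passing to a subsequence, \(\cC_j\to\cC_\infty\in\cP^{\ast}\).
\item Write \(\bc^{(j)}_m=\bc^\infty_m+\bu^{(j)}_m\) with \(\|\bu^{(j)}\|\to0\).
\item For a target SNR \(\gamma'\ge\gamma_j\), set \(t=\gamma_j/\gamma'\in(0,1]\). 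Define \(\cC'\) by normalizing \(\bc^\infty_m+t\bu^{(j)}_m\).
\end{itemize}
Each unnormalized inner product is a quadratic \(\rho^{\ast}+ta_e+t^2b_e\) in \(t\) for contact pairs, with \(b_e=O(\|\bu^{(j)}\|^2)\). Its deviation from linear interpolation is \(-b_et(1-t)\). Hence the scaled offset at \(\gamma'\) is \(n\gamma_j\Delta_{e}(\cC_j)\), up to an error of order \(n\gamma_j\|\bu^{(j)}\|^2(1-t)\) plus normalization corrections of the same order. Noncontact pairs of \(\cC_\infty\) keep a fixed negative gap and contribute \(o(1)\).

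If these errors vanish, then \(\cG_{\gamma'}(\cC')\le\cG_{\gamma_j}(\cC_j)+o(1)\) uniformly in \(\gamma'\ge\gamma_j\). This gives \(\limsup V_\gamma\le\underline K\).

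\textbf{Main obstacle.} The hard step is controlling the curvature term \(\gamma_j\|\bu^{(j)}\|^2\); a priori it need not vanish. Only the offsets of the contact pairs are forced to be \(O(1/\gamma_j)\) from above, not the displacement itself.

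My first attempt would be to exploit the freedom in choosing \(\cC_j\). Replace \(\cC_j\) by a point of the segment from \(\cC_\infty\) toward \(\cC_j\), or by an orthogonal rotation or relabeling, so that the displacement is measured from the nearest point of \(\cP^{\ast}\). Then show that tangential components which leave contact correlations unchanged to first order can be removed without increasing the energy. If this fails, I would fall back on a two-scale argument. I would choose a second sequence \(\gamma_j'\) with \(\gamma_j\ll\gamma'_j\) and \(\gamma_j^2\|\bu^{(j)}\|^2/\gamma'_j\to0\), where the quadratic error \(n\gamma't^2b_e\) is negligible. I would then interpolate the intermediate range \([\gamma_j,\gamma'_j]\) by a diagonal choice of \(j\) depending on \(\gamma\).
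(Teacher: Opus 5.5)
\textbf{The gap is in the existence of the limit.} Your elementary bounds and your final transfer through Corollary~\ref{cor:optimized-soft-packing-transfer} match the paper. However, the step that carries the content of the theorem, existence of $\lim_{\gamma\to\infty}V_\gamma(M,n)$, is left open, and your rescaling cannot close it in general.

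Transporting $\cC_j$ along $\cC_\infty+t\bu^{(j)}$ with $t=\gamma_j/\gamma'$ reproduces the scaled offsets $n\gamma_j\Delta_e(\cC_j)=n\gamma_j(a_e+b_e)$ only in the effectively linear regime $n\gamma_j\|\bu^{(j)}\|^2\to0$. The energy bound controls only the sums $a_e+b_e$ from above, not the size of $\bu^{(j)}$. Near-minimizers can genuinely require $\gamma_j\|\bu^{(j)}\|^2\not\to0$, for instance when some contact correlations can be lowered only at second order. Then the first- and second-order parts of the offset vector scale differently in $t$, and no single dilation reproduces all offsets at once.

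\textbf{Why your fallbacks do not repair it.} Quotienting by rotations and relabelings does not help, because the offending directions are not symmetries. Discarding ``tangential'' components changes exactly the second-order terms that are visible on the $1/(n\gamma)$ scale.

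The two-scale fallback fails for a related reason. For $\gamma'\gg\gamma_j^2\|\bu^{(j)}\|^2$ the transported offsets become the linearized values $n\gamma_j a_e$ (plus normalization terms), not $n\gamma_j(a_e+b_e)$. Since $a_e$ alone can be positive of order $\|\bu^{(j)}\|^2$ while $a_e+b_e\le O(1/\gamma_j)$, the transported energy can lie far above $\underline K$, and can even be unbounded. The ``diagonal interpolation'' over $[\gamma_j,\gamma_j']$ is unspecified, and that intermediate range is precisely where the difficulty lies.

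\textbf{The missing ingredient.} You need a structural, tame-geometry argument that rules out oscillation of $V_\gamma$ without constructing competitors. The paper observes that $\cM$ is semialgebraic and that $(\gamma,\cC)\mapsto\cG_\gamma(\cC)$ is definable in the real exponential field $\R_{\exp}$. Hence the value function $\gamma\mapsto V_\gamma(M,n)$, a minimum over $\cM$, is itself definable. By Wilkie's o-minimality theorem and the one-variable monotonicity theorem it is eventually monotone, and together with $2\le V_\gamma\le M(M-1)$ this gives convergence.

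Without some such input, what your argument actually proves is $2\le\liminf_\gamma V_\gamma\le\limsup_\gamma V_\gamma\le|\cI_{\max}(\cC_0)|$ together with $\Pe^{\ast}/B_\gamma^{\ast}-V_\gamma\to0$. That does not yield existence of $K^{\ast}_{M,n}$, and hence neither \eqref{eq:Kstar-soft-packing-limit} nor \eqref{eq:global-main}.
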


\begin{IEEEproof}
The proof is given in
Appendix~\ref{app:global-optimization-proofs}.
\end{IEEEproof}

The theorem shows that both
\(\Pe^{\ast}(M,n;\gamma)/B_\gamma^{\ast}\) and \(V_\gamma(M,n)\)
converge to \(K^{\ast}_{M,n}\), and that this common limit is at most
\(\lvert\cI_{\max}(\cC_0)\rvert\) for every fixed packing-optimal
codebook \(\cC_0\).  The scale \(B_\gamma^{\ast}\) contains the
packing-determined exponential decay and its Gaussian prefactor, while
\(K^{\ast}_{M,n}\) is the remaining optimal leading coefficient.

The theorem also proves existence of the optimized
leading coefficient. Specifically, both
\(\Pe^{\ast}(M,n;\gamma)/B_\gamma^{\ast}\) and \(V_\gamma(M,n)\)
converge to \(K^{\ast}_{M,n}\).  Consequently, every selection of exact
finite-SNR minimizers
\[
  \cC_\gamma^{\ast}
  \in\argmin_{\cC\in\cM}\Pe(\cC;\gamma)
\]
has leading coefficient \(K^{\ast}_{M,n}\), even if the selected
optimizing codebooks do not themselves converge as a family.  Indeed,
\[
  \frac{\Pe(\cC_\gamma^{\ast};\gamma)}{B_\gamma^{\ast}}
  =
  \frac{\Pe^{\ast}(M,n;\gamma)}{B_\gamma^{\ast}}
  \longrightarrow K^{\ast}_{M,n}.
\]

The same characterization also identifies a broad class of
families attaining the optimal leading coefficient.  Exact minimization of the Gaussian
soft-packing energy at every SNR is unnecessary; a globally vanishing
additive objective gap is sufficient, as shown in the following.

\begin{corollary}[Near-minimizers of the Gaussian soft-packing energy attain the optimal leading coefficient]
\label{cor:soft-energy-minimizers-attain}
Let \(\{\widehat{\cC}_\gamma\}\subset\cM\) and suppose that, for some
\(\eps_\gamma\ge0\) satisfying \(\eps_\gamma\to0\),
\begin{equation}
  0\le
  \cG_\gamma(\widehat{\cC}_\gamma)-V_\gamma(M,n)
  \le\eps_\gamma.
  \label{eq:soft-energy-near-minimizer-gap}
\end{equation}
Then
\begin{equation}
\begin{aligned}
  \cG_\gamma(\widehat{\cC}_\gamma)
  &\longrightarrow K^{\ast}_{M,n},\\
  \frac{\Pe(\widehat{\cC}_\gamma;\gamma)}{B_\gamma^{\ast}}
  &\longrightarrow K^{\ast}_{M,n},
\end{aligned}
  \label{eq:soft-energy-minimizers-attain-Kstar}
\end{equation}
and
\begin{equation}
  \frac{\Pe(\widehat{\cC}_\gamma;\gamma)}
  {\Pe^{\ast}(M,n;\gamma)}\longrightarrow1.
  \label{eq:soft-energy-near-minimizer-relative-attainment}
\end{equation}
\end{corollary}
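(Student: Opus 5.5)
The argument is a direct combination of Theorem~\ref{thm:global-coefficient}, Theorem~\ref{thm:uniform-soft-packing-transfer}, and Corollary~\ref{cor:optimized-soft-packing-transfer}; no new geometry is needed.

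\emph{Step 1: energy convergence.} By Theorem~\ref{thm:global-coefficient}, \(V_\gamma(M,n)\to K^{\ast}_{M,n}\). The sandwich \eqref{eq:soft-energy-near-minimizer-gap} gives
\[
V_\gamma(M,n)\le\cG_\gamma(\widehat{\cC}_\gamma)\le V_\gamma(M,n)+\eps_\gamma .
\]
Since \(\eps_\gamma\to0\), this yields \(\cG_\gamma(\widehat{\cC}_\gamma)\to K^{\ast}_{M,n}\), which is the first limit in \eqref{eq:soft-energy-minimizers-attain-Kstar}.

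\emph{Step 2: transfer to the exact error.} To use Theorem~\ref{thm:uniform-soft-packing-transfer}, the family must lie in a fixed bounded-energy class. Take \(L:=M(M-1)+1\). By \eqref{eq:Kstar-elementary-bounds}, \(K^{\ast}_{M,n}\le M(M-1)\). Step~1 then gives \(\cG_\gamma(\widehat{\cC}_\gamma)\le L\) for all sufficiently large \(\gamma\). Alternatively, one can bound directly
\[
\cG_\gamma(\widehat{\cC}_\gamma)\le V_\gamma+\eps_\gamma\le \cG_\gamma(\cC_0)+\eps_\gamma\le M(M-1)+\eps_\gamma
\]
for any \(\cC_0\in\cP^{\ast}\). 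For \(\gamma\ge\gamma_L\) large enough, Theorem~\ref{thm:uniform-soft-packing-transfer} then gives
\[
\left|\Pe(\widehat{\cC}_\gamma;\gamma)/B_\gamma^{\ast}-\cG_\gamma(\widehat{\cC}_\gamma)\right|\le C_L/(n\gamma)\to0 .
\]
Combined with Step~1, this gives the second limit in \eqref{eq:soft-energy-minimizers-attain-Kstar}. The only step needing care is this one: the transfer theorem requires a bound \(L\) that does not depend on \(\gamma\), and the elementary bound above supplies it.

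\emph{Step 3: relative attainment.} By Theorem~\ref{thm:global-coefficient}, \(\Pe^{\ast}(M,n;\gamma)/B_\gamma^{\ast}\to K^{\ast}_{M,n}\). Write
\[
\frac{\Pe(\widehat{\cC}_\gamma;\gamma)}{\Pe^{\ast}(M,n;\gamma)}
=\frac{\Pe(\widehat{\cC}_\gamma;\gamma)/B_\gamma^{\ast}}{\Pe^{\ast}(M,n;\gamma)/B_\gamma^{\ast}} .
\]
The numerator and denominator both tend to \(K^{\ast}_{M,n}\), and \(K^{\ast}_{M,n}\ge2>0\), so the ratio tends to \(1\). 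This is \eqref{eq:soft-energy-near-minimizer-relative-attainment}.
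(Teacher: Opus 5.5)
Your proposal is correct and follows the paper's own proof essentially step for step. You get energy convergence from the sandwich with \(V_\gamma(M,n)\to K^{\ast}_{M,n}\), then a \(\gamma\)-independent energy bound from \(V_\gamma(M,n)\le M(M-1)\) so that uniform transfer applies. Finally you divide by \(\Pe^{\ast}(M,n;\gamma)/B_\gamma^{\ast}\to K^{\ast}_{M,n}\ge2\).
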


\begin{IEEEproof}
The proof is given in
Appendix~\ref{app:global-optimization-proofs}.
\end{IEEEproof}

Thus, any full SNR-indexed family
\(\{\widehat{\cC}_\gamma\}_\gamma\) whose Gaussian soft-packing energy
\(\cG_\gamma(\widehat{\cC}_\gamma)\) is within an additive \(o(1)\) of
the global minimum \(V_\gamma(M,n)\) as \(\gamma\to\infty\) attains the
optimal leading coefficient \(K^{\ast}_{M,n}\), even if its codebooks do
not minimize the exact ML error at any finite SNR.

\subsection{Comparison With the Best Packing-Optimal Benchmark}
The fixed-codebook comparison can be written directly via the Gaussian soft-packing energy function as
\begin{equation}
\begin{aligned}
  K^{\ast}_{M,n}
  &=
  \lim_{\gamma\to\infty}
  \min_{\cC\in\cM}\cG_\gamma(\cC)\\
  &\le
  \min_{\cC_0\in\cP^{\ast}}
  \lim_{\gamma\to\infty}\cG_\gamma(\cC_0)
  =K^{\mathrm{fix}}_{M,n}.
\end{aligned}
\label{eq:noncommuting-energy-limits}
\end{equation}
On the left, the codebook is optimized separately at every SNR before
the high-SNR limit is taken.  On the right, a packing-optimal codebook
is first held fixed, its limiting number of ordered closest-pair indices is obtained, and
that count is then minimized over \(\cP^{\ast}\).  The first procedure
retains the path-dependent pair contributions to the optimal leading
coefficient that can arise under SNR-wise codebook optimization; the second
assigns one unit to the fixed codebook's leading coefficient for every
ordered closest-pair index of that codebook.

In fact, \eqref{eq:Pe-star-le-Pe-fix} gives
\begin{equation*}
  \Pe^{\ast}(M,n;\gamma)
  \leq\Pe^{\mathrm{fix}}(M,n;\gamma)
  \qquad\text{for every \(\gamma\)}.
\end{equation*}
Theorem~\ref{thm:global-coefficient} and
Corollary~\ref{cor:packing-optimal-benchmark} then give
\begin{equation}
\begin{aligned}
  \frac{\Pe^{\ast}(M,n;\gamma)}
       {\Pe^{\mathrm{fix}}(M,n;\gamma)}
  &\longrightarrow
  \frac{K^{\ast}_{M,n}}{K^{\mathrm{fix}}_{M,n}},\\
  \frac{\Pe^{\mathrm{fix}}(M,n;\gamma)
        -\Pe^{\ast}(M,n;\gamma)}
       {B_\gamma^{\ast}}
  &\longrightarrow
  K^{\mathrm{fix}}_{M,n}-K^{\ast}_{M,n}.
\end{aligned}
\label{eq:optimized-exact-vs-fixed-packing}
\end{equation}
Therefore, under the strict coefficient inequality,
\(K^{\ast}_{M,n}<K^{\mathrm{fix}}_{M,n}\), the packing-optimal
restriction retains a constant
multiplicative disadvantage, and the exact SNR-wise optimal error is strictly
smaller for every sufficiently large SNR.  If the coefficients are equal,
\(K^{\ast}_{M,n}=K^{\mathrm{fix}}_{M,n}\),
the best fixed packing-optimal codebook is leading-order optimal, although
lower-order differences may remain.

The limit characterization determines the optimal leading coefficient
but does not yet describe the geometry of the codebook
families that can approach it.  The next section supplies that structural
description and derives a global variational characterization.

\section{Localization and SNR-Dependent Closest-Pair Contributions}
\label{sec:geometric-reduction}

The preceding section identifies the optimal leading coefficient as the
high-SNR limit of the minimum Gaussian soft-packing energy,
\(K^{\ast}_{M,n}=\lim_{\gamma\to\infty}V_\gamma(M,n)\).
This section determines which ordered pairs can have nonvanishing normalized
contributions along a family competitive at the packing-tail scale and how
those contributions are determined.     The section concludes by minimizing these
contributions over all packing-optimal limiting codebooks and all jointly
feasible scaled closest-pair offsets.

\subsection{Localization at the Packing-Tail Scale}
\label{subsec:coefficient-attaining-localization}

Here and below, \(\operatorname{dist}(\cC,\cP^{\ast})\) is induced by
the product Euclidean metric on the labeled tuples in \(\cM\).

\begin{lemma}[Localization at the packing-tail scale]
\label{lem:coefficient-attaining-localization}
Let \(\{\cC_\gamma\}\subset\cM\) satisfy
\begin{equation}
  \frac{\Pe(\cC_\gamma;\gamma)}{B_\gamma^{\ast}}=O(1).
  \label{eq:packing-tail-competitive-family}
\end{equation}
Then
\begin{equation}
  \rho_{\max}(\cC_\gamma)-\rho^{\ast}_{M,n}
  =O((n\gamma)^{-1}),
  \label{eq:attaining-family-packing-gap}
\end{equation}
\begin{equation}
  \cG_\gamma(\cC_\gamma)=O(1),
  \qquad
  \operatorname{dist}(\cC_\gamma,\cP^{\ast})\longrightarrow0.
  \label{eq:attaining-family-Pstar-localization}
\end{equation}
If the family additionally attains the optimal leading coefficient in the
sense of \eqref{eq:central-problem-attainment} and
\(K^{\ast}_{M,n}<K^{\mathrm{fix}}_{M,n}\), then
\begin{equation}
  \cC_\gamma\notin\cP^{\ast}
  \quad\text{for all sufficiently large \(\gamma\)}.
  \label{eq:attaining-family-eventually-outside-Pstar}
\end{equation}
\end{lemma}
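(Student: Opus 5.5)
The plan is to get everything from a single-pair lower bound on the exact error, then use compactness, and finally compare with Corollary~\ref{cor:packing-optimal-benchmark}.

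\emph{Gap bound.} Let \(A\) be a bound with \(\Pe(\cC_\gamma;\gamma)/B_\gamma^{\ast}\le A\) for all \(\gamma\ge\gamma_0\). Take an ordered pair \(e=(m,\ell)\) attaining \(\rho_{\max}(\cC_\gamma)\). The error probability given \(W=m\) is at least \(\Pp(\cA_{m\ell})\); if codewords coincide, the tie-break still errs with positive probability, which I would handle the same way. Hence, by \eqref{eq:pairwise-tail},
\[
\Pe(\cC_\gamma;\gamma)\ge \tfrac1M\,\Qfun\bigl(\sqrt{n\gamma[1-\rho_{\max}(\cC_\gamma)]/2}\bigr).
\]
I split into two cases according to the size of \(1-\rho_{\max}(\cC_\gamma)\).

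First, suppose \(1-\rho_{\max}(\cC_\gamma)\le(1-\rho^{\ast}_{M,n})/2\). Then the right-hand side is at least \(\tfrac1M\Qfun\bigl(\sqrt{n\gamma(1-\rho^{\ast}_{M,n})/4}\bigr)\). Divided by \(B_\gamma^{\ast}\), this grows like \(\exp\{n\gamma(1-\rho^{\ast}_{M,n})/8\}\) up to polynomial factors. So this case is impossible for all large \(\gamma\).

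Otherwise \(1-\rho_{\max}\ge(1-\rho^{\ast}_{M,n})/2\), so the argument \(x\) of \(\Qfun\) tends to infinity uniformly. The bound \(\Qfun(x)\ge \frac{x}{1+x^2}\frac{e^{-x^2/2}}{\sqrt{2\pi}}\) then gives
\[
\frac{\Pe(\cC_\gamma;\gamma)}{B_\gamma^{\ast}}\ge c\,\exp\Bigl\{\tfrac{n\gamma}{4}\bigl(\rho_{\max}(\cC_\gamma)-\rho^{\ast}_{M,n}\bigr)\Bigr\}\sqrt{\tfrac{1-\rho^{\ast}_{M,n}}{1-\rho_{\max}(\cC_\gamma)}},
\]
with \(c>0\) independent of \(\gamma\). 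This is the same ratio as \eqref{eq:pathwise-pair-scale-ratio}. The square-root factor is at least \(\sqrt{(1-\rho^{\ast}_{M,n})/2}\), so \(\exp\{n\gamma(\rho_{\max}-\rho^{\ast}_{M,n})/4\}\le C\) for some constant \(C\). Since \(\rho_{\max}\ge\rho^{\ast}_{M,n}\) always holds, this yields \eqref{eq:attaining-family-packing-gap} with constant \(4\log C\).

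\emph{Energy bound and localization.} Every term of \(\cG_\gamma(\cC_\gamma)\) is at most \(\exp\{n\gamma(\rho_{\max}-\rho^{\ast}_{M,n})/4\}\le C\). Hence \(\cG_\gamma(\cC_\gamma)\le M(M-1)C\), which is \(O(1)\).

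For \(\operatorname{dist}(\cC_\gamma,\cP^{\ast})\to0\), I argue by contradiction. Suppose some subsequence stays at distance at least \(\eps>0\) from \(\cP^{\ast}\). By compactness of \(\cM\), pass to a further subsequence converging to some \(\cC_\infty\). The function \(\rho_{\max}\) is continuous and \(\rho_{\max}(\cC_\gamma)\to\rho^{\ast}_{M,n}\), so \(\rho_{\max}(\cC_\infty)=\rho^{\ast}_{M,n}\). Thus \(\cC_\infty\in\cP^{\ast}\), contradicting the distance \(\ge\eps\).

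\emph{Eventually outside \(\cP^{\ast}\).} Suppose \(\cC_\gamma\in\cP^{\ast}\) along some sequence \(\gamma_j\to\infty\). Then \(\Pe(\cC_{\gamma_j};\gamma_j)\ge\Pe^{\mathrm{fix}}(M,n;\gamma_j)\) by \eqref{eq:Pe-fix-def}. Corollary~\ref{cor:packing-optimal-benchmark} therefore gives \(\liminf_j \Pe(\cC_{\gamma_j};\gamma_j)/B_{\gamma_j}^{\ast}\ge K^{\mathrm{fix}}_{M,n}\). But the attainment condition \eqref{eq:central-problem-attainment} forces this limit to be \(K^{\ast}_{M,n}<K^{\mathrm{fix}}_{M,n}\), a contradiction.

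The main obstacle is the first step: obtaining a lower bound on the Gaussian tail that is uniform in the codebook. The case split above is designed to keep \(\Qfun\)'s argument bounded away from small values, so that the Mills-ratio lower bound is valid uniformly in the codebook.
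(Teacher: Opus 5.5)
Your proof is correct and follows essentially the same route as the paper. It turns a single closest-pair lower bound $\Pe\ge M^{-1}\Qfun(\cdot)$ into $\exp\{n\gamma(\rho_{\max}-\rho^{\ast}_{M,n})/4\}=O(1)$, then uses the termwise bound $\cG_\gamma\le M(M-1)\exp\{n\gamma(\rho_{\max}-\rho^{\ast}_{M,n})/4\}$, compactness with continuity of $\rho_{\max}$, and a contradiction with $K^{\ast}_{M,n}<K^{\mathrm{fix}}_{M,n}$.

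The differences are minor. The paper avoids your case split by using the global bound $\Qfun(x)\ge c\,e^{-x^2/2}/(1+x)$ for $x\ge0$. It disposes of coincident codewords up front by noting that two identical codewords force $\Pe\ge1/M$. That is the precise form of your tie-break remark, which as stated is loose: for a fixed orientation $(m,\ell)$ the tie-break might never err, so you need to pick the orientation or average over the two messages. In the last step the paper uses $\cG_\gamma\ge|\cI_{\max}|\ge K^{\mathrm{fix}}_{M,n}$ plus uniform transfer, whereas you use $\Pe\ge\Pe^{\mathrm{fix}}$ and Corollary~\ref{cor:packing-optimal-benchmark}. Your version is equally valid and not circular.
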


\begin{IEEEproof}
The proof is given in
Appendix~\ref{app:local-reduction-proofs}.
\end{IEEEproof}

Lemma~\ref{lem:coefficient-attaining-localization} shows that any
packing-tail-competitive codebook family must enter an increasingly
small neighborhood of the packing-optimal set.
Indeed, define the excess maximal correlation as
\(\delta_\gamma=
\rho_{\max}(\cC_\gamma)-\rho^{\ast}_{M,n}\).
If \(\delta_\gamma>0\), then the single-pair Gaussian-tail scale,
normalized by \(B_\gamma^{\ast}\), grows essentially as
\(
\exp\!\left\{\frac{n\gamma\delta_\gamma}{4}\right\}.
\)
Consequently, bounded normalized error is possible only if
\eqref{eq:attaining-family-packing-gap} holds.
Thus, the maximal correlation must be within
\(O((n\gamma)^{-1})\) of \(\rho^{\ast}_{M,n}\), which is the critical
correlation scale governing pairwise contributions to a family's leading
coefficient, whenever that coefficient exists.
The lemma also establishes the geometric localization
\(\operatorname{dist}(\cC_\gamma,\cP^{\ast})\longrightarrow0\).
Hence, the family may continue
to rotate, relabel its codewords, or move among different packing-optimal
configurations while its distance from \(\cP^{\ast}\) tends to
zero as the SNR grows.

The lemma also reveals the geometric mechanism by
which the optimal leading coefficient can be smaller than the best
fixed-codebook leading coefficient.  If
\(K^{\ast}_{M,n}<K^{\mathrm{fix}}_{M,n}\), then a family attaining the
optimal leading coefficient must lie outside \(\cP^{\ast}\) for every
sufficiently large \(\gamma\).  Instead, it approaches
\(\cP^{\ast}\) from outside.  Such a family accepts a vanishing increase
in its maximal correlation, confined to the \(1/(n\gamma)\) scale, in order
to reduce or suppress the contributions to \(K^{\ast}_{M,n}\) made by other
pairs that are closest in a packing-optimal subsequential limit.  Therefore, such a
family may consist of codebooks that are slightly worse as packings at
every sufficiently large finite SNR and nevertheless have a strictly smaller leading
coefficient than every fixed packing-optimal codebook.

\subsection{Reduction to Closest Pairs of a Limiting Packing}
\label{subsec:packing-contact-reduction}

Consider a convergent subsequence, reindexed by \(\gamma\), for which
\(\cC_\gamma\to\cC_\infty\in\cP^{\ast}\).  Thus
\(\cC_\infty\) is a packing-optimal subsequential limit of the family.  For each
\(e\in\cI_{\max}(\cC_\infty)\), use the signed correlation offset from
\eqref{eq:section2-contact-gap},
\begin{equation*}
  \Delta_{e,\gamma}
  =\rho_e(\cC_\gamma)-\rho^{\ast}_{M,n}.
\end{equation*}
A positive offset worsens that ordered pair relative to the packing
value, whereas a negative offset improves it.

\begin{theorem}[Reduction to closest pairs of a limiting packing]
\label{thm:packing-contact-reduction}
Let \(\cC_\gamma\to\cC_\infty\in\cP^{\ast}\), and suppose that, for
some finite \(L\geq1\),
\(\cG_\gamma(\cC_\gamma)\le L\) for all sufficiently large \(\gamma\).
Then, for all sufficiently large \(\gamma\),
\begin{equation*}
  0\le
  \rho_{\max}(\cC_\gamma)-\rho^{\ast}_{M,n}
  =\max_{e\in\cI_{\max}(\cC_\infty)}\Delta_{e,\gamma}
  \le\frac{4\log L}{n\gamma}.
\end{equation*}
Moreover, for some \(c>0\),
\begin{equation}
  \cG_\gamma(\cC_\gamma)
  =
  \sum_{e\in\cI_{\max}(\cC_\infty)}
  \exp\left\{\frac{n\gamma\Delta_{e,\gamma}}4\right\}
  +O(\exp\{-c n\gamma\}),
  \label{eq:packing-contact-energy-reduction}
\end{equation}
\begin{equation}
\begin{split}
  \frac{\Pe(\cC_\gamma;\gamma)}{B_\gamma^{\ast}}
  &={}
  \sum_{e\in\cI_{\max}(\cC_\infty)}
  \exp\left\{\frac{n\gamma\Delta_{e,\gamma}}4\right\}\\
  &\quad+O((n\gamma)^{-1}).
\end{split}
  \label{eq:packing-contact-error-reduction}
\end{equation}
\end{theorem}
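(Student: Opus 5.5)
The argument splits the ordered pairs into the limiting closest pairs \(e\in\cI_{\max}(\cC_\infty)\) and the remaining pairs \(f\in\cE_M\setminus\cI_{\max}(\cC_\infty)\). It uses a uniform positive gap for the latter, and then transfers the energy reduction to the exact error via Theorem~\ref{thm:uniform-soft-packing-transfer}.

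\emph{Step 1: the non-closest pairs.} Since \(\cC_\infty\in\cP^{\ast}\), every \(f\notin\cI_{\max}(\cC_\infty)\) satisfies \(\rho_f(\cC_\infty)<\rho^{\ast}_{M,n}\). The set \(\cE_M\) is finite, so
\[
\eta:=\min_{f\notin\cI_{\max}(\cC_\infty)}\bigl(\rho^{\ast}_{M,n}-\rho_f(\cC_\infty)\bigr)>0 .
\]
The correlations \(\rho_f\) are continuous and \(\cC_\gamma\to\cC_\infty\). Hence, for all sufficiently large \(\gamma\), every such \(f\) has \(\rho_f(\cC_\gamma)\le\rho^{\ast}_{M,n}-\eta/2\). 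Each such term of \(\cG_\gamma(\cC_\gamma)\) is then at most \(\exp\{-n\gamma\eta/8\}\), and summing at most \(M(M-1)\) of them gives \eqref{eq:packing-contact-energy-reduction} with \(c=\eta/8\) (any smaller \(c\) absorbs the constant \(M(M-1)\)).

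\emph{Step 2: the maximal correlation.} By \eqref{eq:bounded-energy-packing-consequence} and the hypothesis \(\cG_\gamma(\cC_\gamma)\le L\), we have
\[
0\le\rho_{\max}(\cC_\gamma)-\rho^{\ast}_{M,n}\le \frac{4\log L}{n\gamma}.
\]
The lower bound is just the definition of \(\rho^{\ast}_{M,n}\). For large \(\gamma\), every non-closest pair lies below \(\rho^{\ast}_{M,n}-\eta/2<\rho^{\ast}_{M,n}\le\rho_{\max}(\cC_\gamma)\). Therefore the maximum over \(\cE_M\) is attained only on \(\cI_{\max}(\cC_\infty)\), which gives
\[
\rho_{\max}(\cC_\gamma)-\rho^{\ast}_{M,n}=\max_{e\in\cI_{\max}(\cC_\infty)}\Delta_{e,\gamma}.
\]
This proves the displayed chain.

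\emph{Step 3: the error reduction.} Apply Theorem~\ref{thm:uniform-soft-packing-transfer} with the given \(L\). For \(\gamma\ge\gamma_L\) it yields
\[
\bigl|\Pe(\cC_\gamma;\gamma)/B_\gamma^{\ast}-\cG_\gamma(\cC_\gamma)\bigr|\le C_L/(n\gamma).
\]
Combining this with \eqref{eq:packing-contact-energy-reduction}, the exponentially small remainder \(O(\exp\{-cn\gamma\})\) is absorbed into \(O((n\gamma)^{-1})\), which gives \eqref{eq:packing-contact-error-reduction}.

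The only genuinely delicate point is Step~1. There the gap \(\eta\) must be taken from the limit \(\cC_\infty\) rather than from \(\cC_\gamma\), so that it is independent of \(\gamma\); the convergence \(\cC_\gamma\to\cC_\infty\) then supplies the eventual uniform separation. All the heavy lifting for the exact error is already contained in Theorem~\ref{thm:uniform-soft-packing-transfer}. Consequently, the implied constants depend on the particular sequence through \(\eta\) and the threshold in \(\gamma\), which the statement permits.
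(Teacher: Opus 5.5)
Your proposal is correct and follows essentially the same route as the paper. A uniform correlation gap for pairs outside \(\cI_{\max}(\cC_\infty)\), taken from the limit and propagated by convergence, places \(\rho_{\max}(\cC_\gamma)\) on the limiting closest pairs and makes the remaining energy terms exponentially small; uniform transfer (Theorem~\ref{thm:uniform-soft-packing-transfer}) then absorbs that remainder into \(O((n\gamma)^{-1})\). The only detail you leave implicit, which the paper states explicitly, is the vacuous case \(\cI_{\max}(\cC_\infty)=\cE_M\): there \(\eta\) is undefined, but the remainder is identically zero.
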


\begin{IEEEproof}
The proof is given in
Appendix~\ref{app:local-reduction-proofs}.
\end{IEEEproof}

Theorem~\ref{thm:packing-contact-reduction} shows that, for a
packing-tail-competitive codebook family that converges to an optimal
packing, not all ordered codeword pairs can contribute to the normalized
error \(\Pe(\cC_\gamma;\gamma)/B_\gamma^{\ast}\).  The limiting packing
restricts the ordered pairs that can contribute to the normalized error to
those that are closest pairs in the limiting optimal packing.  Indeed, if an
ordered pair \(e\) is not in
\(\cI_{\max}(\cC_\infty)\),
then, for some fixed \(\eta_e>0\),
\(\rho_e(\cC_\infty)=\rho^{\ast}_{M,n}-\eta_e\).
Because \(\cC_\gamma\to\cC_\infty\), this pair retains a
strict correlation gap for all sufficiently large \(\gamma\).  Therefore,
its Gaussian soft-packing energy term is bounded by an exponentially vanishing quantity
of the form \(\exp\{-n\gamma\eta_e/8\}\), and its contribution to the normalized error vanishes.
For an ordered pair that is closest in the limiting packing, i.e., \(e \in
\cI_{\max}(\cC_\infty)\), by contrast,
\(\rho_e(\cC_\infty)=\rho^{\ast}_{M,n}\),
so there is no fixed correlation gap that makes its contribution to the
normalized error vanish.  Instead, its finite-SNR displacement
\(\Delta_{e,\gamma}=
\rho_e(\cC_\gamma)-\rho^{\ast}_{M,n}\)
enters the normalized error through the term
\(\exp\{n\gamma\Delta_{e,\gamma}/4\}\).
Consequently, the theorem reduces the normalized error to
\eqref{eq:packing-contact-error-reduction}.

Ordinary geometric convergence determines the set of potentially relevant
closest pairs but does not determine each contribution in this sum.  It
gives only \(\Delta_{e,\gamma}\to0\), whereas the limiting normalized error
along a subsequence depends on the finer quantities
\(n\gamma\Delta_{e,\gamma}\).  Thus, different SNR-dependent approaches to
the same optimal packing can yield different subsequential limits of the
normalized error.  The limiting codebook is the common destination,
whereas the approach to that destination on the \(1/(n\gamma)\) correlation
scale determines each limiting closest pair's normalized contribution along
the selected subsequence.  The following corollary makes these contributions
explicit when the scaled offsets have limits.

\begin{corollary}[Subsequential closest-pair contribution formula]
\label{cor:subsequential-weighted-contact}
Under the assumptions of
Theorem~\ref{thm:packing-contact-reduction}, since
\(\cI_{\max}(\cC_\infty)\) is finite and a bounded Gaussian soft-packing energy
precludes any scaled offset from diverging to \(+\infty\), there exists
a further subsequence, without changing notation, such that
\begin{equation}
  n\gamma\Delta_{e,\gamma}\longrightarrow\lambda_e,
  \qquad
  e\in\cI_{\max}(\cC_\infty),
  \label{eq:weighted-contact-lambda-limit}
\end{equation}
simultaneously for some
\(\lambda_e\in\R\cup\{-\infty\}\), where
\(\exp\{-\infty\}=0\).  Along this subsequence,
\begin{align}
  \cG_\gamma(\cC_\gamma)
  &\longrightarrow
  \sum_{e\in\cI_{\max}(\cC_\infty)}
  \exp\{\lambda_e/4\},
  \label{eq:weighted-contact-energy-limit}\\
  \frac{\Pe(\cC_\gamma;\gamma)}{B_\gamma^{\ast}}
  &\longrightarrow
  \sum_{e\in\cI_{\max}(\cC_\infty)}
  \exp\{\lambda_e/4\}.
  \label{eq:weighted-contact-error-limit}
\end{align}

If the family attains the optimal leading coefficient in the sense of
\eqref{eq:central-problem-attainment},
then
\[
  K^{\ast}_{M,n}
  =
  \sum_{e\in\cI_{\max}(\cC_\infty)}
  \exp\{\lambda_e/4\}.
\]

For the constant family
\(\cC_\gamma\equiv\cC_\infty\), every
\(e\in\cI_{\max}(\cC_\infty)\) has \(\Delta_{e,\gamma}=0\) and
therefore contributes one to both limiting sums.  Hence,
\[
  \lim_{\gamma\to\infty}\cG_\gamma(\cC_\infty)
  =
  \lim_{\gamma\to\infty}
  \frac{\Pe(\cC_\infty;\gamma)}{B_\gamma^{\ast}}
  =
  \bigl|\cI_{\max}(\cC_\infty)\bigr|
  \geq K^{\mathrm{fix}}_{M,n}.
\]
Thus, a fixed packing-optimal codebook has a leading coefficient equal to
its number of ordered closest-pair indices; this coefficient equals
\(K^{\ast}_{M,n}\) precisely when that constant family attains the optimal
leading coefficient.
\end{corollary}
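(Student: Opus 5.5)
The plan is to derive the corollary directly from Theorem~\ref{thm:packing-contact-reduction} by a compactness argument on the finitely many scaled offsets. Set \(x_{e,\gamma}:=n\gamma\Delta_{e,\gamma}\) for \(e\in\cI_{\max}(\cC_\infty)\). I will first bound these offsets from above. The first display of Theorem~\ref{thm:packing-contact-reduction} gives \(\max_e\Delta_{e,\gamma}\le 4\log L/(n\gamma)\), so \(x_{e,\gamma}\le 4\log L\) for all large \(\gamma\). Each sequence \(\{x_{e,\gamma}\}\) therefore lies in the compact set \([-\infty,4\log L]\) of the extended reals. Since \(\cI_{\max}(\cC_\infty)\) is finite, I will extract subsequences successively, one index \(e\) at a time. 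This produces a single further subsequence along which \(x_{e,\gamma}\to\lambda_e\in\R\cup\{-\infty\}\) for every \(e\) simultaneously, which is \eqref{eq:weighted-contact-lambda-limit}.

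Next, I will use continuity of \(t\mapsto\exp\{t/4\}\) on \([-\infty,\infty)\), with the convention \(\exp\{-\infty\}=0\). This gives \(\exp\{x_{e,\gamma}/4\}\to\exp\{\lambda_e/4\}\) for each \(e\), and hence the finite sum converges. Substituting this into \eqref{eq:packing-contact-energy-reduction} gives \eqref{eq:weighted-contact-energy-limit}, because the \(O(\exp\{-cn\gamma\})\) remainder vanishes. Substituting it into \eqref{eq:packing-contact-error-reduction} gives \eqref{eq:weighted-contact-error-limit}, because the \(O((n\gamma)^{-1})\) remainder vanishes. For the attaining case, \eqref{eq:central-problem-attainment} says that the full family satisfies \(\Pe(\cC_\gamma;\gamma)/B_\gamma^{\ast}\to K^{\ast}_{M,n}\). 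Every subsequence inherits this limit, and uniqueness of limits then identifies \(K^{\ast}_{M,n}\) with \(\sum_e\exp\{\lambda_e/4\}\).

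For the constant family \(\cC_\gamma\equiv\cC_\infty\in\cP^{\ast}\), the hypothesis \(\cG_\gamma\le L\) holds with \(L=M(M-1)\), because all offsets are nonpositive; so the theorem applies. Here \(\Delta_{e,\gamma}=0\) for every \(e\in\cI_{\max}(\cC_\infty)\), so \(\lambda_e=0\) and each closest-pair index contributes one. No subsequence extraction is needed, since the sequences are constant. Both limits then equal \(|\cI_{\max}(\cC_\infty)|\), which is at least \(K^{\mathrm{fix}}_{M,n}\) by \eqref{eq:fixed-packing-coefficient}, in agreement with \eqref{eq:soft-packing-fixed-limit} and \eqref{eq:fixed-packing-normalized-limit}. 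For the final sentence, note that the constant family's leading coefficient is \(|\cI_{\max}(\cC_\infty)|\). By Theorem~\ref{thm:global-coefficient} and the definition \eqref{eq:central-problem-attainment}, the family attains the optimal coefficient if and only if this number equals \(K^{\ast}_{M,n}\). The ratio condition in \eqref{eq:central-problem-attainment} then follows automatically from \(\Pe^{\ast}/B_\gamma^{\ast}\to K^{\ast}_{M,n}\ge2>0\).

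I do not expect any step to be a real obstacle. The only points requiring care are handling the value \(-\infty\) in the extended-real compactness argument, and remembering that the \(o(1)\) remainders of Theorem~\ref{thm:packing-contact-reduction} are uniform along the sequence. This uniformity is what allows the limit and the finite sum to be interchanged.
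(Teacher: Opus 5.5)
Your proposal is correct and follows the paper's proof essentially step for step. The shared steps are the bound $n\gamma\Delta_{e,\gamma}\le 4\log L$, compactness of $[-\infty,4\log L]$ over the finite index set, substitution into the two reduction formulas of Theorem~\ref{thm:packing-contact-reduction}, uniqueness of limits for the attaining case, and $\Delta_{e,\gamma}\equiv0$ together with \eqref{eq:fixed-packing-coefficient} for the constant family. One minor correction: interchanging a limit with a finite sum needs no uniformity. What the uniformity of Theorem~\ref{thm:uniform-soft-packing-transfer} actually does is make the $O((n\gamma)^{-1})$ remainder valid along an SNR-varying sequence at all.
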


\begin{IEEEproof}
Equations~\eqref{eq:packing-contact-energy-reduction} and
\eqref{eq:packing-contact-error-reduction} give the two displayed
limits.  Bounded Gaussian soft-packing energy gives
\(n\gamma\Delta_{e,\gamma}\le4\log L\) for every
\(e\in\cI_{\max}(\cC_\infty)\).
The extended interval \([{-\infty},4\log L]\) is compact, and the
set \(\cI_{\max}(\cC_\infty)\) is finite, so a common convergent
subsequence exists.  The attainment conclusion follows from
\eqref{eq:central-problem-attainment}.  The constant-family conclusion
follows from \(\Delta_{e,\gamma}\equiv0\) and
\eqref{eq:fixed-packing-coefficient}.
\end{IEEEproof}

Corollary~\ref{cor:subsequential-weighted-contact} converts the scaled
correlation offset of each limiting closest pair into its contribution
\(\exp\{\lambda_e/4\}\) to the limiting normalized error along that
subsequence.  If the full family has a leading coefficient, these
contributions sum to that coefficient.  If the family attains the optimal
leading coefficient, their sum is \(K^{\ast}_{M,n}\).

A fixed packing-optimal codebook provides the simplest special case.  If
\(\cC_\gamma=\cC_\infty\), then every ordered closest-pair index has
zero offset, so that \(\lambda_e=0\) and \(\exp\{\lambda_e/4\}=1\).
Its leading coefficient is therefore the number of ordered closest-pair
indices.  For an SNR-dependent family, however, ordinary convergence
still makes every such offset tend to zero, but multiplication by
\(n\gamma\) reveals differences that remain visible after normalization
by \(B_\gamma^{\ast}\).  In particular, if
\(n\gamma\Delta_{e,\gamma}\longrightarrow\lambda_e\), the limiting
pair contribution has one of the following forms:
\[
\begin{array}{c|c}
\text{scaled-offset behavior} & \text{limiting pair contribution}\\
\hline
\lambda_e=-\infty
    & \exp\{\lambda_e/4\}=0,\\
-\infty<\lambda_e<0
    & 0<\exp\{\lambda_e/4\}<1,\\
\lambda_e=0
    & \exp\{\lambda_e/4\}=1,\\
\lambda_e>0
    & \exp\{\lambda_e/4\}>1.
\end{array}
\]
For a family attaining the optimal leading coefficient, these are the corresponding
contributions to \(K^{\ast}_{M,n}\).
Consequently, a pair that is closest in the limiting codebook may have a
zero, fractional, unit, or greater-than-unit contribution, depending on the
\(1/(n\gamma)\)-scale path by which the codebook approaches the limiting
packing.

\subsection{Global Variational Characterization by Scaled Closest-Pair Offsets}
\label{subsec:weighted-contact-variational-characterization}

Corollary~\ref{cor:subsequential-weighted-contact} evaluates one specified
SNR-dependent family.  To characterize the global optimum, the feasible
subsequential scaled offsets must be considered jointly over every
packing-optimal limit.

For \(\cC_0\in\cP^{\ast}\), let
\(\mathfrak L(\cC_0)\) denote the set of all vectors
\begin{equation*}
  \bm\lambda
  =(\lambda_e)_{e\in\cI_{\max}(\cC_0)}
  \in
  \bigl(\R\cup\{-\infty\}\bigr)^{\cI_{\max}(\cC_0)}
\end{equation*}
for which there exist SNRs \(\gamma_j\to\infty\) and codebooks
\(\cC_j\in\cM\) satisfying
\begin{equation}
\begin{aligned}
  \cC_j&\longrightarrow\cC_0,
  &\sup_j\cG_{\gamma_j}(\cC_j)&<\infty,\\
  n\gamma_j\bigl(\rho_e(\cC_j)-\rho^{\ast}_{M,n}\bigr)
  &\longrightarrow\lambda_e,
  &e&\in\cI_{\max}(\cC_0).
\end{aligned}
\label{eq:admissible-contact-limit-set}
\end{equation}
We call \(\bm\lambda\) admissible scaled closest-pair-offset data at
\(\cC_0\).  The bounded Gaussian soft-packing energy condition excludes \(+\infty\) as a
component limit, whereas \(-\infty\) is retained and gives the limiting
pair weight \(\exp\{-\infty\}=0\).  The constant sequence
\(\cC_j\equiv\cC_0\) shows that \(\mathfrak L(\cC_0)\) is nonempty.

\begin{proposition}[Global variational characterization by scaled closest-pair offsets]
\label{prop:weighted-contact-variational-characterization}
For every fixed finite \((M,n)\) with \(M\ge2\) and
\(\rho^{\ast}_{M,n}<1\),
\begin{equation}
  K^{\ast}_{M,n}
  =
  \min_{\substack{
    \cC_0\in\cP^{\ast}\\
    \bm\lambda\in\mathfrak L(\cC_0)}}
  \sum_{e\in\cI_{\max}(\cC_0)}
  \exp\{\lambda_e/4\}.
  \label{eq:weighted-contact-variational-characterization}
\end{equation}
The minimum in \eqref{eq:weighted-contact-variational-characterization}
is attained.  Moreover, every
sequence witnessing a minimizing pair \((\cC_0,\bm\lambda)\) satisfies,
along its SNR sequence,
\begin{equation*}
  \cG_{\gamma_j}(\cC_j)\longrightarrow K^{\ast}_{M,n},
  \qquad
  \frac{\Pe(\cC_j;\gamma_j)}{B_{\gamma_j}^{\ast}}
  \longrightarrow K^{\ast}_{M,n},
  \qquad
  \frac{\Pe(\cC_j;\gamma_j)}
  {\Pe^{\ast}(M,n;\gamma_j)}\longrightarrow1.
\end{equation*}
\end{proposition}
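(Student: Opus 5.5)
The plan is to prove two inequalities between $K^{\ast}_{M,n}$ and the infimum
\[
  J:=\inf_{\cC_0\in\cP^{\ast},\ \bm\lambda\in\mathfrak L(\cC_0)}
  \sum_{e\in\cI_{\max}(\cC_0)}\exp\{\lambda_e/4\},
\]
and then to read off attainment and the witness-sequence statements.

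\emph{Lower bound $J\ge K^{\ast}_{M,n}$.} Fix $\cC_0\in\cP^{\ast}$, admissible data $\bm\lambda\in\mathfrak L(\cC_0)$, and a witnessing sequence $(\gamma_j,\cC_j)$ with $\sup_j\cG_{\gamma_j}(\cC_j)\le L$. Theorem~\ref{thm:packing-contact-reduction} is stated along a continuous parameter, but its proof only uses $\cC_j\to\cC_0$ and the energy bound, so I will apply it along the sequence $\gamma_j$. This gives
\[
  \cG_{\gamma_j}(\cC_j)\longrightarrow\sum_{e\in\cI_{\max}(\cC_0)}\exp\{\lambda_e/4\},
\]
where components with $\lambda_e=-\infty$ contribute zero. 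On the other hand, $\cG_{\gamma_j}(\cC_j)\ge V_{\gamma_j}(M,n)$, and $V_{\gamma_j}(M,n)\to K^{\ast}_{M,n}$ by Theorem~\ref{thm:global-coefficient}. Hence every feasible value is at least $K^{\ast}_{M,n}$, so $J\ge K^{\ast}_{M,n}$.

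\emph{Upper bound and attainment.} Take exact minimizers $\widehat\cC_\gamma$ of $\cG_\gamma$, so that $\cG_\gamma(\widehat\cC_\gamma)=V_\gamma(M,n)$; these exist by compactness. Since $V_\gamma(M,n)$ converges, the energies are bounded by some $L$ for large $\gamma$. Pick any $\gamma_j\to\infty$. Using the bounded-energy consequence \eqref{eq:bounded-energy-packing-consequence} together with compactness of $\cM$, extract a subsequence with $\widehat\cC_{\gamma_j}\to\cC_\infty$. I will check that $\cC_\infty\in\cP^{\ast}$: continuity of $\rho_{\max}$ and $\rho_{\max}(\widehat\cC_{\gamma_j})-\rho^{\ast}_{M,n}\le 4\log L/(n\gamma_j)\to0$ give $\rho_{\max}(\cC_\infty)=\rho^{\ast}_{M,n}$. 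Corollary~\ref{cor:subsequential-weighted-contact} then yields a further subsequence along which the scaled offsets $n\gamma_j\Delta_{e,\gamma_j}$ converge to some $\lambda_e\in\R\cup\{-\infty\}$. This sequence witnesses $\bm\lambda\in\mathfrak L(\cC_\infty)$, and the energy sum equals $\lim V_{\gamma_j}=K^{\ast}_{M,n}$. So $J\le K^{\ast}_{M,n}$, equality holds, and the infimum is a minimum attained at $(\cC_\infty,\bm\lambda)$.

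\emph{Witness sequences.} For any sequence witnessing a minimizing pair, the displayed limit from the lower-bound step gives $\cG_{\gamma_j}(\cC_j)\to K^{\ast}_{M,n}$. Because the energies are bounded, Theorem~\ref{thm:uniform-soft-packing-transfer} gives $\Pe(\cC_j;\gamma_j)/B^{\ast}_{\gamma_j}=\cG_{\gamma_j}(\cC_j)+O(1/\gamma_j)$, so this ratio also tends to $K^{\ast}_{M,n}$. Theorem~\ref{thm:global-coefficient} gives $\Pe^{\ast}(M,n;\gamma_j)/B^{\ast}_{\gamma_j}\to K^{\ast}_{M,n}$, and $K^{\ast}_{M,n}\ge2>0$ by \eqref{eq:Kstar-elementary-bounds}. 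Dividing the two ratios gives $\Pe(\cC_j;\gamma_j)/\Pe^{\ast}(M,n;\gamma_j)\to1$.

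The main obstacle is the sequential, rather than continuous-$\gamma$, use of Theorem~\ref{thm:packing-contact-reduction}. I expect its $O(\exp\{-cn\gamma\})$ remainder to depend only on the gap $\min_{e\notin\cI_{\max}(\cC_0)}(\rho^{\ast}_{M,n}-\rho_e(\cC_0))$ and on $\cC_j$ being eventually close to $\cC_0$, so the argument should transfer verbatim. The only care needed is the extended-real bookkeeping for components with $\lambda_e=-\infty$.
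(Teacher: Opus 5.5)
Your proposal is correct and follows the paper's proof essentially step for step. The lower bound comes from the energy reduction \eqref{eq:packing-contact-energy-reduction} along the witnessing sequence, combined with $V_{\gamma_j}(M,n)\le\cG_{\gamma_j}(\cC_j)$ and $V_{\gamma_j}(M,n)\to K^{\ast}_{M,n}$. The reverse inequality and attainment come from extracting a convergent subsequence of exact soft-packing minimizers and applying Corollary~\ref{cor:subsequential-weighted-contact}. The witness-sequence limits come from Theorem~\ref{thm:uniform-soft-packing-transfer} together with $K^{\ast}_{M,n}\ge2$. The sequential use of Theorem~\ref{thm:packing-contact-reduction} that you flag is exactly what the paper does implicitly, and your justification for it is the right one: that proof uses only convergence to $\cC_0$, the fixed gap of the non-closest pairs, and the energy bound.
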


\begin{IEEEproof}
The proof is given in
Appendix~\ref{app:local-reduction-proofs}.
\end{IEEEproof}

The proposition minimizes over every packing-optimal limiting codebook and
every jointly feasible set of scaled offsets.  The limiting packing fixes
the ordered pairs in the sum, while the offsets determine their limiting
weights.  A constant packing has \(\lambda_e=0\) at every ordered
closest-pair index and therefore recovers the ordinary closest-pair count.
The joint-feasibility condition is essential: the scaled offsets must
arise simultaneously from one feasible codebook sequence and therefore
cannot be optimized independently.  The characterization of
\(K^{\ast}_{M,n}\) is exact but generally implicit because the admissible
sets \(\mathfrak L(\cC_0)\) need not admit tractable finite-dimensional
descriptions.

\section{Leading-Coefficient Extraction by Matching Bounds}
\label{sec:optimization-attainment-extraction}

The preceding sections establish the existence of \(K^{\ast}_{M,n}\)
and characterize the closest-pair contributions that can produce it.  To
determine \(K^{\ast}_{M,n}\) in closed form, the applications below use a
direct matching-bounds argument.  Specifically, since
\begin{equation*}
  K^{\ast}_{M,n}=\lim_{\gamma\to\infty}V_\gamma(M,n),
\end{equation*}
it is enough to identify a candidate value \(K\), prove the global lower
bound \(\liminf_{\gamma\to\infty}V_\gamma(M,n)\geq K\), and construct a
full SNR-indexed family whose Gaussian soft-packing energy has limiting
upper bound at most \(K\).  The following corollary then identifies \(K\)
as the optimal leading coefficient, transfers the equality to the exact
SNR-wise optimized ML error, and certifies the constructed family.

\begin{corollary}[Optimal leading-coefficient extraction by matching bounds]
\label{cor:coefficient-extraction}
Fix a finite pair \((M,n)\) with \(M\ge2\) and
\(\rho^{\ast}_{M,n}<1\), and let \(K\in[2,M(M-1)]\).  Suppose that
\begin{equation}
  \liminf_{\gamma\to\infty}V_\gamma(M,n)\ge K,
  \label{eq:coefficient-extraction-lower}
\end{equation}
and that there is an SNR-indexed family, defined for every sufficiently
large \(\gamma\),
\(\{\cC_\gamma\}\subset\cM\) satisfying
\begin{equation}
  \limsup_{\gamma\to\infty}
  \cG_\gamma(\cC_\gamma)\le K.
  \label{eq:coefficient-extraction-upper}
\end{equation}
Then
\begin{equation}
  K^{\ast}_{M,n}=K,
  \qquad
  \Pe^{\ast}(M,n;\gamma)
  =B_\gamma^{\ast}\bigl(K+o(1)\bigr),
  \label{eq:coefficient-extraction-conclusion}
\end{equation}
and the constructed family satisfies
\begin{equation}
  \Pe(\cC_\gamma;\gamma)
  =B_\gamma^{\ast}\bigl(K+o(1)\bigr),
  \qquad
  \frac{\Pe(\cC_\gamma;\gamma)}
  {\Pe^{\ast}(M,n;\gamma)}\to1.
  \label{eq:coefficient-extraction-attainment}
\end{equation}
\end{corollary}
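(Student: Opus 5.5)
The argument is a sandwich built from Theorem~\ref{thm:global-coefficient} and Corollary~\ref{cor:soft-energy-minimizers-attain}.

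\emph{Step 1: identify the limit.} For every \(\gamma\) at which \(\cC_\gamma\) is defined, minimality of \(V_\gamma\) gives \(V_\gamma(M,n)\le\cG_\gamma(\cC_\gamma)\). Combining this with \eqref{eq:coefficient-extraction-lower} and \eqref{eq:coefficient-extraction-upper} yields
\[
K\le\liminf_{\gamma\to\infty}V_\gamma(M,n)\le\limsup_{\gamma\to\infty}V_\gamma(M,n)\le\limsup_{\gamma\to\infty}\cG_\gamma(\cC_\gamma)\le K.
\]
Hence \(V_\gamma(M,n)\to K\). Theorem~\ref{thm:global-coefficient} asserts that \(V_\gamma(M,n)\) converges to \(K^{\ast}_{M,n}\), and that \(\Pe^{\ast}(M,n;\gamma)/B_\gamma^{\ast}\) has the same limit. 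Therefore \(K^{\ast}_{M,n}=K\) and \(\Pe^{\ast}(M,n;\gamma)=B_\gamma^{\ast}(K+o(1))\), which is \eqref{eq:coefficient-extraction-conclusion}.

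\emph{Step 2: the constructed family is a near-minimizer.} The same chain also forces \(\cG_\gamma(\cC_\gamma)\to K\), since \(\liminf_\gamma\cG_\gamma(\cC_\gamma)\ge\liminf_\gamma V_\gamma(M,n)\ge K\). Set \(\eps_\gamma:=\cG_\gamma(\cC_\gamma)-V_\gamma(M,n)\). This quantity is nonnegative and tends to \(K-K=0\), so \eqref{eq:soft-energy-near-minimizer-gap} holds for all sufficiently large \(\gamma\).

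\emph{Step 3: transfer to exact error.} Corollary~\ref{cor:soft-energy-minimizers-attain} now applies and gives \(\Pe(\cC_\gamma;\gamma)/B_\gamma^{\ast}\to K^{\ast}_{M,n}=K\) and \(\Pe(\cC_\gamma;\gamma)/\Pe^{\ast}(M,n;\gamma)\to1\), which is \eqref{eq:coefficient-extraction-attainment}.

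One point needs care: the family is defined only for sufficiently large \(\gamma\), while the corollary is stated for a family on all of \(\cM\)-indexed SNRs. I would handle this in either of two ways. The first is to extend the family by arbitrary minimizers of \(\cG_\gamma\) for small \(\gamma\); this changes nothing asymptotically. The second avoids the corollary altogether: since \(\cG_\gamma(\cC_\gamma)\le K+1\) eventually, Theorem~\ref{thm:uniform-soft-packing-transfer} with \(L=K+1\) directly gives \(|\Pe(\cC_\gamma;\gamma)/B_\gamma^{\ast}-\cG_\gamma(\cC_\gamma)|\le C_L/(n\gamma)\to0\). The ratio statement then follows from \(K\ge2>0\) together with Step 1.

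Beyond this, everything reduces to limit bookkeeping.
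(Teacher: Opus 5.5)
Your proof is correct and follows the paper's argument. The paper uses the same sandwich $K\le\liminf V_\gamma(M,n)\le\limsup V_\gamma(M,n)\le\limsup\cG_\gamma(\cC_\gamma)\le K$ to get $K^{\ast}_{M,n}=K$ via Theorem~\ref{thm:global-coefficient} and $\cG_\gamma(\cC_\gamma)\to K$, and then transfers to exact error directly through Theorem~\ref{thm:uniform-soft-packing-transfer} on a fixed bounded-energy sublevel, exactly as in your second option; routing Step~3 through Corollary~\ref{cor:soft-energy-minimizers-attain} instead is an equivalent repackaging, since that corollary is itself proved by the same uniform transfer.
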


\begin{IEEEproof}
Since \(V_\gamma(M,n)\le\cG_\gamma(\cC_\gamma)\),
\begin{align*}
  K
  &\le\liminf_{\gamma\to\infty}V_\gamma(M,n)\\
  &\le\limsup_{\gamma\to\infty}V_\gamma(M,n)\\
  &\le\limsup_{\gamma\to\infty}\cG_\gamma(\cC_\gamma)\\
  &\le K.
\end{align*}
Thus \(V_\gamma(M,n)\to K\), and
Theorem~\ref{thm:global-coefficient} gives
\(K^{\ast}_{M,n}=K\) and the optimal leading-order expansion.  Moreover,
\(V_\gamma(M,n)\le\cG_\gamma(\cC_\gamma)\) gives
\(\liminf_{\gamma\to\infty}\cG_\gamma(\cC_\gamma)\ge K\), while
\eqref{eq:coefficient-extraction-upper} gives the reverse upper limit.
Hence \(\cG_\gamma(\cC_\gamma)\to K\).  This convergence places the
family in a fixed bounded-energy sublevel of \(\cG_\gamma\), so
Theorem~\ref{thm:uniform-soft-packing-transfer} yields
\(\Pe(\cC_\gamma;\gamma)/B_\gamma^{\ast}=K+o(1)\).  Together with the
optimal leading-order expansion and \(K\ge2\), this also gives the relative-error
limit in \eqref{eq:coefficient-extraction-attainment}.
\end{IEEEproof}

The lower bound in \eqref{eq:coefficient-extraction-lower} is the
converse: it controls every SNR-dependent codebook family through the
global minimum \(V_\gamma(M,n)\).  The upper bound in
\eqref{eq:coefficient-extraction-upper} is achievability: one feasible
family reaches the same value.  Once the two bounds match,
Corollary~\ref{cor:coefficient-extraction} and the uniform transfer in
Theorem~\ref{thm:uniform-soft-packing-transfer} give both the exact
SNR-wise optimized ML expansion and attainment of the optimal leading
coefficient.   

Accordingly, determining the optimal leading coefficient \(K^{\ast}_{M,n}\) by matching
bounds consists of three steps:
\begin{enumerate}
\item Determine the optimal packing correlation \(\rho^{\ast}_{M,n}\),
which fixes the universal tail scale \(B_\gamma^{\ast}\), and identify a
candidate value \(K\).

\item Prove the global converse
\begin{equation*}
  \liminf_{\gamma\to\infty}V_\gamma(M,n)\ge K.
\end{equation*}
This may follow from a direct inequality valid for every feasible
codebook or from localization followed by a problem-specific geometric
argument.

\item Construct a feasible codebook family, defined for every
sufficiently large \(\gamma\), such that
\begin{equation*}
  \limsup_{\gamma\to\infty}
  \cG_\gamma(\cC_\gamma)\le K.
\end{equation*}
Corollary~\ref{cor:coefficient-extraction} then proves
\(K^{\ast}_{M,n}=K\), the exact SNR-wise optimized ML expansion, and relative-error
attainment of the constructed family.
\end{enumerate}

The next two sections apply this three-step matching-bounds framework to
the simplex and orthoplex-bound ranges.

\section{Application to the Simplex Codes: Exact Optimal Leading Coefficients}
\label{sec:simplex-regime}

This section applies the matching-bounds framework to
\(2\leq M\leq n+1\).  The simplex bound in \cite{Rankin1955} fixes the packing scale,
and a global Jensen inequality solves the soft-packing energy optimization at
every positive SNR.  The resulting optimal leading coefficient is determined exactly, is attained by
a fixed regular simplex, and cannot be improved by SNR-dependent motion.

Fix integers \(n\ge1\) and \(2\leq M\leq n+1\), and let
\(\cC_{\mathrm{simp}}\) be an embedded regular \((M-1)\)-simplex in
\(\Sph^{n-1}\).  This bound gives
\begin{equation}
  \rho^{\ast}_{M,n}=-\frac1{M-1},
  \label{eq:simplex-packing-correlation}
\end{equation}
with equality only for an embedded regular simplex, up to an orthogonal
transformation and relabeling \cite{Rankin1955}.  Substitution in
\eqref{eq:Bstar} fixes the packing-tail scale as
\begin{equation}
  B_\gamma^{\ast}
  =\frac{\exp\{-n\gamma M/[4(M-1)]\}}
  {M\sqrt{\pi n\gamma M/(M-1)}}.
  \label{eq:simplex-packing-scale}
\end{equation}
Here the finite-SNR Gaussian soft-packing energy optimization can be solved exactly.  The
following proposition gives the global minimum and identifies all
minimizers.

\begin{proposition}[Exact soft-packing energy minimum in the simplex range]
\label{prop:simplex-energy-minimum}
For all integers \(n\geq1\) and \(2\leq M\leq n+1\), and every
\(\gamma>0\),
\begin{equation}
  V_\gamma(M,n)
  =\cG_\gamma(\cC_{\mathrm{simp}})
  =M(M-1).
  \label{eq:simplex-energy-minimum}
\end{equation}
Moreover, the minimizers are exactly the embedded regular
\((M-1)\)-simplices, up to orthogonal transformations and relabeling.
\end{proposition}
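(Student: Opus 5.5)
The plan is a direct Jensen argument on the energy, which the section introduction already signals. Write \(a=n\gamma/4>0\) and \(\rho^{\ast}=-1/(M-1)\). The exponential \(x\mapsto\exp\{a x\}\) is convex, so for every \(\cC\in\cM\) I will apply Jensen's inequality to the average over the \(M(M-1)\) ordered pairs:
\begin{equation*}
  \cG_\gamma(\cC)
  =\sum_{e\in\cE_M}\exp\{a(\rho_e(\cC)-\rho^{\ast})\}
  \ge M(M-1)\exp\Bigl\{a\Bigl(\bar\rho(\cC)-\rho^{\ast}\Bigr)\Bigr\},
\end{equation*}
where \(\bar\rho(\cC)=\frac{1}{M(M-1)}\sum_{e\in\cE_M}\rho_e(\cC)\) is the mean pairwise correlation.

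The next step is a lower bound on \(\bar\rho\), obtained from the sum of the codewords:
\begin{equation*}
  0\le\Bigl\|\sum_{m=1}^M\bc_m\Bigr\|^2
  =M+\sum_{e\in\cE_M}\rho_e(\cC)
  =M+M(M-1)\bar\rho(\cC).
\end{equation*}
This gives \(\bar\rho(\cC)\ge-1/(M-1)=\rho^{\ast}\). Since \(a>0\), the exponent above is nonnegative, so \(\cG_\gamma(\cC)\ge M(M-1)\) for every codebook and every \(\gamma>0\).

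For the matching upper bound, the embedded regular simplex has \(\rho_e=-1/(M-1)\) for all \(e\). Every term of its energy therefore equals one, and \(\cG_\gamma(\cC_{\mathrm{simp}})=M(M-1)\). Such a simplex exists in \(\Sph^{n-1}\) precisely because \(M-1\le n\). Hence \(V_\gamma(M,n)=\cG_\gamma(\cC_{\mathrm{simp}})=M(M-1)\).

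For the characterization of minimizers, suppose equality holds. Then both inequalities must be tight.
\begin{itemize}
\item Because \(\exp\{a\cdot\}\) is strictly convex for \(a>0\), tightness in Jensen forces all \(\rho_e(\cC)\) to equal a common value \(\bar\rho\).
\item Tightness in the second chain then forces \(\bar\rho=\rho^{\ast}\), and also \(\sum_m\bc_m=\bzero\).
\end{itemize}
The resulting configuration is \(M\) unit vectors with all pairwise inner products equal to \(-1/(M-1)\). Its Gram matrix is \(\frac{M}{M-1}\bI_M-\frac{1}{M-1}\mathbf{1}\mathbf{1}^\T\), the Gram matrix of a regular \((M-1)\)-simplex. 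Two tuples of vectors with the same Gram matrix differ by an orthogonal transformation of \(\R^n\), which I will justify by the standard isometry-extension argument on the span. Labels are already matched, so relabeling is needed only to identify unlabeled simplices. Conversely, every embedded regular simplex attains the bound.

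The only mildly delicate point is this equality case: it needs strict convexity (valid since \(\gamma>0\)) and the Gram-matrix congruence fact. Neither is a real obstacle; the whole argument is elementary and needs none of the asymptotic machinery from earlier sections.
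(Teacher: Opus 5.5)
Your proof is correct and matches the paper's argument. The paper first cites Cohn--Kumar universal optimality, but then gives, ``for completeness,'' exactly your centroid-identity plus strict-Jensen chain, which yields both the bound \(\cG_\gamma(\cC)\ge M(M-1)\) on all of \(\cM\) and the equality case (zero centroid, all correlations \(-1/(M-1)\)); as you note, this chain suffices by itself and handles coincident codewords and \(n=1\) uniformly, so the citation and the paper's separate \(n=1,M=2\) check are not needed.
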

\begin{IEEEproof}
The proof is given in
Appendix~\ref{app:simplex-orthoplex-application-proofs}.
\end{IEEEproof}

The proposition
is stronger than a high-SNR statement: an embedded regular simplex
globally minimizes the Gaussian soft-packing energy for every \(\gamma>0\).    Consequently, no finite-SNR deformation,
including one that varies with \(\gamma\), can lower the Gaussian soft-packing energy
minimum below the simplex value.

\begin{corollary}[Optimal leading coefficient in the simplex range]
\label{cor:simplex-coefficient}
For every fixed pair \((M,n)\) satisfying \(2\leq M\leq n+1\),
\begin{equation}
  K^{\ast}_{M,n}
  =K^{\mathrm{fix}}_{M,n}
  =M(M-1),
  \label{eq:simplex-optimized-coefficient}
\end{equation}
and
\begin{equation}
\begin{split}
  \Pe^{\ast}(M,n;\gamma)
  &=B_\gamma^{\ast}\bigl(M(M-1)+o(1)\bigr)\\
  &=\frac{(M-1)^{3/2}}
  {\sqrt{\pi n\gamma M}}
  \exp\left\{-\frac{n\gamma M}{4(M-1)}\right\}
  \bigl(1+o(1)\bigr).
\end{split}
  \label{eq:simplex-exact-error-expansion}
\end{equation}
The constant family \(\cC_\gamma\equiv\cC_{\mathrm{simp}}\) attains
\(K^{\ast}_{M,n}\) and satisfies
\begin{equation}
  \frac{\Pe(\cC_{\mathrm{simp}};\gamma)}
  {\Pe^{\ast}(M,n;\gamma)}
  \longrightarrow1.
  \label{eq:simplex-coefficient}
\end{equation}
\end{corollary}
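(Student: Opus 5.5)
The plan is to deduce the corollary directly from Proposition~\ref{prop:simplex-energy-minimum}, using the matching-bounds criterion of Corollary~\ref{cor:coefficient-extraction} with \(K=M(M-1)\).

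\emph{Converse.} By Proposition~\ref{prop:simplex-energy-minimum}, \(V_\gamma(M,n)=M(M-1)\) for every \(\gamma>0\). Hence \(\liminf_{\gamma\to\infty}V_\gamma(M,n)\ge M(M-1)\), which is \eqref{eq:coefficient-extraction-lower}.

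\emph{Achievability.} Take the constant family \(\cC_\gamma\equiv\cC_{\mathrm{simp}}\). The same proposition gives \(\cG_\gamma(\cC_{\mathrm{simp}})=M(M-1)\) identically, so \eqref{eq:coefficient-extraction-upper} holds. Also \(K=M(M-1)\) lies in the allowed interval \([2,M(M-1)]\), since \(M\ge2\).

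Corollary~\ref{cor:coefficient-extraction} then yields three conclusions:
\begin{itemize}
\item[(i)] \(K^{\ast}_{M,n}=M(M-1)\);
\item[(ii)] \(\Pe^{\ast}(M,n;\gamma)=B_\gamma^{\ast}(M(M-1)+o(1))\);
\item[(iii)] the constant simplex family has relative error ratio tending to one, which is \eqref{eq:simplex-coefficient}.
\end{itemize}

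For \(K^{\mathrm{fix}}_{M,n}\), recall that the simplex bound's equality case says \(\cP^{\ast}\) consists exactly of embedded regular simplices, up to orthogonal transformation and relabeling. In a regular simplex every pair of distinct vertices has correlation \(-1/(M-1)=\rho^{\ast}_{M,n}\). Therefore \(|\cI_{\max}(\cC_0)|=M(M-1)\) for every \(\cC_0\in\cP^{\ast}\), and \eqref{eq:fixed-packing-coefficient} gives \(K^{\mathrm{fix}}_{M,n}=M(M-1)\). As a consistency check, this also follows from \eqref{eq:Kstar-fixed-packing-upper} together with \(K^{\ast}_{M,n}\le K^{\mathrm{fix}}_{M,n}\le M(M-1)\), which would be sandwiched by the equality already obtained.

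The second line of \eqref{eq:simplex-exact-error-expansion} is a substitution into \eqref{eq:simplex-packing-scale}:
\[
M(M-1)B_\gamma^{\ast}=\frac{(M-1)\exp\{-n\gamma M/[4(M-1)]\}}{\sqrt{\pi n\gamma M/(M-1)}}=\frac{(M-1)^{3/2}}{\sqrt{\pi n\gamma M}}\exp\Bigl\{-\frac{n\gamma M}{4(M-1)}\Bigr\}.
\]
The additive \(o(1)\) becomes the multiplicative factor \((1+o(1))\) after dividing by \(M(M-1)>0\).

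There is no real obstacle here, since all the substance lies in Proposition~\ref{prop:simplex-energy-minimum}. The only point needing care is the identification of \(\cP^{\ast}\) with regular simplices, which requires citing the equality case of Rankin's simplex bound correctly.
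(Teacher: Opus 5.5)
Your proposal is correct and matches the paper's proof. Like the paper, you apply Corollary~\ref{cor:coefficient-extraction} with \(K=M(M-1)\), take both bounds from Proposition~\ref{prop:simplex-energy-minimum} and the constant simplex family, get \(K^{\mathrm{fix}}_{M,n}\) from packing uniqueness, and substitute into \eqref{eq:simplex-packing-scale}. Your ``consistency check'' \(K^{\ast}_{M,n}\le K^{\mathrm{fix}}_{M,n}\le M(M-1)\) is in fact a complete argument for \(K^{\mathrm{fix}}_{M,n}=M(M-1)\) on its own, and it does not need the equality case of Rankin's bound.
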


\begin{IEEEproof}
Proposition~\ref{prop:simplex-energy-minimum} gives the global lower
bound \eqref{eq:coefficient-extraction-lower} and, through the constant
simplex family, the matching upper bound
\eqref{eq:coefficient-extraction-upper}, both with
\(K=M(M-1)\).  Corollary~\ref{cor:coefficient-extraction} gives
\(K^{\ast}_{M,n}=M(M-1)\) and attainment of the optimal leading coefficient;
substitution of \eqref{eq:simplex-packing-scale} gives
\eqref{eq:simplex-exact-error-expansion}.  Packing uniqueness and the
fact that every ordered pair of distinct simplex vertices is a
closest-pair index give
\(K^{\mathrm{fix}}_{M,n}=M(M-1)\), completing
\eqref{eq:simplex-optimized-coefficient}.  The relative-error assertion
then follows from attainment of the optimal leading coefficient.
\end{IEEEproof}

Corollary~\ref{cor:simplex-coefficient} completely resolves the
SNR-wise optimized high-SNR problem in the simplex range.  The optimal
and best fixed-codebook leading coefficients coincide, and the constant
simplex family attains the optimal leading coefficient; SNR-dependent motion therefore gives no
leading-order gain.  Proposition~\ref{prop:simplex-energy-minimum}
establishes exact finite-SNR optimality for the Gaussian soft-packing energy, while the
conclusion for the exact ML error follows asymptotically through uniform
transfer.  Independently, the recent resolution of the Weak
Simplex Conjecture in \cite[Cors.~2.5--2.6]{Mulgund2026WeakSimplex} proves the stronger statement that an embedded regular
\((M-1)\)-simplex minimizes the exact average ML error at every positive
SNR in this entire cardinality range.
That external result is consistent with, but is not needed for, the
leading-coefficient derivation above.

\section{Application to the Orthoplex-Bound Range}
\label{sec:orthoplex-regime}

Fix \(n\ge2\) and parameterize the orthoplex-bound range by
\begin{equation}
  M=n+k,
  \qquad 2\leq k\leq n,
  \label{eq:orthoplex-range-parameterization}
\end{equation}
so that \(n+2\leq M\leq2n\).  This section first determines the packing
scale and the best fixed-codebook leading coefficient
\(K^{\mathrm{fix}}_{n+k,n}\).
It then constructs a strictly improving SNR-dependent family for every
\(2\leq k<n\), obtains an achievability bound \(K^{\ast}_{n+k,n}\leq 4k(k-1)\), determines \(K^{\ast}_{n+k,n}\) exactly at
\(k=2\) and \(k=n\), and isolates the remaining intermediate equality
as a conjecture.

\subsection{Packing Scale and Best Fixed-Codebook Leading Coefficient}
\label{subsec:orthoplex-fixed-benchmark}

The orthoplex bound in \cite{Rankin1955} gives
\(\rho^{\ast}_{M,n}\ge0\) whenever \(M>n+1\).
Conversely, since \(M=n+k\leq2n\), any \(M\)
vertices of the cross-polytope
\(\{\pm\be_1,\ldots,\pm\be_n\}\) have all off-diagonal correlations
in \(\{-1,0\}\).  Hence, throughout the range,
\begin{equation}
  \rho^{\ast}_{n+k,n}=0.
  \label{eq:orthoplex-range-packing-correlation}
\end{equation}
Substitution in \eqref{eq:Bstar} gives the packing-tail scale
\begin{equation}
  B_\gamma^{\ast}
  =
  \frac{\exp\{-n\gamma/4\}}
  {(n+k)\sqrt{n\gamma}\sqrt{\pi}}.
  \label{eq:orthoplex-range-packing-scale}
\end{equation}

\begin{proposition}[Best fixed-codebook leading coefficient in the orthoplex-bound range]
\label{prop:orthoplex-range-fixed-contact}
For every fixed \(n\ge2\) and \(2\leq k\leq n\),
\begin{equation}
  K^{\mathrm{fix}}_{n+k,n}=4n(k-1).
  \label{eq:orthoplex-range-fixed-contact-count}
\end{equation}
Equality is attained by the orthogonal union of \(k-1\) antipodal pairs
and a regular \((n-k+1)\)-simplex, the latter having \(n-k+2\)
vertices in the remaining \((n-k+1)\)-dimensional subspace.
\end{proposition}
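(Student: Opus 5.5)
We plan to prove \eqref{eq:orthoplex-range-fixed-contact-count} by turning it into a purely combinatorial extremal problem. By \eqref{eq:orthoplex-range-packing-correlation}, a codebook \(\cC_0\) lies in \(\cP^{\ast}\) exactly when all off-diagonal correlations are \(\le0\). In that case \(\cI_{\max}(\cC_0)\) is the set of ordered orthogonal pairs, so \(K^{\mathrm{fix}}_{n+k,n}\) is the smallest number of ordered orthogonal pairs among \(n+k\) unit vectors in \(\R^n\) with pairwise nonpositive inner products.

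\emph{Upper bound.} Take \(k-1\) antipodal pairs \(\pm\be_1,\ldots,\pm\be_{k-1}\) together with a regular simplex of \(n-k+2\) vertices in \(\mathrm{span}\{\be_k,\ldots,\be_n\}\). All correlations are \(-1\), \(-1/(n-k+1)\), or \(0\), and the zero ones are exactly the cross pairs between the \(k\) blocks. The blocks have sizes \(s_1=\dots=s_{k-1}=2\) and \(s_k=n-k+2\), so the number of ordered orthogonal pairs is
\[
(n+k)^2-4(k-1)-(n-k+2)^2=4(k-1)(n+1)-4(k-1)=4n(k-1).
\]
(For \(k=n\) the last block is itself an antipodal pair, so the construction is the full cross-polytope.)

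\emph{Lower bound.} Take any \(\cC_0\in\cP^{\ast}\) and form the graph on \([M]\) with an edge whenever \(\rho_e<0\). Let its connected components be \(S_1,\dots,S_c\), of sizes \(s_i\), spanning mutually orthogonal subspaces of dimensions \(d_i\ge1\) with \(\sum_i d_i\le n\). Every pair taken from different components is orthogonal, so
\[
|\cI_{\max}(\cC_0)|\ \ge\ M^2-\sum_i s_i^2 .
\]
The key geometric input is the standard fact, underlying Rankin's orthoplex bound and Kuperberg's equality classification \cite[Thm.~3]{Kuperberg2007}: a connected set of vectors with nonpositive pairwise inner products has \(s_i\le d_i+1\). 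We would prove this directly. Suppose \(\sum a_j\bv_j=0\) is a nontrivial dependence, and split it into positive and negative parts \(\bu=\bv\), where \(\bu=\sum_{a_j>0}a_j\bv_j\) and \(\bv=\sum_{a_j<0}(-a_j)\bv_j\). Then \(\|\bu\|^2=\langle\bu,\bv\rangle\le0\), so every dependence has constant sign. Connectedness then forces its support to be the whole component, which makes the dependence space at most one-dimensional.

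Singletons have \(d_i=1\), so \(d_i\ge\max(s_i-1,1)\). Let \(t\) be the number of singleton components. Then \(\sum_i d_i\ge M-c+t\), and together with \(\sum_i d_i\le n\) this gives at least \(c-t\ge k\) components of size \(\ge2\). It remains to maximize \(\sum_i s_i^2\) subject to at least \(k\) parts of size \(\ge2\) totalling \(M-t\), plus \(t\) singletons. By convexity, the maximum occurs with exactly \(k\) non-singleton parts, \(k-1\) of size \(2\) and one of size \(M-t-2(k-1)\). The resulting value \(4(k-1)+(n-k+2-t)^2+t\) is decreasing in \(t\), so it is maximized at \(t=0\). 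That maximum equals \(4(k-1)+(n-k+2)^2\), which yields \(|\cI_{\max}(\cC_0)|\ge4n(k-1)\).

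The main obstacle is the rank bound \(s_i\le d_i+1\) for connected nonpositive configurations, together with the bookkeeping for singletons and degenerate spans. We would handle it by the sign argument above, or alternatively cite \cite[Thm.~3]{Kuperberg2007}. The convexity step is routine.
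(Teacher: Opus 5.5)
Your proof is correct, and its skeleton matches the paper's. Both use the negative-correlation graph, orthogonality across its components, nullity at most one per component, a dimension count forcing at least \(k\) nontrivial components, and a convexity step, with the same extremal construction.

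The real difference is in how the nullity lemma is proved. The paper applies Perron--Frobenius to the nonnegative irreducible matrix \(\bI-\bG\). Its top eigenvalue is at most one because \(\bG\succeq0\), so a singular component makes \(1\) a simple Perron root. You instead use the classical positive/negative splitting of a dependence, which is self-contained and avoids spectral theory at the cost of a few more lines.

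The bookkeeping also differs slightly. The paper bounds the number \(N_-\) of negative ordered pairs and gets at least \(k\) dependent components directly from rank--nullity of the full Gram matrix. It then absorbs all full-rank components into one block of size \(r_0\) and merges it away. You instead bound cross-component pairs by \(M^2-\sum_i s_i^2\), count non-singleton components through \(d_i\ge\max(s_i-1,1)\), and optimize over the number \(t\) of singletons. Your monotonicity claims, in the number of parts and in \(t\), do hold on the feasible range \(0\le t\le n-k\). Both routes reach the same extremal profile: one block of size \(n-k+2\) and \(k-1\) blocks of size \(2\).

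One step should be reordered. The identity \(\|\bu\|^2=\langle\bu,\bv\rangle\le0\) only gives \(\bu=\bv=0\), so the positive and negative parts are each nonnegative dependences on their own. That does not by itself give constant sign. For example, two orthogonal antipodal pairs (two components) have the mixed-sign dependence \((1,1,-1,-1)\).

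Connectedness has to come first. A nonnegative dependence supported on a proper subset \(P\) of a component is impossible: its inner product with a vertex outside \(P\) adjacent to \(P\) is strictly negative, yet must vanish. This forces the positive part to have full support, so the negative part is empty. Finally, two independent full-support dependences would combine into a nonzero dependence with a zero entry, which is why the dependence space is at most one-dimensional.
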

\begin{IEEEproof}
The proof is given in
Appendix~\ref{app:simplex-orthoplex-application-proofs}.
\end{IEEEproof}

Proposition~\ref{prop:orthoplex-range-fixed-contact} determines the
smallest leading coefficient obtainable by holding one packing-optimal
codebook fixed.  The stated orthogonal union attains the minimum ordered
closest-pair-index count \(4n(k-1)\).  This is the benchmark for measuring the
benefit of SNR-wise codebook optimization, developed below.

\subsection{An SNR-Dependent Construction for
\texorpdfstring{\(2\leq k<n\)}{2 <= k < n}}
\label{subsec:orthoplex-range-upper-bound}

For every fixed \(n\ge3\) and \(2\leq k<n\), put
\begin{equation}
  d=n-k,
  \qquad
  q_\gamma=\frac{\log(n\gamma)}{n\gamma}.
  \label{eq:orthoplex-range-moving-parameters}
\end{equation}
The closest-pair contribution formula in
Corollary~\ref{cor:subsequential-weighted-contact} suggests this choice
because it separates the two scales
\[
  \frac{1}{n\gamma}\ll q_\gamma
  \ll\frac{1}{\sqrt{n\gamma}}.
\]
Thus correlations lowered by order \(q_\gamma\) produce zero limiting
weights in the Gaussian soft-packing energy, whereas correlations worsened
by order \(q_\gamma^2\) produce unit limiting weights.

For all sufficiently large \(\gamma\), choose unit vectors
\(\ba_1,\ldots,\ba_d\) in a \(d\)-dimensional subspace
\(\mathcal U\subset\R^n\), continuously in \(q_\gamma\) from an
orthonormal basis at \(q_\gamma=0\), such that
\begin{equation}
  \ba_i^\T\ba_j=-q_\gamma,
  \qquad i\ne j.
  \label{eq:orthoplex-range-auxiliary-gram}
\end{equation}
Define
\begin{equation}
\begin{aligned}
  \bv_\gamma
  &=-\frac{q_\gamma}{1-(d-1)q_\gamma}
  \sum_{j=1}^d\ba_j,\\
  p_\gamma
  &=\|\bv_\gamma\|^2
  =\frac{dq_\gamma^2}{1-(d-1)q_\gamma},\\
  r_\gamma
  &=\sqrt{1-p_\gamma}.
\end{aligned}
  \label{eq:orthoplex-range-displacement}
\end{equation}
Let \(\bm f_1,\ldots,\bm f_k\) be a fixed orthonormal basis of
\(\mathcal U^\perp\), and set
\begin{equation}
  \cC_\gamma
  =
  \left\{
  \bv_\gamma\mathbin{\pm}r_\gamma\bm f_i:
  1\leq i\leq k
  \right\}
  \mathbin{\cup}
  \{\ba_1,\ldots,\ba_d\}.
  \label{eq:orthoplex-range-moving-codebook}
\end{equation}

\begin{proposition}[Multiscale achievability in the nonterminal
orthoplex-bound range]
\label{prop:orthoplex-range-constructive-upper-bound}
For every fixed \(n\ge3\) and \(2\leq k<n\), the vectors in
\eqref{eq:orthoplex-range-auxiliary-gram} and the codebook in
\eqref{eq:orthoplex-range-moving-codebook} exist for all sufficiently
large \(\gamma\).  The codebook is feasible, has \(n+k\) points, and
converges to a deleted cross-polytope having both signs on \(k\) axes
and one sign on each of the remaining \(d=n-k\) axes.  Its maximal
correlation is \(p_\gamma\), and its Gaussian soft-packing energy is
\begin{equation}
\begin{aligned}
  \cG_\gamma(\cC_\gamma)
  &=4k(k-1)\exp\{n\gamma p_\gamma/4\}\\
  &{}+2k\exp\{n\gamma(2p_\gamma-1)/4\}\\
  &{}+d(d+4k-1)\exp\{-n\gamma q_\gamma/4\}.
\end{aligned}
  \label{eq:orthoplex-range-moving-energy}
\end{equation}
Moreover,
\begin{equation}
\begin{aligned}
  n\gamma q_\gamma
  &=\log(n\gamma)\longrightarrow\infty,\\
  n\gamma p_\gamma
  &=O\left(\frac{[\log(n\gamma)]^2}{n\gamma}\right)
  \longrightarrow0.
\end{aligned}
  \label{eq:orthoplex-range-moving-scales}
\end{equation}
Consequently,
\begin{equation}
  \cG_\gamma(\cC_\gamma)\longrightarrow4k(k-1).
  \label{eq:orthoplex-range-moving-energy-limit}
\end{equation}
In particular,
\begin{equation}
  K^{\ast}_{n+k,n}\leq4k(k-1).
  \label{eq:orthoplex-range-optimized-upper-bound}
\end{equation}
The family in \eqref{eq:orthoplex-range-moving-codebook} satisfies
\begin{equation}
  \Pe(\cC_\gamma;\gamma)
  =B_\gamma^{\ast}\bigl(4k(k-1)+o(1)\bigr).
  \label{eq:orthoplex-range-achievable-error}
\end{equation}
\end{proposition}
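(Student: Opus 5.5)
The proof is a direct computation of the Gram matrix of \(\cC_\gamma\), followed by an appeal to Corollary~\ref{cor:coefficient-extraction}-style reasoning through Theorem~\ref{thm:uniform-soft-packing-transfer}.

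\emph{Existence of the vectors \(\ba_j\).} The target Gram matrix is \(G_q=(1+q)\bI_d-q\bm 1\bm 1^\T\). Its eigenvalues are \(1+q\) (multiplicity \(d-1\)) and \(1-(d-1)q\). For small \(q=q_\gamma\to0\) it is positive definite, so it has a factorization \(G_q=\bA_q^\T\bA_q\) with \(\bA_q=G_q^{1/2}\) continuous in \(q\) and \(\bA_0=\bI_d\). Taking the columns of \(\bA_q\) in an orthonormal basis of \(\mathcal U\) gives the \(\ba_j\), depending continuously on \(q_\gamma\) from an orthonormal basis.

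\emph{Feasibility of the \(\pm\) codewords.} With \(\bm s=\sum_j\ba_j\), we have \(\ba_i^\T\bm s=1-(d-1)q\) and \(\|\bm s\|^2=d(1-(d-1)q)\). Hence \(\bv_\gamma^\T\ba_i=-q\) for every \(i\), and \(\|\bv_\gamma\|^2=dq^2/(1-(d-1)q)=p_\gamma\). Since \(\bv_\gamma\in\mathcal U\perp\bm f_i\), the vectors \(\bv_\gamma\pm r_\gamma\bm f_i\) have norm \(p_\gamma+r_\gamma^2=1\). The \(n+k\) points are distinct for small \(q\), so the codebook is feasible.

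\emph{Gram matrix and energy.} I classify the ordered pairs:
\begin{itemize}
\item For \(i\ne j\), with any signs, \((\bv\pm r\bm f_i)^\T(\bv\pm r\bm f_j)=p_\gamma\). There are \(2k\cdot 2(k-1)=4k(k-1)\) such ordered pairs.
\item The antipodal-type pairs give \((\bv+r\bm f_i)^\T(\bv-r\bm f_i)=p-r^2=2p_\gamma-1\), contributing \(2k\) ordered pairs.
\item The mixed pairs give \((\bv\pm r\bm f_i)^\T\ba_j=\bv^\T\ba_j=-q_\gamma\), contributing \(2\cdot 2k\cdot d=4kd\) ordered pairs.
\item The pairs among the \(\ba\)'s give \(-q_\gamma\), contributing \(d(d-1)\) ordered pairs.
\end{itemize}
The last two counts total \(d(d+4k-1)\). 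Substituting into \eqref{eq:global-soft-packing-energy} with \(\rho^\ast=0\) yields \eqref{eq:orthoplex-range-moving-energy}. Since \(p_\gamma>0>-q_\gamma>2p_\gamma-1\), the maximal correlation is \(p_\gamma\).

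\emph{Limits.} By definition, \(n\gamma q_\gamma=\log(n\gamma)\). Also
\[
n\gamma p_\gamma=\frac{d\,[\log(n\gamma)]^2}{n\gamma\,(1-(d-1)q_\gamma)}\to0 .
\]
So the first energy term tends to \(4k(k-1)\), the second is \(O(e^{-n\gamma/8})\), and the third is \(d(d+4k-1)(n\gamma)^{-1/4}\to0\). This gives \eqref{eq:orthoplex-range-moving-energy-limit}.

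\emph{Limiting configuration.} At \(q=0\) we have \(\bv=0\), \(r=1\), and the \(\ba_j\) orthonormal. The limit is \(\{\pm\bm f_i\}\cup\{\ba_j\}\), the stated deleted cross-polytope.

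\emph{Consequences.} The bound \(K^\ast\le 4k(k-1)\) follows from \(V_\gamma\le\cG_\gamma(\cC_\gamma)\) and Theorem~\ref{thm:global-coefficient}. Since \(\cG_\gamma(\cC_\gamma)\le L:=4k(k-1)+1\) eventually, Theorem~\ref{thm:uniform-soft-packing-transfer} gives \(\Pe(\cC_\gamma;\gamma)/B_\gamma^\ast=\cG_\gamma(\cC_\gamma)+O((n\gamma)^{-1})\to4k(k-1)\), which is \eqref{eq:orthoplex-range-achievable-error}.

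The only mildly delicate step is the continuous choice of the \(\ba_j\) via \(G_q^{1/2}\) and the pair count; everything else is substitution.
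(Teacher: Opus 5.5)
Your proposal is correct and follows essentially the same route as the paper: you realize the auxiliary vectors via the positive-definite square root of \((1+q_\gamma)\bI-q_\gamma\bm 1\bm 1^{\mathsf T}\), derive \(\ba_j^{\mathsf T}\bv_\gamma=-q_\gamma\) and \(\|\bv_\gamma\|^2=p_\gamma\), build the three-class correlation table with counts \(4k(k-1)\), \(2k\), and \(d(d+4k-1)\), and conclude via \(V_\gamma\le\cG_\gamma(\cC_\gamma)\), Theorem~\ref{thm:global-coefficient}, and the uniform transfer of Theorem~\ref{thm:uniform-soft-packing-transfer}. Your explicit rates \(O(e^{-n\gamma/8})\) and \((n\gamma)^{-1/4}\) for the two vanishing terms are a slightly more quantitative version of the paper's remark that those terms tend to zero.
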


\begin{IEEEproof}
The proof is given in
Appendix~\ref{app:simplex-orthoplex-application-proofs}.
\end{IEEEproof} 

Proposition~\ref{prop:orthoplex-range-constructive-upper-bound} gives an
achievability family defined for every sufficiently large SNR, rather
than only along a subsequence.  The construction lowers by order
\(q_\gamma\) the correlations of the \(d(d+4k-1)\) pairs involving an auxiliary point that
are closest in the limiting deleted cross-polytope, so their contributions
to the constructed family's leading coefficient vanish because
\(n\gamma q_\gamma\to\infty\).  The correlations of the \(4k(k-1)\) pairs
between shifted points on different axes increase only by
\(p_\gamma=O(q_\gamma^2)\); since \(n\gamma p_\gamma\to0\), these
pairs have limiting weight one.  The exact ML error of the constructed
family therefore has leading coefficient \(4k(k-1)\).  The improvement is obtained only
because the codebook continues to vary with SNR at the specified two
rates.

\subsection{Exact Optimal Leading Coefficients at the Orthoplex-Bound Endpoints}
\label{subsec:orthoplex-exact-endpoints}

For the orthoplex-range endpoints, we first solve the terminal Gaussian
soft-packing energy problem exactly at every positive SNR, which
determines the optimal leading coefficient for \(k=n\), and then provide
a matching lower bound for \(k=2\).

\begin{theorem}[Exact terminal endpoint \(k=n\)]
\label{thm:orthoplex-terminal-endpoint-coefficient}
For every fixed \(n\ge2\) and every \(\gamma>0\), the full
cross-polytope
\[
  \cC_{\mathrm{cross}}
  =\{\be_1,-\be_1,\ldots,\be_n,-\be_n\}
\]
is a global minimizer of the Gaussian soft-packing energy, and
\begin{equation}
  V_\gamma(2n,n)
  =4n(n-1)+2n\exp\{-n\gamma/4\}.
  \label{eq:orthoplex-terminal-endpoint-soft-energy}
\end{equation}
Consequently,
\begin{equation}
  K^{\ast}_{2n,n}
  =K^{\mathrm{fix}}_{2n,n}
  =4n(n-1),
  \label{eq:orthoplex-terminal-endpoint-coefficient}
\end{equation}
and
\begin{equation}
  \Pe^{\ast}(2n,n;\gamma)
  =
  \frac{2(n-1)}{\sqrt{\pi n\gamma}}
  \exp\{-n\gamma/4\}
  \bigl(1+o(1)\bigr).
  \label{eq:orthoplex-terminal-endpoint-awgn-asymptotic}
\end{equation}
The fixed full cross-polytope is therefore asymptotically ML-optimal.
\end{theorem}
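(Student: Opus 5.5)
The plan is to recognize \(\cG_\gamma\) at \(M=2n\), \(\rho^{\ast}_{2n,n}=0\), as a pair-potential energy and to invoke the universal optimality of the cross-polytope \cite{CohnKumar2007}.

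\textbf{Step 1: the pair potential.} Writing \(\rho_e=\bc_m^\T\bc_\ell\) and \(\|\bc_m-\bc_\ell\|^2=2(1-\rho_e)\), we have
\begin{equation*}
  \cG_\gamma(\cC)=\sum_{e\in\cE_{2n}} f\bigl(\|\bc_m-\bc_\ell\|^2\bigr),
  \qquad
  f(r)=\exp\Bigl\{\frac{n\gamma}{4}\Bigl(1-\frac r2\Bigr)\Bigr\}
  =e^{n\gamma/4}\,e^{-(n\gamma/8)r}.
\end{equation*}
This is a positive multiple of the Gaussian potential \(e^{-\alpha r}\) with \(\alpha=n\gamma/8>0\). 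That potential is completely monotone in squared distance.

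\textbf{Step 2: universal optimality.} The cross-polytope is universally optimal among \(2n\)-point configurations on \(\Sph^{n-1}\) \cite{CohnKumar2007}. It therefore minimizes the energy for every completely monotone potential of squared distance, and in particular \(\cG_\gamma\), at every \(\gamma>0\). Labeling does not matter because \(\cG_\gamma\) sums over all ordered pairs.

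One possible obstacle is that universal optimality is usually stated for point sets with distinct points, whereas \(\cM\) allows coincident codewords. I would handle this by noting that \(f\) is bounded and continuous at \(r=0\). The energy is then continuous on the compact set \(\cM\), and distinct configurations are dense in it. Hence the infimum over \(\cM\) equals the infimum over distinct configurations, and the minimum is still attained by the cross-polytope. If the cited theorem is only for sharp configurations in the narrow sense, I would alternatively verify the linear-programming certificate directly. The cross-polytope is a spherical 3-design with inner products \(\{-1,0\}\). The Hermite interpolant of \(e^{n\gamma t/4}\) at \(t=0\) (double node) and \(t=-1\) is a degree-\(2\) polynomial with nonnegative Gegenbauer coefficients. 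It lies below the potential by the usual complete-monotonicity divided-difference argument. Checking this certificate is where I expect the real work lies if it must be done by hand.

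\textbf{Step 3: evaluation and consequences.} In the cross-polytope, each of the \(2n\) points has \(2n-2\) orthogonal partners (term \(1\)) and one antipode (term \(e^{-n\gamma/4}\)). This gives
\begin{equation*}
  V_\gamma(2n,n)=2n(2n-2)+2n\,e^{-n\gamma/4}=4n(n-1)+2n\,e^{-n\gamma/4}.
\end{equation*}
Taking \(\gamma\to\infty\) gives \(K^{\ast}_{2n,n}=4n(n-1)\) by Theorem~\ref{thm:global-coefficient}. Proposition~\ref{prop:orthoplex-range-fixed-contact} with \(k=n\) gives \(K^{\mathrm{fix}}_{2n,n}=4n(n-1)\); equivalently, this follows from \(K^{\ast}\le K^{\mathrm{fix}}\le|\cI_{\max}(\cC_{\mathrm{cross}})|=4n(n-1)\).

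Corollary~\ref{cor:coefficient-extraction}, applied with the constant family \(\cC_\gamma\equiv\cC_{\mathrm{cross}}\), then yields \(\Pe^{\ast}=B_\gamma^{\ast}(4n(n-1)+o(1))\) and asymptotic ML-optimality of the fixed cross-polytope. Substituting \eqref{eq:orthoplex-range-packing-scale} with \(k=n\), i.e.
\begin{equation*}
  B_\gamma^{\ast}=\frac{e^{-n\gamma/4}}{2n\sqrt{\pi n\gamma}},
\end{equation*}
gives the prefactor
\begin{equation*}
  \frac{4n(n-1)}{2n\sqrt{\pi n\gamma}}=\frac{2(n-1)}{\sqrt{\pi n\gamma}},
\end{equation*}
which is \eqref{eq:orthoplex-terminal-endpoint-awgn-asymptotic}.
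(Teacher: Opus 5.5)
Your proposal is correct and is essentially the paper's proof: you write \(\cG_\gamma\) as a positive multiple of the Gaussian potential \(e^{-(n\gamma/8)\|\bx-\by\|^2}\) and invoke Cohn--Kumar universal optimality, using a continuity/density argument for coincident codewords, to get the global minimum at every \(\gamma>0\); the pair count \(4n(n-1)\cdot 1+2n\cdot e^{-n\gamma/4}\) then yields both coefficients and the ML asymptotic. The differences are minor: you conclude through Corollary~\ref{cor:coefficient-extraction} rather than Corollaries~\ref{cor:optimized-soft-packing-transfer} and~\ref{cor:soft-energy-minimizers-attain}, your sandwich \(K^{\ast}_{2n,n}\le K^{\mathrm{fix}}_{2n,n}\le|\cI_{\max}(\cC_{\mathrm{cross}})|\) removes the need for Proposition~\ref{prop:orthoplex-range-fixed-contact}, and your fallback quadratic certificate \(1+ct+(e^{-c}-1+c)t^2\) with \(c=n\gamma/4\) does check out, reproducing the exact minimum \(4n(n-1)+2ne^{-c}\) even with coincident codewords allowed.
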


\begin{IEEEproof}
The proof is given in
Appendix~\ref{app:simplex-orthoplex-application-proofs}.
\end{IEEEproof} 

The terminal endpoint behaves like the simplex range.  Universal
optimality makes the full cross-polytope a soft-packing energy minimizer at
every positive SNR.  Its \(4n(n-1)\) ordered orthogonal pairs contribute
one unit each to the limiting Gaussian soft-packing energy, whereas the
contributions of the \(2n\) ordered antipodal pairs vanish exponentially.
Hence
\(K^{\mathrm{fix}}_{2n,n}\) and \(K^{\ast}_{2n,n}\) coincide, and SNR-dependent motion cannot improve the
leading coefficient at \(k=n\).  The result also covers the square \(n=k=2\),
where the two orthoplex endpoints coincide.  The exact finite-SNR
assertion concerns the Gaussian soft-packing energy; the exact ML conclusion established
here is asymptotic.

To determine the left endpoint of the orthoplex-bound range, the construction must be paired with a
global converse valid for every codebook, rather than only for a
particular limiting packing.

\begin{proposition}[Global signed-balance lower bound]
\label{prop:orthoplex-left-endpoint-soft-energy-lower-bound}
For every \(\cC\in(\Sph^{n-1})^{n+2}\), put
\(\varrho=\rho_{\max}(\cC)\).  Then, for every \(\gamma>0\),
\begin{equation}
  \cG_\gamma(\cC)
  \ge
  \frac{8}{(1+\varrho)^2}.
  \label{eq:orthoplex-left-endpoint-finite-energy-lower-bound}
\end{equation}
Consequently,
\begin{equation}
  \liminf_{\gamma\to\infty}V_\gamma(n+2,n)\ge8.
  \label{eq:orthoplex-left-endpoint-liminf}
\end{equation}
\end{proposition}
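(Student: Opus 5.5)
The plan is a Gram-kernel argument combined with the elementary fact that every ordered pair with nonnegative correlation contributes at least one to \(\cG_\gamma\). In this range \(\rho^{\ast}_{n+2,n}=0\), so
\[
  \cG_\gamma(\cC)=\sum_{e\in\cE_{n+2}}\exp\{n\gamma\rho_e(\cC)/4\}
  \ge\bigl|\{e:\rho_e(\cC)\ge0\}\bigr|.
\]
Every nonnegative correlation lies in \([0,\varrho]\). I will therefore aim for a weighted count of the nonnegative pairs that is at least \(8/(1+\varrho)^2\) and is dominated by that cardinality. Once \eqref{eq:orthoplex-left-endpoint-finite-energy-lower-bound} is proved, the \(\liminf\) statement is routine. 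A minimizer satisfies \(V_\gamma\le M(M-1)\), so \eqref{eq:bounded-energy-packing-consequence} gives \(\varrho\le4\log(M(M-1))/(n\gamma)\to0\), and hence \(V_\gamma(n+2,n)\ge8/(1+o(1))^2\).

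The key structural input is that the \((n+2)\times(n+2)\) Gram matrix has rank at most \(n\). Its kernel \(\mathcal K\) therefore has dimension at least two, and every \(\bx\in\mathcal K\) satisfies \(\sum_m x_m\bc_m=\bzero\). For such \(\bx\), split the support into \(P=\{x_m>0\}\) and \(N=\{x_m<0\}\), and put \(\bu=\sum_{P}x_m\bc_m=\sum_{N}|x_m|\bc_m\). Then
\[
  \|\bu\|^2=\sum_{m\in P,\ \ell\in N}x_m|x_\ell|\rho_{(m,\ell)}.
\]
Expanding \(\|\bu\|^2\) on each side and using \(\rho_e\le\varrho\) for the within-side pairs should give a lower bound of the form
\[
  \sum_{P\times N}x_m|x_\ell|\rho^+_{(m,\ell)}\ge\frac{c(\bx)}{1+\varrho},
\]
with \(c(\bx)\) controlled by \(\|\bx\|_2^2\). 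This is the signed balance: cross pairs between the two sign classes are forced to carry positive correlation. Since each positive \(\rho^+_{(m,\ell)}\le\varrho\), the cross pairs with \(\rho_e\ge0\) must carry weight comparable to \(1/(1+\varrho)\) relative to \(\bx\).

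Next I exploit \(\dim\mathcal K\ge2\). Taking two independent kernel vectors, or one kernel vector forced to vanish on a chosen coordinate, yields two balance relations whose supports and sign patterns differ. Combining them, for instance by summing the two inequalities or by Cauchy--Schwarz over the positive-correlation pairs, should produce a family of ordered nonnegative pairs. The number of such pairs I want is at least
\[
  \frac{\bigl(\sum_{\text{pairs}}x_m|x_\ell|\bigr)^2}{\sum_{\text{pairs}}x_m^2x_\ell^2}\cdot\frac{1}{(1+\varrho)^2}.
\]
The ratio in front should be at least \(8\) once both directions of each unordered pair and both kernel directions are counted. A useful sanity check is the deleted cross-polytope at \(\varrho=0\), where the bound should be tight with exactly \(8\) ordered orthogonal pairs between the two doubled axes.

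The main obstacle is this combinatorial and optimization step. I must show that the two-dimensional kernel forces at least four unordered nonnegative pairs, with the stated loss \((1+\varrho)^{-2}\) and not a worse one. I would first try choosing the kernel vector minimizing its support size, reducing to configurations with \(n+2\) points in general position relative to the dependencies. I would then normalize \(\bx\) so that the positive and negative masses are equal. Finally, I would apply Cauchy--Schwarz with weights \(x_m|x_\ell|\) to convert the weighted positivity into a count; the second kernel direction supplies the other half of the eight ordered pairs.
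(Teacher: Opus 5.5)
\textbf{There is a genuine gap.} You reduce via \(\cG_\gamma(\cC)\ge|\{e:\rho_e(\cC)\ge0\}|\), but this discards exactly the terms the bound depends on. The inequality you then aim for, \(|\{e:\rho_e(\cC)\ge0\}|\ge8/(1+\varrho)^2\), is false even for arbitrarily small \(\varrho\).

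For \(n=2\), take unit vectors at angles \(0\), \(\pi/2-\eps\), \(\pi\), \(3\pi/2-\eps\). Two unordered pairs have correlation \(\sin\eps\), two have \(-\sin\eps\), and two have \(-1\). So only four ordered pairs are nonnegative, while \(8/(1+\sin\eps)^2\to8\). For \(n\ge3\), tilt the four shifted points of the paper's \(k=2\) construction the same way inside the plane of \(\bm f_1,\bm f_2\), with \(r_\gamma^2\sin\eps>p_\gamma\). This gives the same count, because every correlation involving an auxiliary point stays at \(-q_\gamma\). Taking \(\sin\eps=c/t\) with \(t=n\gamma/4\) keeps the energy bounded and gives \(\varrho=O((n\gamma)^{-1})\). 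So your count inequality fails even on the families relevant to the \(\liminf\). The energy still exceeds \(8\) only because \(4e^{t\sin\eps}+4e^{-t\sin\eps}\ge8\): slightly negative pairs contribute almost one each, and a proof must keep them.

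Your signed-balance step already fails at the extremal configuration. In the deleted cross-polytope, take the balanced kernel vector with \(P=\{\be_1,-\be_1\}\) and \(N=\{\be_2,-\be_2\}\). It has \(\bu=\bzero\) and every cross correlation equal to zero, so \(\sum_{P\times N}x_m|x_\ell|\rho^+_{(m,\ell)}=0\), and no bound with \(c(\bx)>0\) can hold. The identity for \(\|\bu\|^2\) says only that a weighted mean of the cross correlations is nonnegative, and no count of nonnegative pairs follows from that.

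\textbf{The missing idea} is to apply convexity to the exponential weights of all cross pairs rather than to count signs.
\begin{enumerate}
\item Use \(\dim\mathcal K\ge2\) to choose a kernel vector with \(\sum_m x_m=0\). Rescaling a single kernel vector cannot equalize the two masses, and a kernel vector coming from an antipodal pair has \(N=\emptyset\). A two-dimensional kernel meets \(\bm{1}^{\perp}\) nontrivially; this is Radon's theorem, which is what the paper uses.
\item Normalizing gives probability weights \(\bm\alpha=(\alpha_i)_{i\in P}\) and \(\bm\beta=(\beta_j)_{j\in N}\) with a common barycenter \(\bm\mu\). Then \(\sum_{i,j}\alpha_i\beta_j\rho_{(i,j)}=\|\bm\mu\|^2\ge0\).
\item Weighted AM--GM with weights \(\alpha_i\beta_j\) gives \(\sum_{P\times N}\exp\{t\rho_{(i,j)}\}\ge\exp\{t\|\bm\mu\|^2\}\prod_i\alpha_i^{-\alpha_i}\prod_j\beta_j^{-\beta_j}\ge(\|\bm\alpha\|_\infty\|\bm\beta\|_\infty)^{-1}\). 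No sign condition on individual correlations is needed.
\item The bound \(2\alpha_i-1\le\bc_i^\T\bm\mu\le\varrho\), and symmetrically for \(\beta_j\), caps both maximal weights at \((1+\varrho)/2\).
\end{enumerate}
The constant \(8\) is thus \(2\times4\): two ordered orientations times \(4/(1+\varrho)^2\) from one balanced kernel vector. No second kernel direction is needed to supply half the pairs. Your \(\liminf\) deduction from the finite bound is fine.
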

\begin{IEEEproof}
The proof is given in
Appendix~\ref{app:simplex-orthoplex-application-proofs}.
\end{IEEEproof}

Unlike a bound based on a fixed closest-pair count,
Proposition~\ref{prop:orthoplex-left-endpoint-soft-energy-lower-bound}
applies globally to every \((n+2,n)\) codebook, including non-packing-optimal codebooks
and codebooks selected from SNR-dependent families.  Every
packing-tail-competitive family has
\(\rho_{\max}(\cC_\gamma)\to0\), so the finite-SNR inequality forces
the liminf of its Gaussian soft-packing energy to be at least \(8\).  It therefore supplies exactly
the converse needed to certify \(K^{\ast}_{n+2,n}\).

\begin{theorem}[Exact optimal leading coefficient at the left endpoint \(k=2\)]
\label{thm:orthoplex-left-endpoint-coefficient}
For every fixed \(n\ge2\), as \(\gamma\to\infty\),
\begin{equation}
  V_\gamma(n+2,n)=8+o(1),
  \label{eq:orthoplex-left-endpoint-soft-energy-asymptotic}
\end{equation}
and
\begin{equation}
  K^{\ast}_{n+2,n}=8.
  \label{eq:orthoplex-left-endpoint-coefficient}
\end{equation}
The exact SNR-wise optimized ML error satisfies
\begin{equation}
  \Pe^{\ast}(n+2,n;\gamma)
  =
  \frac{\exp\{-n\gamma/4\}}
  {(n+2)\sqrt{n\gamma}\sqrt{\pi}}
  \left(8+o(1)\right).
  \label{eq:orthoplex-left-endpoint-awgn-asymptotic}
\end{equation}
Moreover, a family attaining the optimal leading coefficient is given by the fixed square
when \(n=2\) and by the SNR-dependent codebook in
\eqref{eq:orthoplex-range-moving-codebook}, specialized to \(k=2\),
when \(n\ge3\).  In either case,
\begin{equation}
  \Pe(\cC_\gamma;\gamma)
  =B_\gamma^{\ast}\bigl(8+o(1)\bigr),
  \qquad
  \frac{\Pe(\cC_\gamma;\gamma)}
  {\Pe^{\ast}(n+2,n;\gamma)}
  \longrightarrow1.
  \label{eq:orthoplex-left-endpoint-witness-attainment}
\end{equation}
\end{theorem}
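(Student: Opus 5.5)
The proof applies the matching-bounds criterion of Corollary~\ref{cor:coefficient-extraction} with \(K=8\), \(M=n+2\), and \(\rho^{\ast}_{n+2,n}=0\) from \eqref{eq:orthoplex-range-packing-correlation}. The candidate value lies in the admissible interval \([2,M(M-1)]\), since \(8\le(n+2)(n+1)\) for every \(n\ge2\).

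\textbf{Converse.} The converse \eqref{eq:coefficient-extraction-lower} is exactly \eqref{eq:orthoplex-left-endpoint-liminf} in Proposition~\ref{prop:orthoplex-left-endpoint-soft-energy-lower-bound}, so nothing further is needed on the lower side. All the work goes into exhibiting an SNR-indexed family with \(\limsup_{\gamma\to\infty}\cG_\gamma(\cC_\gamma)\le8\), split according to whether \(n=2\) or \(n\ge3\).

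\textbf{Achievability, \(n=2\).} Take the constant square \(\{\pm\be_1,\pm\be_2\}\), which is the full cross-polytope in \(\R^2\). Theorem~\ref{thm:orthoplex-terminal-endpoint-coefficient} applies with \(n=2\) and gives
\[
  \cG_\gamma(\cC_{\mathrm{cross}})=4\cdot2\cdot1+4\exp\{-2\gamma/4\}\longrightarrow8 .
\]
Equivalently, the eight ordered orthogonal pairs each contribute \(1\) and the four ordered antipodal pairs decay exponentially.

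\textbf{Achievability, \(n\ge3\).} Specialize Proposition~\ref{prop:orthoplex-range-constructive-upper-bound} to \(k=2\), which is admissible because \(2\le k<n\). By \eqref{eq:orthoplex-range-moving-energy-limit}, \(\cG_\gamma(\cC_\gamma)\to4\cdot2\cdot1=8\), and the family is defined for every sufficiently large \(\gamma\), as the corollary requires.

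\textbf{Conclusion.} Corollary~\ref{cor:coefficient-extraction} now yields \(V_\gamma(n+2,n)\to8\), which is \eqref{eq:orthoplex-left-endpoint-soft-energy-asymptotic}. It also yields \(K^{\ast}_{n+2,n}=8\) and \(\Pe^{\ast}=B_\gamma^{\ast}(8+o(1))\). Substituting \(B_\gamma^{\ast}\) from \eqref{eq:orthoplex-range-packing-scale} with \(k=2\) gives \eqref{eq:orthoplex-left-endpoint-awgn-asymptotic}. The same corollary, through \eqref{eq:coefficient-extraction-attainment}, gives the attainment statements \eqref{eq:orthoplex-left-endpoint-witness-attainment} for both witness families.

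\textbf{Where the difficulty lies.} Given the earlier results, the only substantive ingredient is the global converse of Proposition~\ref{prop:orthoplex-left-endpoint-soft-energy-lower-bound}, and here it is simply invoked. The remaining check is that the limiting inequality \eqref{eq:orthoplex-left-endpoint-liminf} is derived correctly from the finite-SNR bound \(8/(1+\varrho)^2\); that derivation belongs to the proposition's proof.
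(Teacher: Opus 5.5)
Your proposal is correct and follows the paper's proof essentially step for step. The converse comes from Proposition~\ref{prop:orthoplex-left-endpoint-soft-energy-lower-bound}, the matching upper bound from the fixed square via Theorem~\ref{thm:orthoplex-terminal-endpoint-coefficient} when \(n=2\) and from the \(k=2\) moving family of Proposition~\ref{prop:orthoplex-range-constructive-upper-bound} when \(n\ge3\), and Corollary~\ref{cor:coefficient-extraction} with \(K=8\), together with substitution of \eqref{eq:orthoplex-range-packing-scale}, finishes the argument; the paper merely records \(V_\gamma(n+2,n)=8+o(1)\) from \(V_\gamma\le\cG_\gamma(\cC_\gamma)\) before invoking the corollary, rather than reading it off afterward.
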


\begin{IEEEproof}
Proposition~\ref{prop:orthoplex-left-endpoint-soft-energy-lower-bound}
gives
\(\liminf_{\gamma\to\infty}V_\gamma(n+2,n)\ge8\).  Let
\(\cC_\gamma\) be the fixed square when \(n=2\) and the family in
\eqref{eq:orthoplex-range-moving-codebook}, specialized to \(k=2\), when
\(n\ge3\).  In the first case,
Theorem~\ref{thm:orthoplex-terminal-endpoint-coefficient} gives
\(\cG_\gamma(\cC_\gamma)=V_\gamma(4,2)=8+4e^{-\gamma/2}\).  In the
second case, \eqref{eq:orthoplex-range-moving-energy-limit} gives
\(\cG_\gamma(\cC_\gamma)\to8\).  Since
\(V_\gamma(n+2,n)\le\cG_\gamma(\cC_\gamma)\), both cases yield the
matching limsup bound.  Hence \(V_\gamma(n+2,n)=8+o(1)\), proving
\eqref{eq:orthoplex-left-endpoint-soft-energy-asymptotic}.
Corollary~\ref{cor:coefficient-extraction}, applied with \(K=8\), then
gives \eqref{eq:orthoplex-left-endpoint-coefficient} and
\eqref{eq:orthoplex-left-endpoint-witness-attainment}.  Substitution of
\eqref{eq:orthoplex-range-packing-scale}, specialized to \(k=2\), gives
\eqref{eq:orthoplex-left-endpoint-awgn-asymptotic}.
\end{IEEEproof}

Theorem~\ref{thm:orthoplex-left-endpoint-coefficient} gives the
strongest proved separation between fixed and SNR-dependent codebook design in
this paper.  The optimal leading coefficient is
\(K^{\ast}_{n+2,n}=8\), independently of \(n\), whereas the best
fixed-codebook leading coefficient is \(K^{\mathrm{fix}}_{n+2,n}=4n\).
For \(n\ge3\), the moving family genuinely attains the optimal leading
coefficient, and its asymptotic ML error is smaller by the factor \(n/2\)
than the best asymptotic ML error of any fixed packing-optimal codebook;
no fixed packing-optimal codebook can attain
\(K^{\ast}_{n+2,n}\).  For \(n=2\), the two endpoints coincide and the fixed
square is already optimal.

\subsection{Optimal Versus Best Fixed-Codebook Leading Coefficients}
\label{subsec:orthoplex-range-fixed-comparison}

Combining Proposition~\ref{prop:orthoplex-range-constructive-upper-bound}
with the fixed benchmark in
\eqref{eq:orthoplex-range-fixed-contact-count} gives, for every fixed
\(n\ge3\) and \(2\leq k<n\),
\begin{equation}
  K^{\ast}_{n+k,n}
  \leq4k(k-1)
  <4n(k-1)
  =K^{\mathrm{fix}}_{n+k,n}.
  \label{eq:orthoplex-range-strict-improvement}
\end{equation}
In particular,
\begin{equation}
  \frac{K^{\ast}_{n+k,n}}
       {K^{\mathrm{fix}}_{n+k,n}}
  \leq\frac{k}{n}<1.
  \label{eq:orthoplex-range-coefficient-ratio-bound}
\end{equation}
Let \(\cC_0^{\mathrm{fix}}\) be any fixed packing-optimal codebook
attaining \(K^{\mathrm{fix}}_{n+k,n}\), and let \(\cC_\gamma\) be the
family in \eqref{eq:orthoplex-range-moving-codebook}.  The fixed-codebook
expansion, \eqref{eq:orthoplex-range-achievable-error}, and
\eqref{eq:optimized-exact-vs-fixed-packing} yield
\begin{equation}
\begin{aligned}
  \frac{\Pe(\cC_\gamma;\gamma)}
       {\Pe(\cC_0^{\mathrm{fix}};\gamma)}
  &\longrightarrow\frac{k}{n},\\
  \frac{\Pe(\cC_0^{\mathrm{fix}};\gamma)}
       {\Pe^{\ast}(n+k,n;\gamma)}
  &\longrightarrow
  \frac{K^{\mathrm{fix}}_{n+k,n}}{K^{\ast}_{n+k,n}}
  \geq\frac{n}{k}.
\end{aligned}
  \label{eq:orthoplex-range-error-comparisons}
\end{equation}
Thus, throughout the nonterminal orthoplex-bound range, the explicit moving
family has a leading coefficient equal to the fraction \(k/n\) of the best
fixed-codebook leading coefficient.  The exact SNR-wise optimum is no larger
than the error of this family.

At \(k=2\),
Theorem~\ref{thm:orthoplex-left-endpoint-coefficient} shows that the
bound is sharp:
\begin{equation}
  \frac{K^{\ast}_{n+2,n}}{K^{\mathrm{fix}}_{n+2,n}}
  =\frac{2}{n},
  \qquad
  \frac{\Pe(\cC_0^{\mathrm{fix}};\gamma)}
       {\Pe^{\ast}(n+2,n;\gamma)}
  \longrightarrow\frac{n}{2}.
  \label{eq:orthoplex-left-endpoint-fixed-moving-comparison}
\end{equation}
For \(3\leq k\leq n-1\), only the one-sided upper bound on
\(K^{\ast}_{n+k,n}\)
has been proved.  At the terminal endpoint,
Theorem~\ref{thm:orthoplex-terminal-endpoint-coefficient} gives
\begin{equation}
  \frac{K^{\ast}_{2n,n}}{K^{\mathrm{fix}}_{2n,n}}=1,
  \label{eq:orthoplex-terminal-endpoint-fixed-comparison}
\end{equation}
so no leading-coefficient gain from an SNR-dependent codebook occurs.
In particular, when \(n=2\), the
left and terminal endpoints coincide and the fixed square is already
leading-order optimal.

For \(3\leq k\leq n-1\), we have not proved a global lower bound
matching \eqref{eq:orthoplex-range-optimized-upper-bound}.  We therefore
cannot exclude a different jointly feasible multiscale deformation with
a still smaller value of the sum in
\eqref{eq:weighted-contact-variational-characterization}.  Nevertheless, the two exact
endpoint results and the structure of the explicit construction suggest
that its achieved value is sharp, motivating the following conjecture.

\begin{conjecture}[Sharpness in the intermediate orthoplex-bound range]
\label{conj:orthoplex-intermediate-sharpness}
For every fixed \(n\ge4\) and \(3\leq k\leq n-1\),
\begin{equation}
  K^{\ast}_{n+k,n}=4k(k-1).
  \label{eq:orthoplex-intermediate-conjecture}
\end{equation}
\end{conjecture}

The conjectured range is empty for \(n\le3\).  If the conjecture holds,
the moving family in \eqref{eq:orthoplex-range-moving-codebook} attains the
optimal leading coefficient throughout the intermediate range.

Table~\ref{tab:coefficient-status} separates the proved exact values,
the proved upper bound, and the conjectured equality.

\begin{table}[!ht]
\caption{Status of the leading coefficients in the simplex and
orthoplex-bound applications.}
\label{tab:coefficient-status}
\centering
\renewcommand{\arraystretch}{1.15}
\small
\begin{tabular}{@{}p{0.22\linewidth}p{0.43\linewidth}p{0.27\linewidth}@{}}
\hline
Range & Optimal leading coefficient & Best fixed-codebook leading coefficient\\
\hline
\(2\le M\le n+1\) & \(M(M-1)\) (exact) & \(M(M-1)\)\\
\(k=2\) & \(8\) (exact) & \(4n\)\\
\(3\le k\le n-1\) &
  \(\le4k(k-1)\); equality conjectured & \(4n(k-1)\)\\
\(k=n\) & \(4n(n-1)\) (exact) & \(4n(n-1)\)\\
\hline
\end{tabular}
\vspace{1mm}

\parbox{\linewidth}{\footnotesize In the intermediate row,
\(4k(k-1)\) is the leading coefficient of the constructed family and
hence an upper bound on \(K^{\ast}_{n+k,n}\); equality is conjectured.}
\end{table}

\section{Conclusion}
\label{sec:conclusion}

Optimal spherical packing fixes the best high-SNR error exponent, but it
need not fix the optimal leading coefficient when the codebook is reoptimized at
each SNR.  The nonterminal orthoplex-bound construction exposes the reason.  It
slightly reduces the separations of selected pairs by an amount negligible
at the leading-coefficient scale, while using the resulting geometric freedom to improve
other pairs that become closest in the limiting packing on a larger,
though still vanishing, scale.  The selected pairs make unit contributions
to the constructed family's leading coefficient, whereas the improved
pairs make vanishing contributions to that coefficient.  The codebooks therefore
approach an optimal packing and retain its exponent, yet outperform every
packing-optimal codebook held fixed.  The improvement results from the two
SNR-dependent convergence rates, not from the limiting packing alone.

For every fixed finite \((M,n)\) with \(M\ge2\) and
\(\rho^{\ast}_{M,n}<1\), we proved
\begin{equation*}
  \Pe^{\ast}(M,n;\gamma)
  =B_\gamma^{\ast}\bigl(K^{\ast}_{M,n}+o(1)\bigr).
\end{equation*}
The best packing-optimal benchmark satisfies
\begin{equation*}
  \Pe^{\mathrm{fix}}(M,n;\gamma)
  =B_\gamma^{\ast}\bigl(K^{\mathrm{fix}}_{M,n}+o(1)\bigr).
\end{equation*}
Moreover, \(K^{\ast}_{M,n}\leq K^{\mathrm{fix}}_{M,n}\).  Thus, whenever
\(K^{\ast}_{M,n}<K^{\mathrm{fix}}_{M,n}\), the exact SNR-wise minimum
error is strictly smaller than the best packing-optimal benchmark for all
sufficiently large \(\gamma\), and their ratio tends to
\(K^{\ast}_{M,n}/K^{\mathrm{fix}}_{M,n}\).  We showed that the minimum of the Gaussian soft-packing
energy converges to \(K^{\ast}_{M,n}\).  We also showed that only pairs that are closest in a limiting packing can
contribute to \(K^{\ast}_{M,n}\) and that their contributions are determined by
the pairs' scaled correlation offsets.

The simplex and orthoplex-bound ranges display both possible outcomes.  For
\(2\leq M\leq n+1\), the regular simplex gives
\(K^{\ast}_{M,n}=K^{\mathrm{fix}}_{M,n}=M(M-1)\), so SNR-dependent motion
provides no leading-order gain.  For \(M=n+k\), \(2\leq k\leq n\), the
best fixed-codebook leading coefficient is \(4n(k-1)\).  When
\(2\leq k<n\), the exact ML error of the explicit moving family has
leading coefficient \(4k(k-1)\), exactly the fraction \(k/n\) of the
best fixed-codebook leading coefficient; hence
\begin{equation*}
  \frac{K^{\ast}_{n+k,n}}{K^{\mathrm{fix}}_{n+k,n}}
  \leq\frac{k}{n}<1.
\end{equation*}
A matching converse proves
\(K^{\ast}_{n+2,n}=8\) at \(k=2\).  At the terminal endpoint \(k=n\), the
fixed cross-polytope attains the optimal leading coefficient and
\(K^{\ast}_{2n,n}=K^{\mathrm{fix}}_{2n,n}=4n(n-1)\).  For
\(3\leq k\leq n-1\), equality in the constructive upper bound remains
conjectural.

\appendices

\section{Uniform Transfer, Optimal Leading-Coefficient Characterization, and
Gaussian Soft-Packing Energy Attainment Proofs}
\label{app:global-optimization-proofs}

This appendix proves the uniform-transfer theorem, the
\(O((n\gamma)^{-1})\) comparison between the normalized optimized exact ML
error and \(V_\gamma(M,n)\), the existence and characterization of the
optimal leading coefficient \(K^{\ast}_{M,n}\), and the Gaussian soft-packing
energy near-minimizer attainment criterion from
Section~\ref{sec:soft-packing-characterization}.
The pointwise fixed-codebook expansion is the standard result cited in
Subsection~\ref{subsec:fixed-codebook-asymptotics} and is not rederived
here.  Instead, the first proof establishes the additional uniformity
needed when the codebook changes with SNR.  It invokes standard
Gaussian-tail and Bonferroni estimates,
while deriving only the codebook-uniform estimates specific to the
present problem.  Set
\begin{align*}
  \delta^{\ast}&=1-\rho^{\ast}_{M,n}>0,\\
  \Delta_e(\cC)&=\rho_e(\cC)-\rho^{\ast}_{M,n},\\
  d_e(\cC)&=1-\rho_e(\cC)=\delta^{\ast}-\Delta_e(\cC).
\end{align*}
For \(e=(m,\ell)\), abbreviate
\(d_{m\ell}:=d_{(m,\ell)}(\cC)\), and write
\begin{equation*}
  w_e(\cC,\gamma)=\exp\{n\gamma\Delta_e(\cC)/4\},
\end{equation*}
so that \(\cG_\gamma(\cC)=\sum_{e\in\cE_M}w_e(\cC,\gamma)\).
Put \(N_M=M(M-1)\).  Hidden constants in \(O_L(\cdot)\) may depend on
the fixed pair \((M,n)\) and on \(L\), but not on \(\cC\) or \(\gamma\).

\subsection{Uniform Transfer on Bounded-Energy Sublevels}

\begin{IEEEproof}[Proof of Theorem~\ref{thm:uniform-soft-packing-transfer}]
Fix \(L\ge1\) and suppose \(\cG_\gamma(\cC)\le L\).  The bounded-energy
packing consequence \eqref{eq:bounded-energy-packing-consequence} gives
\begin{equation*}
  0\le\rho_{\max}(\cC)-\rho^{\ast}_{M,n}
  \le\eta_\gamma,
  \qquad
  \eta_\gamma=\frac{4\log L}{n\gamma}.
\end{equation*}
Hence, for all sufficiently large \(\gamma\),
\begin{equation*}
  \|\bc_m-\bc_\ell\|^2=2d_{m\ell}
  \ge2(\delta^{\ast}-\eta_\gamma)\ge\delta^{\ast},
\end{equation*}
so the codewords are distinct.  If \(M\ge3\), consider the compact set of triples of unit
vectors with all mutual distances at least
\(\sqrt{\delta^{\ast}}\).  On this set, the inner product of the two
unit difference directions from one point is strictly below one:
equality would place the other two points on the same ray, but that ray
meets the unit sphere in only one further point.  Let \(r\) denote the
maximum over this compact set; the preceding argument shows that
\(r<1\).  Put
\(\overline\varrho=\max\{0,r\}\).  Then, uniformly over this bounded-energy
sublevel,
\begin{equation*}
  \bu_{m\ell}^{\T}\bu_{mk}\le\overline\varrho,
  \qquad
  \bu_{m\ell}=\frac{\bc_\ell-\bc_m}
  {\|\bc_\ell-\bc_m\|},
\end{equation*}
for every transmitted index \(m\) and distinct competitors
\(\ell,k\ne m\).  For \(M=2\), this assertion is vacuous.

Define the full ordered-pair tail sum
\begin{equation*}
  \cQ_\gamma(\cC)
  =\frac1M\sum_{e\in\cE_M}
  \Qfun\left(\sqrt{\frac{n\gamma d_e(\cC)}2}\right).
\end{equation*}
The preceding correlation bound gives
\(d_e(\cC)\in[\delta^{\ast}/2,2]\) for all sufficiently large
\(\gamma\).  The classical two-sided Mills inequalities
\cite{Gordon1941} imply, uniformly over all ordered pairs,
\begin{equation*}
\begin{aligned}
  &\frac{M^{-1}\Qfun(\sqrt{n\gamma d_e(\cC)/2})}
  {B_\gamma^{\ast}}\\
  &\quad=w_e(\cC,\gamma)
  \left(\frac{\delta^{\ast}}{d_e(\cC)}\right)^{1/2}\\
  &\qquad\times\left[1+O((n\gamma)^{-1})\right].
\end{aligned}
\end{equation*}
On the stated interval,
\(\lvert(\delta^{\ast}/d_e)^{1/2}-1\rvert
\le C|\Delta_e|\).  For nonnegative offsets,
\begin{equation*}
  \sum_{\Delta_e\ge0}w_e|\Delta_e|
  \le\eta_\gamma\cG_\gamma(\cC)
  =O_L((n\gamma)^{-1}),
\end{equation*}
whereas, for \(\Delta_e=-x<0\),
\(x\exp\{-n\gamma x/4\}\le4/(\mathrm e n\gamma)\).
Together with \(\sum_e w_e\le L\), these estimates give
\begin{equation}
  \frac{\cQ_\gamma(\cC)}{B_\gamma^{\ast}}
  =\cG_\gamma(\cC)+O_L((n\gamma)^{-1}).
  \label{eq:appendix-uniform-pair-sum}
\end{equation}

It remains to replace the sum of pairwise tails by the exact union.  The
union bound and the second Bonferroni inequality
\cite[Ch.~1]{GalambosSimonelli1996} give
\begin{equation*}
\begin{split}
  0&\le \cQ_\gamma(\cC)-\Pe(\cC;\gamma)\\
  &\le\frac1M\sum_{m=1}^M
  \sum_{\substack{\ell<k\\ \ell,k\ne m}}
  \Pp(\cA_{m\ell}\cap \cA_{mk}).
\end{split}
\end{equation*}
Each event has the form
\(\cA_{m\ell}=\{\bZ^{\T}\bu_{m\ell}\ge x_{m\ell}\}\), where
\begin{equation*}
  x_{m\ell}=\sqrt{\frac{n\gamma d_{m\ell}}2}
  \ge x_{\min}
  :=\sqrt{\frac{n\gamma(\delta^{\ast}-\eta_\gamma)}2}.
\end{equation*}
If the two normals are antipodal, the corresponding events are
disjoint.  Otherwise their intersection implies
\(\bZ^{\T}(\bu_{m\ell}+\bu_{mk})\ge2x_{\min}\), and the standard scalar
Gaussian tail bound \cite{Proakis2001} gives
\begin{equation*}
  \Pp(\cA_{m\ell}\cap \cA_{mk})
  \le\exp\left\{-\frac{n\gamma(\delta^{\ast}-\eta_\gamma)}
  {2(1+\overline\varrho)}\right\}.
\end{equation*}
Because \(\overline\varrho<1\), this exponent exceeds the exponent
\(n\gamma\delta^{\ast}/4\) in \(B_\gamma^{\ast}\) by a fixed positive
multiple of \(n\gamma\) for all sufficiently large \(\gamma\).  The
number of intersections is fixed, so, for some \(\chi_L>0\),
\begin{equation*}
  \frac{\cQ_\gamma(\cC)-\Pe(\cC;\gamma)}{B_\gamma^{\ast}}
  =O_L(\exp\{-\chi_L n\gamma\}).
\end{equation*}
Combining this estimate with
\eqref{eq:appendix-uniform-pair-sum} proves the asymptotic form of
\eqref{eq:uniform-soft-packing-transfer}.  Enlarging its constant gives
the asserted bound for every \(\gamma\ge\gamma_L\).
\end{IEEEproof}

\subsection{Packing-Scale Error Implies Bounded Gaussian Soft-Packing Energy}

\begin{lemma}[Error-to-energy localization]
\label{lem:error-to-energy-localization}
If an SNR-indexed family \(\{\cC_\gamma\}\subset\cM\) satisfies
\begin{equation*}
  \frac{\Pe(\cC_\gamma;\gamma)}{B_\gamma^{\ast}}=O(1),
\end{equation*}
then its codewords are pairwise distinct for all sufficiently large
\(\gamma\), and
\begin{equation}
  \rho_{\max}(\cC_\gamma)-\rho^{\ast}_{M,n}
  =O((n\gamma)^{-1}),
  \qquad
  \cG_\gamma(\cC_\gamma)=O(1).
  \label{eq:error-to-energy-localization}
\end{equation}
\end{lemma}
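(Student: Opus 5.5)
The plan is to get a lower bound on the exact ML error in terms of the single worst pair. Since each pairwise event is contained in the error event, $\Pe\ge \frac1M\Pp(\cA_{m\ell})$ for every ordered pair, we obtain
\[
  \Pe(\cC;\gamma)\ge\frac1M\,\Qfun\!\left(\sqrt{\frac{n\gamma\,[1-\rho_{\max}(\cC)]}{2}}\right),
\]
and this holds for every codebook, including ones with coincident codewords. If two codewords coincide, the right-hand side equals $1/(2M)$. Since $B_\gamma^{\ast}\to0$, this contradicts $\Pe/B_\gamma^{\ast}=O(1)$ for large $\gamma$, so the codewords are eventually pairwise distinct. The same bound then controls $\delta_\gamma:=\rho_{\max}(\cC_\gamma)-\rho^{\ast}_{M,n}\ge0$.

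Next we compare the right-hand side with $B_\gamma^{\ast}$. Put $d=1-\rho^{\ast}_{M,n}-\delta_\gamma\in(0,2]$ and use the lower Mills inequality $\Qfun(x)\ge \frac{x}{1+x^2}\phi(x)$. For $x=\sqrt{n\gamma d/2}$ this gives, up to constants depending only on $(M,n)$ and $\delta^{\ast}$,
\[
  \frac{\Pe(\cC_\gamma;\gamma)}{B_\gamma^{\ast}}
  \;\gtrsim\; \exp\{n\gamma\delta_\gamma/4\}\cdot g(n\gamma d),
\]
with $g(t)=\sqrt{t}/(1+t/2)$ times $\sqrt{n\gamma}$-type factors. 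The crude route is to split cases.

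In the first case, $d$ stays bounded below, say $d\ge\delta^{\ast}/2$. The prefactor ratio is then bounded below by a positive constant, so $\exp\{n\gamma\delta_\gamma/4\}=O(1)$, i.e.\ $\delta_\gamma=O((n\gamma)^{-1})$.

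In the second case, $d<\delta^{\ast}/2$, so $\delta_\gamma>\delta^{\ast}/2$. Here the factor $\exp\{n\gamma\delta^{\ast}/8\}$ overwhelms any polynomial loss. When $n\gamma d$ is small, $\Qfun(x)$ is bounded below by a constant anyway and the ratio blows up. Either way this case is eventually excluded.

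The main technical point is exactly this uniformity in $d$ near zero, where the Mills lower bound degenerates. The fix is to treat $x\le1$ separately using $\Qfun(x)\ge\Qfun(1)$.

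Finally, having $0\le\rho_{\max}(\cC_\gamma)-\rho^{\ast}_{M,n}\le A/(n\gamma)$, we bound the energy termwise. Each of the $M(M-1)$ terms of $\cG_\gamma$ has exponent at most $n\gamma\delta_\gamma/4\le A/4$, so
\[
  \cG_\gamma(\cC_\gamma)\le M(M-1)\,e^{A/4}=O(1).
\]
This proves \eqref{eq:error-to-energy-localization}.
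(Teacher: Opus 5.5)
Your route is the paper's: bound $\Pe$ below by the tail of a worst pair, compare it with $B_\gamma^{\ast}$ to force $n\gamma\,(\rho_{\max}(\cC_\gamma)-\rho^{\ast}_{M,n})=O(1)$, and then bound the $M(M-1)$ energy terms one at a time. The step that fails as written is the coincident-codeword case, which is exactly what the lemma's first conclusion concerns.

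The inclusion $\cA_{m\ell}\subseteq\{\widehat W\neq m\}$ holds only up to the tie set $\{\by:\by^\T\bc_\ell=\by^\T\bc_m\}$. That set is null when $\bc_\ell\neq\bc_m$. When $\bc_\ell=\bc_m$, however, it is all of $\R^n$, so $\Pp(\cA_{m\ell})=1$ rather than $\Qfun(0)=1/2$. Meanwhile, a rule that breaks every $m$--$\ell$ tie in favor of $m$ makes $\ell$ irrelevant to $\Pp\{\widehat W\neq m\mid W=m\}$; for $M=2$ this conditional error is then $0$. So the containment argument does not support your claim that the lower bound ``holds for every codebook, including ones with coincident codewords.''

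The missing idea is the paper's pairing argument. If $\bc_m=\bc_\ell$, messages $m$ and $\ell$ induce the same law of $\bY$, and at each $\by$ the decoder outputs at most one of them. Their conditional probabilities of correct decoding therefore sum to at most one, so $\Pe\ge 1/M$ for any tie-breaking rule. Since $B_\gamma^{\ast}\to0$, this excludes coincidences for large $\gamma$.

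With that repair, the rest of your argument is correct. Your split into $d\ge\delta^{\ast}/2$ and $d<\delta^{\ast}/2$, with $x\le1$ handled by $\Qfun(x)\ge\Qfun(1)$, is valid but avoidable. The paper instead uses the global bound $\Qfun(x)\ge c\,e^{-x^2/2}/(1+x)$ for $x\ge0$. Its prefactor $\sqrt{n\gamma}/(1+\sqrt{n\gamma d/2})$ is decreasing in $d$, so $d\le\delta^{\ast}$ gives $\Pe(\cC_\gamma;\gamma)/B_\gamma^{\ast}\ge c''\exp\{n\gamma\delta_\gamma/4\}$ in one line. Small $d$ never causes a degeneracy there; it only strengthens the bound.
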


\begin{IEEEproof}
If two codewords coincide, then for every received vector a decoder can
correctly distinguish at most one of their two equiprobable message
labels; equivalently, the two corresponding conditional probabilities
of correct decoding sum to at most one, independently of the tie-breaking
rule.  Those two messages therefore contribute at least \(1/M\) to the
average error.  Since \(B_\gamma^{\ast}\to0\), the assumed packing-scale
error makes coincident codewords impossible for all sufficiently large
\(\gamma\).  Put
\(\xi_\gamma=\rho_{\max}(\cC_\gamma)-\rho^{\ast}_{M,n}\ge0\) and
\(d_\gamma=\delta^{\ast}-\xi_\gamma\).  A pair attaining the maximal
correlation gives
\begin{equation*}
  \Pe(\cC_\gamma;\gamma)
  \ge\frac1M\Qfun\left(\sqrt{\frac{n\gamma d_\gamma}{2}}\right).
\end{equation*}
The global lower Gaussian-tail estimate
\begin{equation*}
  \Qfun(x)\ge c\frac{\exp\{-x^2/2\}}{1+x},
  \qquad x\ge0,
\end{equation*}
follows, for example, by combining monotonicity on \(0\le x\le1\)
with the classical Mills inequalities on \(x\ge1\)
\cite{Gordon1941}.  Since \(0\le d_\gamma\le\delta^{\ast}\), it gives
\begin{equation*}
\begin{aligned}
  \frac{\Pe(\cC_\gamma;\gamma)}{B_\gamma^{\ast}}
  &\ge c'\frac{\sqrt{n\gamma}}
  {1+\sqrt{n\gamma d_\gamma/2}}\\
  &\qquad\times\exp\{n\gamma\xi_\gamma/4\}\\
  &\ge c''\exp\{n\gamma\xi_\gamma/4\}.
\end{aligned}
\end{equation*}
Bounded normalized error therefore implies
\(n\gamma\xi_\gamma=O(1)\).  Since every offset
\(\Delta_e(\cC_\gamma)\le\xi_\gamma\),
\begin{equation*}
  \cG_\gamma(\cC_\gamma)
  \le N_M\exp\{n\gamma\xi_\gamma/4\}=O(1),
\end{equation*}
which proves \eqref{eq:error-to-energy-localization}.
\end{IEEEproof}

\subsection{Equivalence of the Optimized Values}

\begin{IEEEproof}[Proof of Corollary~\ref{cor:optimized-soft-packing-transfer}]
Let \(\cC_\gamma^{\mathsf G}\) minimize \(\cG_\gamma\).  Evaluation at
any fixed packing-optimal codebook gives \(V_\gamma(M,n)\le N_M\), so
Theorem~\ref{thm:uniform-soft-packing-transfer} yields
\begin{equation*}
  \frac{\Pe^{\ast}(M,n;\gamma)}{B_\gamma^{\ast}}
  \le\frac{\Pe(\cC_\gamma^{\mathsf G};\gamma)}{B_\gamma^{\ast}}
  =V_\gamma(M,n)+O((n\gamma)^{-1}).
\end{equation*}

For the reverse inequality, choose an exact ML minimizer
\(\cC_\gamma^{\ast}\) as in
Subsection~\ref{subsec:exact-design-problem}.  Comparison with any fixed
\(\cC_0\in\cP^{\ast}\), followed by
\eqref{eq:fixed-packing-normalized-limit}, gives
\begin{equation*}
  \frac{\Pe(\cC_\gamma^{\ast};\gamma)}{B_\gamma^{\ast}}
  \le\frac{\Pe(\cC_0;\gamma)}{B_\gamma^{\ast}}=O(1).
\end{equation*}
Lemma~\ref{lem:error-to-energy-localization} shows that
\(\cG_\gamma(\cC_\gamma^{\ast})\) is bounded.  Uniform transfer can
therefore be applied with one fixed sublevel, giving
\begin{equation*}
\begin{aligned}
  \frac{\Pe^{\ast}(M,n;\gamma)}{B_\gamma^{\ast}}
  &=\cG_\gamma(\cC_\gamma^{\ast})+O((n\gamma)^{-1})\\
  &\ge V_\gamma(M,n)-O((n\gamma)^{-1}).
\end{aligned}
\end{equation*}
The two bounds prove \eqref{eq:optimized-soft-packing-transfer}.
\end{IEEEproof}

\subsection{Existence and Characterization of the Optimal Leading Coefficient}

\begin{IEEEproof}[Proof of Theorem~\ref{thm:global-coefficient}]
The compact codebook space \(\cM\) is semialgebraic, and
\((\gamma,\cC)\mapsto\cG_\gamma(\cC)\) is definable, with real
parameters, in the real exponential field \(\R_{\exp}\).  Closure under
quantification over \(\cM\) therefore makes the value function
\(V_\gamma(M,n)\) definable.  O-minimality of \(\R_{\exp}\)
\cite{Wilkie1996} and the one-variable monotonicity theorem
\cite[Ch.~3, Sec.~1]{VanDenDries1998} imply that this function is monotone
or constant on some interval \((\gamma_0,\infty)\).

For every codebook, \(\rho_{\max}(\cC)\ge\rho^{\ast}_{M,n}\).  If the
unordered pair \(\{m,\ell\}\) attains \(\rho_{\max}(\cC)\), then both
ordered terms \((m,\ell)\) and \((\ell,m)\) in \(\cG_\gamma(\cC)\) are
at least one.  Thus \(V_\gamma(M,n)\ge2\).  Conversely, evaluating the
energy at a packing-optimal codebook gives
\(V_\gamma(M,n)\le N_M\), because every packing gap is nonpositive.
Eventual monotonicity and these uniform bounds prove existence and
finiteness of
\(\lim_{\gamma\to\infty}V_\gamma(M,n)\), together with
\eqref{eq:Kstar-elementary-bounds}.

For a fixed \(\cC_0\in\cP^{\ast}\),
equation~\eqref{eq:soft-packing-fixed-limit} and
\(V_\gamma\le\cG_\gamma(\cC_0)\) give
\eqref{eq:Kstar-fixed-packing-upper}.  Finally,
Corollary~\ref{cor:optimized-soft-packing-transfer} gives
\begin{equation*}
  \frac{\Pe^{\ast}(M,n;\gamma)}{B_\gamma^{\ast}}
  =V_\gamma(M,n)+O((n\gamma)^{-1})
  =K^{\ast}_{M,n}+o(1),
\end{equation*}
which identifies the second limit in
\eqref{eq:Kstar-soft-packing-limit} and proves \eqref{eq:global-main}.
\end{IEEEproof}

\subsection{Attainment by Gaussian Soft-Packing Energy Near-Minimization}

\begin{IEEEproof}[Proof of Corollary~\ref{cor:soft-energy-minimizers-attain}]
Equations~\eqref{eq:soft-energy-near-minimizer-gap} and
\eqref{eq:Kstar-soft-packing-limit} give
\(\cG_\gamma(\widehat{\cC}_\gamma)\to K^{\ast}_{M,n}\); the bound
\(V_\gamma(M,n)\le N_M\) shows that this family has uniformly bounded
Gaussian soft-packing energy.  Uniform
transfer then gives
\(\Pe(\widehat{\cC}_\gamma;\gamma)/B_\gamma^{\ast}\to
K^{\ast}_{M,n}\).  Dividing by the optimized normalized-error limit in
Theorem~\ref{thm:global-coefficient} is legitimate because
\(K^{\ast}_{M,n}\ge2\), and proves
\eqref{eq:soft-energy-near-minimizer-relative-attainment}.
\end{IEEEproof}

\section{Packing-Tail Localization and Closest-Pair Contribution
Characterization}
\label{app:local-reduction-proofs}

This appendix proves
Lemma~\ref{lem:coefficient-attaining-localization},
Theorem~\ref{thm:packing-contact-reduction}, and
Proposition~\ref{prop:weighted-contact-variational-characterization}
from Section~\ref{sec:geometric-reduction}.
Appendix~\ref{app:global-optimization-proofs}
already supplies error-to-energy localization, uniform transfer, and
convergence of the globally minimized Gaussian soft-packing energy.  The first two proofs
use compactness and separation of finitely many pair correlations; the
final proof combines the limiting closest-pair reduction with the global
soft-packing energy limit.

At a coarser level, the known Riesz-energy fact that cluster points of
fixed-cardinality minimizers are best packings is analogous to the first
localization step \cite[Prop.~2]{BondarenkoHardinSaff2014}.  It does not
provide the \(1/(n\gamma)\) localization rate, the scaled closest-pair
contribution formula, or the exact ML conclusions proved here.

\begin{IEEEproof}[Proof of Lemma~\ref{lem:coefficient-attaining-localization}]
Condition~\eqref{eq:packing-tail-competitive-family} and
Lemma~\ref{lem:error-to-energy-localization} give
\eqref{eq:attaining-family-packing-gap} and
\(\cG_\gamma(\cC_\gamma)=O(1)\), and hence
\(\rho_{\max}(\cC_\gamma)\to\rho^{\ast}_{M,n}\).  If the distance to
\(\cP^{\ast}\) did not tend to zero, compactness would give a
convergent subsequence with fixed positive separation from
\(\cP^{\ast}\), whereas continuity of \(\rho_{\max}\) would place its
limit in \(\cP^{\ast}\).  This proves
\eqref{eq:attaining-family-Pstar-localization}.

Under the additional hypotheses, membership
\(\cC_{\gamma_j}\in\cP^{\ast}\) along an unbounded subsequence would
give
\begin{equation*}
  \cG_{\gamma_j}(\cC_{\gamma_j})
  \ge |\cI_{\max}(\cC_{\gamma_j})|
  \ge K^{\mathrm{fix}}_{M,n}.
\end{equation*}
The first inequality follows directly from
\eqref{eq:global-soft-packing-energy}, because every ordered closest-pair
index contributes one to \(\cG_{\gamma_j}(\cC_{\gamma_j})\); the second is the fixed closest-pair-count characterization
\eqref{eq:fixed-packing-coefficient}.
Uniform transfer would then give
\begin{equation*}
  \frac{\Pe(\cC_{\gamma_j};\gamma_j)}{B_{\gamma_j}^{\ast}}
  \ge K^{\mathrm{fix}}_{M,n}-O((n\gamma_j)^{-1}),
\end{equation*}
while attainment of the optimal leading coefficient makes this ratio converge to
\(K^{\ast}_{M,n}<K^{\mathrm{fix}}_{M,n}\), a contradiction.
This proves \eqref{eq:attaining-family-eventually-outside-Pstar}.
\end{IEEEproof}

\begin{IEEEproof}[Proof of Theorem~\ref{thm:packing-contact-reduction}]
Finiteness and \(\cC_\gamma\to\cC_\infty\) give \(a>0\) such that,
for all sufficiently large \(\gamma\), every
\(e\notin\cI_{\max}(\cC_\infty)\) satisfies
\(\rho_e(\cC_\gamma)-\rho^{\ast}_{M,n}\le-a\) (vacuously if there is
no such index).  Since \(\rho_{\max}(\cC_\gamma)\ge\rho^{\ast}_{M,n}\),
the largest correlation must therefore be attained within
\(\cI_{\max}(\cC_\infty)\), and
\begin{equation*}
  \max_{e\in\cI_{\max}(\cC_\infty)}\Delta_{e,\gamma}
  =\rho_{\max}(\cC_\gamma)-\rho^{\ast}_{M,n}\ge0.
\end{equation*}
Equation~\eqref{eq:bounded-energy-packing-consequence} gives the stated
upper bound.  Separating the indices in
\(\cI_{\max}(\cC_\infty)\) from the remaining pairs gives
\begin{equation*}
\begin{split}
  \cG_\gamma(\cC_\gamma)
  &={}
  \sum_{e\in\cI_{\max}(\cC_\infty)}
  \exp\{n\gamma\Delta_{e,\gamma}/4\}\\
  &\quad+O(\exp\{-a n\gamma/4\}).
\end{split}
\end{equation*}
The remainder is identically zero if every ordered pair belongs to
\(\cI_{\max}(\cC_\infty)\).  Thus
\eqref{eq:packing-contact-energy-reduction} holds for some \(c>0\).
Uniform transfer and absorption of the exponential remainder into
\(O((n\gamma)^{-1})\) then prove
\eqref{eq:packing-contact-error-reduction}.
\end{IEEEproof}

\subsection{Global Variational Characterization by Scaled Closest-Pair Offsets}

\begin{IEEEproof}[Proof of
Proposition~\ref{prop:weighted-contact-variational-characterization}]
Fix \(\cC_0\in\cP^{\ast}\),
\(\bm\lambda\in\mathfrak L(\cC_0)\), and a witnessing sequence from
\eqref{eq:admissible-contact-limit-set}.  Theorem~\ref{thm:packing-contact-reduction},
specifically \eqref{eq:packing-contact-energy-reduction}, gives
\begin{equation*}
  \cG_{\gamma_j}(\cC_j)
  \longrightarrow
  \sum_{e\in\cI_{\max}(\cC_0)}\exp\{\lambda_e/4\}.
\end{equation*}
Since \(V_{\gamma_j}(M,n)\le\cG_{\gamma_j}(\cC_j)\) and
Theorem~\ref{thm:global-coefficient} gives
\(V_{\gamma_j}(M,n)\to K^{\ast}_{M,n}\), the sum in
\eqref{eq:weighted-contact-variational-characterization} is at least
\(K^{\ast}_{M,n}\) for every admissible pair.

For the reverse inequality, start with \(\widetilde\gamma_j=j\) and
select an exact soft-packing energy minimizer
\(\widetilde\cC_j^{\mathsf G}\) at each \(\widetilde\gamma_j\).  Its
energy is
\begin{equation*}
  \cG_{\widetilde\gamma_j}(\widetilde\cC_j^{\mathsf G})
  =V_{\widetilde\gamma_j}(M,n)\le M(M-1),
\end{equation*}
so compactness of \(\cM\) gives indices \(j_r\to\infty\) for which
\(\widetilde\cC_{j_r}^{\mathsf G}\to\cC_0\).  Relabel this subsequence by
\(\gamma_r=\widetilde\gamma_{j_r}=j_r\) and
\(\cC_r^{\mathsf G}=\widetilde\cC_{j_r}^{\mathsf G}\).
Equation~\eqref{eq:bounded-energy-packing-consequence}
and continuity of \(\rho_{\max}\) imply
\(\cC_0\in\cP^{\ast}\).  Corollary~\ref{cor:subsequential-weighted-contact}
then gives a further subsequence and
\(\bm\lambda\in\mathfrak L(\cC_0)\) such that
\begin{equation*}
  \sum_{e\in\cI_{\max}(\cC_0)}\exp\{\lambda_e/4\}
  =\lim_{r\to\infty}V_{\gamma_r}(M,n)
  =K^{\ast}_{M,n}.
\end{equation*}
This proves both equality and attainment in
\eqref{eq:weighted-contact-variational-characterization}.  Finally, for
any sequence witnessing a minimizing pair,
\(\cG_{\gamma_j}(\cC_j)\to K^{\ast}_{M,n}\), and the Gaussian
soft-packing energy is uniformly bounded along that sequence.
Theorem~\ref{thm:uniform-soft-packing-transfer}
then gives the limit of the normalized ML error.  Dividing it by the
corresponding limit of
\(\Pe^{\ast}(M,n;\gamma)/B_\gamma^{\ast}\) from
Theorem~\ref{thm:global-coefficient}, which is positive because
\(K^{\ast}_{M,n}\ge2\), gives the relative-error limit.
\end{IEEEproof}

\section{Proofs for the Simplex and Orthoplex-Bound Applications}
\label{app:simplex-orthoplex-application-proofs}

This appendix supplies the derivations specific to the simplex and
orthoplex-bound applications.  The simplex and terminal-endpoint arguments
use the universal-optimality result in
\cite[Thm.~1.2 and Table~1]{CohnKumar2007}, while the orthoplex-bound
fixed-codebook argument uses the equality-case classification in
\cite[Thm.~3]{Kuperberg2007}; an equivalent characterization is given in
\cite[Thm.~2.2]{KingMixonParshallWells2026}.
Only the normalization and closest-pair-count consequences needed here are
derived.  The multiscale moving-family and signed-balance proofs are
paper-specific and are retained in full.

\subsection{Simplex Gaussian Soft-Packing Energy Minimum}

\begin{IEEEproof}[Proof of Proposition~\ref{prop:simplex-energy-minimum}]
Put \(t=n\gamma/4\) and \(f_t(s)=\exp\{-ts/2\}\).  Since
\begin{equation*}
  \cG_\gamma(\cC)
  =\exp\left\{\frac{tM}{M-1}\right\}
  \sum_{m\ne\ell}f_t(\|\bc_m-\bc_\ell\|^2),
\end{equation*}
and \(f_t\) is strictly completely monotone, the universal-optimality
result in
\cite[Thm.~1.2, Table~1, and the simplex special case]{CohnKumar2007}
makes the embedded regular simplex the unique minimizer among distinct
\(M\)-point configurations.
For \(n\ge2\), continuity extends the same lower bound to the full
labeled tuple space, including coincident codewords.

For completeness, the equality case on this enlarged space follows
from the centroid identity and strict Jensen convexity:
\begin{equation*}
  \sum_{m\ne\ell}\bc_m^{\mathsf T}\bc_\ell
  =\left\|\sum_{m=1}^M\bc_m\right\|^2-M\ge-M.
\end{equation*}
Thus, with \(N_M=M(M-1)\),
\begin{equation*}
  \cG_\gamma(\cC)
  \ge
  N_M\exp\left\{t\left(
  \frac{1}{N_M}\sum_{m\ne\ell}\bc_m^{\mathsf T}\bc_\ell
  +\frac{1}{M-1}\right)\right\}
  \ge N_M.
\end{equation*}
Attaining the value \(M(M-1)\) forces zero centroid and equal
off-diagonal correlations, hence correlation \(-1/(M-1)\) for every
ordered pair.  Thus the minimizers are exactly the embedded regular
simplices, up to orthogonal transformation and relabeling.  The case
\(n=1,M=2\) follows directly by comparing the antipodal and coincident
labeled pairs.
\end{IEEEproof}

\subsection{Best Fixed-Codebook Leading Coefficient in the Orthoplex-Bound Range}

\begin{IEEEproof}[Proof of Proposition~\ref{prop:orthoplex-range-fixed-contact}]
Fix \(2\leq k\leq n\), put \(M=n+k\), and let
\(\cC_0=\{\bc_1,\ldots,\bc_M\}\in\cP^{\ast}\).  Since
\(\rho^{\ast}_{M,n}=0\), all off-diagonal correlations are
nonpositive.  Form the graph on \([M]\) in which \(m\) and \(\ell\)
are adjacent when \(\bc_m^{\mathsf T}\bc_\ell<0\).  Vectors belonging
to different connected components are orthogonal.

For a connected component of size \(s\), let \(r\) be the rank of its
Gram matrix \(\bG\).  The matrix \(\bA=\bI_s-\bG\) is nonnegative and
irreducible.  Since \(\bG\succeq0\), the largest eigenvalue of \(\bA\) is
at most one.  If the component is linearly dependent, then one is an
eigenvalue of \(\bA\), and the Perron--Frobenius theorem shows that it is
simple.  Thus every component has nullity either zero or one.  This is
the connected-component form of the equality-case classification in
\cite[Thm.~3]{Kuperberg2007}; see also
\cite[Thm.~2.2]{KingMixonParshallWells2026}.

Let \(\ell\) be the number of dependent components.  Since the full
Gram matrix has rank at most \(n\),
\[
  \ell\geq M-n=k.
\]
Write the ranks of the dependent components as
\(r_1,\ldots,r_\ell\geq1\), so that their sizes are
\(r_1+1,\ldots,r_\ell+1\).  Let \(r_0\geq0\) be the total size of the
remaining, full-rank components.  Then
\begin{equation*}
  r_0+\sum_{i=1}^{\ell}(r_i+1)=M.
\end{equation*}
If \(N_-\) denotes the number of ordered pairs having strictly
negative correlation, then
\begin{equation*}
  N_-
  \leq r_0(r_0-1)+\sum_{i=1}^{\ell}r_i(r_i+1).
\end{equation*}
Indeed, negative-correlation pairs can occur only within connected
components, and the right-hand side bounds all ordered pairs within
those components.

Merging \(r_0\) into any one of the \(r_i\)'s can only increase the
right-hand side, because
\begin{equation*}
  (r_0+r_i)(r_0+r_i+1)
  -r_0(r_0-1)-r_i(r_i+1)
  =2r_0(r_i+1)\geq0.
\end{equation*}
For \(\ell\) positive integers with sum \(n+k-\ell\), convexity then
shows that the sum of \(r(r+1)\) is maximized when one of them equals
\(n+k-2\ell+1\) and the remaining \(\ell-1\) equal one.  The resulting
upper bound
\begin{equation*}
  (n+k-2\ell+1)(n+k-2\ell+2)+2(\ell-1)
\end{equation*}
is decreasing over the feasible integers \(\ell\geq k\).  Therefore,
\begin{equation*}
  N_-
  \leq(n-k+1)(n-k+2)+2(k-1).
\end{equation*}
Every off-diagonal correlation is either negative or zero, and the
ordered zero-correlation pairs are precisely the ordered closest-pair
indices.  Hence
\begin{align*}
  |\cI_{\max}(\cC_0)|
  &=M(M-1)-N_-\\
  &\geq(n+k)(n+k-1)
  -(n-k+1)(n-k+2)-2(k-1)\\
  &=4n(k-1).
\end{align*}

Conversely, take a regular \((n-k+1)\)-simplex and \(k-1\) antipodal
pairs in mutually orthogonal subspaces.  All correlations within each
block are strictly negative, all cross-block correlations are zero,
and the ordered zero-correlation count is exactly \(4n(k-1)\).  The
codebook is packing-optimal, so minimizing as in
\eqref{eq:fixed-packing-coefficient} proves
\eqref{eq:orthoplex-range-fixed-contact-count}.
\end{IEEEproof}

\subsection{Multiscale Construction in the Nonterminal Orthoplex-Bound Range}

\begin{IEEEproof}[Proof of
Proposition~\ref{prop:orthoplex-range-constructive-upper-bound}]
Fix \(n\ge3\) and \(2\le k<n\), and put \(d=n-k\).  The Gram matrix
specified by \eqref{eq:orthoplex-range-auxiliary-gram} is
\[
  (1+q_\gamma)\bI-q_\gamma\bm 1\bm 1^{\mathsf T}.
\]
Its eigenvalues are \(1+q_\gamma\), with multiplicity \(d-1\), and
\(1-(d-1)q_\gamma\), with multiplicity one.  It is therefore positive
definite for all sufficiently large \(\gamma\).  Taking its
positive-definite square root in a fixed basis of \(\mathcal U\)
produces the continuous realization specified in the body, and
\(\ba_1,\ldots,\ba_d\) converge to that basis.

For each \(j\),
\[
  \ba_j^{\mathsf T}\sum_{i=1}^d\ba_i
  =1-(d-1)q_\gamma.
\]
The definition of \(\bv_\gamma\) therefore gives
\[
  \ba_j^{\mathsf T}\bv_\gamma=-q_\gamma,
  \qquad
  \|\bv_\gamma\|^2
  =\frac{dq_\gamma^2}{1-(d-1)q_\gamma}
  =p_\gamma.
\]
Because \(\bv_\gamma\in\mathcal U\) and
\(\bm f_i\in\mathcal U^\perp\),
\[
  \|\bv_\gamma\pm r_\gamma\bm f_i\|^2
  =p_\gamma+r_\gamma^2=1.
\]
For sufficiently large \(\gamma\), \(p_\gamma<1/2\), so the points in
\eqref{eq:orthoplex-range-moving-codebook} are distinct and form a
spherical codebook of size \(2k+d=n+k\).  Since
\(\bv_\gamma\to\bzero\), \(r_\gamma\to1\), and the auxiliary vectors
converge to an orthonormal basis of \(\mathcal U\), this codebook
converges to the stated deleted cross-polytope.

Write a shifted point as
\(\bx_{i,s}=\bv_\gamma+s r_\gamma\bm f_i\), where
\(s\in\{-1,1\}\).  Its pair correlations are
\[
\begin{array}{@{}lcc@{}}
\text{pair class} & \text{correlation} & \text{ordered count}\\
\hline
\text{shifted points on different axes}
  & p_\gamma & 4k(k-1)\\
\text{opposite points on one shifted axis}
  & 2p_\gamma-1 & 2k\\
\text{pairs involving auxiliary points}
  & -q_\gamma & d(d+4k-1).
\end{array}
\]
Indeed,
\(\bx_{i,s}^{\mathsf T}\bx_{j,t}=p_\gamma\) for \(i\ne j\),
\(\bx_{i,1}^{\mathsf T}\bx_{i,-1}=2p_\gamma-1\), and every
off-diagonal pair containing an auxiliary point has correlation
\(-q_\gamma\).  The last count is the sum of \(d(d-1)\) ordered
auxiliary--auxiliary pairs and \(4kd\) ordered
auxiliary--shifted pairs.  The counts satisfy
\begin{equation}
  4k(k-1)+2k+d(d+4k-1)
  =(n+k)(n+k-1).
  \label{eq:orthoplex-range-pair-count-check}
\end{equation}

For every sufficiently large finite \(\gamma\),
\(p_\gamma>0>-q_\gamma\) and \(2p_\gamma-1<0\).  Hence
\(\rho_{\max}(\cC_\gamma)=p_\gamma\), attained by the
\(4k(k-1)\) ordered cross-axis pairs.  Since
\eqref{eq:orthoplex-range-packing-correlation} gives
\(\rho^{\ast}_{n+k,n}=0\), summing the three pair classes proves
\eqref{eq:orthoplex-range-moving-energy}.

The definition of \(q_\gamma\) gives
\(n\gamma q_\gamma=\log(n\gamma)\), while
\[
  n\gamma p_\gamma
  =
  \frac{d[\log(n\gamma)]^2}
  {n\gamma[1-(d-1)q_\gamma]}
  =
  O\left(\frac{[\log(n\gamma)]^2}{n\gamma}\right).
\]
This proves \eqref{eq:orthoplex-range-moving-scales}.  In
\eqref{eq:orthoplex-range-moving-energy}, the first term therefore
tends to \(4k(k-1)\), while the second and third terms vanish.  Thus
\eqref{eq:orthoplex-range-moving-energy-limit} holds.

Finally,
\(V_\gamma(n+k,n)\le\cG_\gamma(\cC_\gamma)\).  Taking limits and using
Theorem~\ref{thm:global-coefficient} proves
\eqref{eq:orthoplex-range-optimized-upper-bound}.  The constructed
family has uniformly bounded Gaussian soft-packing energy, so
Theorem~\ref{thm:uniform-soft-packing-transfer} and
\eqref{eq:orthoplex-range-moving-energy-limit} give
\eqref{eq:orthoplex-range-achievable-error}.
\end{IEEEproof}

\subsection{Terminal Orthoplex Endpoint}

\begin{IEEEproof}[Proof of
Theorem~\ref{thm:orthoplex-terminal-endpoint-coefficient}]
Put \(t=n\gamma/4\).  Since \(\rho^{\ast}_{2n,n}=0\),
\begin{equation*}
  \cG_\gamma(\cC)
  =\sum_{m\ne\ell}\exp\{t\bc_m^{\mathsf T}\bc_\ell\}.
\end{equation*}
With \(f_t(s)=\exp\{-ts/2\}\), one has
\(\exp\{t\bx^{\mathsf T}\by\}=e^t f_t(\|\bx-\by\|^2)\), and
\(f_t\) is strictly completely monotone.  Universal optimality of the
full cross-polytope therefore makes it a global soft-packing energy minimizer
\cite[Thm.~1.2 and Table~1]{CohnKumar2007}; continuity extends the
lower bound to the labeled tuple space allowing coincidences.

The cross-polytope has \(4n(n-1)\) ordered orthogonal pairs and \(2n\)
ordered antipodal pairs.  Their respective contributions to
\(\cG_\gamma(\cC_{\mathrm{cross}})\) are \(1\) and \(e^{-t}\),
respectively, proving
\eqref{eq:orthoplex-terminal-endpoint-soft-energy}.
Theorem~\ref{thm:global-coefficient} and
Proposition~\ref{prop:orthoplex-range-fixed-contact}, specialized to
\(k=n\), give \eqref{eq:orthoplex-terminal-endpoint-coefficient}.
Corollary~\ref{cor:optimized-soft-packing-transfer} together with
\eqref{eq:orthoplex-range-packing-scale}, again specialized to \(k=n\),
gives \eqref{eq:orthoplex-terminal-endpoint-awgn-asymptotic}.

Finally, the fixed cross-polytope is an exact soft-packing energy minimizer at
every SNR.  Corollary~\ref{cor:soft-energy-minimizers-attain}, applied
with zero objective gap, gives
\begin{equation*}
  \frac{\Pe(\cC_{\mathrm{cross}};\gamma)}
  {\Pe^{\ast}(2n,n;\gamma)}\longrightarrow1,
\end{equation*}
which proves relative-error optimality of this particular fixed
codebook.
\end{IEEEproof}

\subsection{Signed-Balance Converse at the Left Endpoint}

\begin{IEEEproof}[Proof of
Proposition~\ref{prop:orthoplex-left-endpoint-soft-energy-lower-bound}]
By \eqref{eq:orthoplex-range-packing-correlation}, specialized to
\(k=2\), the normalization in
\(\cG_\gamma\) has \(\rho^{\ast}_{n+2,n}=0\).  Put
\(t=n\gamma/4\) and \(\varrho=\rho_{\max}(\cC)\).  If two codewords
coincide, then \(\varrho=1\), and their two ordered terms give
\[
  \cG_\gamma(\cC)\ge2\exp\{t\}
  \ge2=\frac{8}{(1+\varrho)^2}.
\]
We may therefore assume that the codewords are distinct.

By Radon's theorem \cite{Radon1921}, after discarding zero weights there
are disjoint nonempty index sets \(\mathcal J_+\) and
\(\mathcal J_-\), positive probability weights
\((\alpha_i)_{i\in\mathcal J_+}\) and
\((\beta_j)_{j\in\mathcal J_-}\), and \(\bm\mu\in\R^n\) such that
\[
  \sum_{i\in\mathcal J_+}\alpha_i\bc_i
  =\sum_{j\in\mathcal J_-}\beta_j\bc_j
  =\bm\mu.
\]
For every \(i\in\mathcal J_+\), the \(\mathcal J_+\) representation and
\(\bc_i^{\mathsf T}\bc_h\ge-1\) give
\begin{equation*}
  \bc_i^{\mathsf T}\bm\mu
  =\alpha_i+
  \sum_{\substack{h\in\mathcal J_+\\h\ne i}}
  \alpha_h\bc_i^{\mathsf T}\bc_h
  \ge\alpha_i-(1-\alpha_i)
  =2\alpha_i-1.
\end{equation*}
The \(\mathcal J_-\) representation instead gives
\begin{equation*}
  \bc_i^{\mathsf T}\bm\mu
  =\sum_{j\in\mathcal J_-}\beta_j
  \bc_i^{\mathsf T}\bc_j
  \le\varrho.
\end{equation*}
Interchanging the two sets gives the analogous bound for each
\(\beta_j\).  Hence
\[
  \|\bm\alpha\|_\infty,\ \|\bm\beta\|_\infty
  \le\frac{1+\varrho}{2}.
\]

Since \(\sum_{i,j}\alpha_i\beta_j=1\), weighted AM--GM applied to
\begin{equation*}
  \frac{\exp\{t\bc_i^{\mathsf T}\bc_j\}}
  {\alpha_i\beta_j}
  \quad\text{with weight }\alpha_i\beta_j
\end{equation*}
yields
\begin{align*}
  S
  &:={}
  \sum_{i\in\mathcal J_+}\sum_{j\in\mathcal J_-}
  \exp\{t\bc_i^\T\bc_j\}\\
  &\ge
  \frac{\exp\{t\|\bm\mu\|^2\}}
  {\displaystyle
   \prod_{i\in\mathcal J_+}\alpha_i^{\alpha_i}
   \prod_{j\in\mathcal J_-}\beta_j^{\beta_j}}\\
  &\ge
  \frac{1}{\|\bm\alpha\|_\infty\|\bm\beta\|_\infty}
  \ge\frac{4}{(1+\varrho)^2}.
\end{align*}
Here \(\exp\{t\|\bm\mu\|^2\}\ge1\).  Moreover,
\(\alpha_i\le\|\bm\alpha\|_\infty\) and
\(\sum_i\alpha_i=1\) give
\begin{equation*}
  \prod_i\alpha_i^{\alpha_i}
  \le\|\bm\alpha\|_\infty,
\end{equation*}
and the analogous inequality holds for \(\bm\beta\).
The full energy contains both ordered orientations of every cross pair,
so \(\cG_\gamma(\cC)\ge2S\), proving
\eqref{eq:orthoplex-left-endpoint-finite-energy-lower-bound}.

For the asymptotic consequence, let \(\cC_\gamma^{\mathsf G}\) minimize
\(\cG_\gamma\) and write
\(\varrho_\gamma=\rho_{\max}(\cC_\gamma^{\mathsf G})\).  Evaluation at
any packing-optimal codebook gives
\(V_\gamma(n+2,n)\le(n+2)(n+1)\).  Hence
\eqref{eq:bounded-energy-packing-consequence}, with
\(\rho^{\ast}_{n+2,n}=0\), gives
\(\varrho_\gamma=O((n\gamma)^{-1})\).  Applying the finite bound to
\(\cC_\gamma^{\mathsf G}\) proves
\eqref{eq:orthoplex-left-endpoint-liminf}.
\end{IEEEproof}

\bibliographystyle{IEEEtran}
\bibliography{bib_spherical_codes_leading_coeff_final}

\end{document}